\documentclass{article}

\usepackage[affil-it]{authblk}
\usepackage[usenames,dvipsnames]{xcolor}
\usepackage{amsfonts}
\usepackage{amsmath,amsthm,amssymb,dsfont}
\usepackage{enumerate}
\usepackage{graphicx}	
\usepackage{subcaption}
\usepackage[margin=3cm]{geometry}
\usepackage{url}
\usepackage{todonotes}
\usepackage{bbm}

\usepackage{tikz}

\usepackage{pifont}
\usepackage{multirow}
\usepackage{makecell}

\usepackage{epsfig}
\usetikzlibrary{shapes.symbols,patterns} % for source symbols
\usetikzlibrary{decorations.pathreplacing,angles,quotes}
\usepackage{pgfplots}
\pgfplotsset{compat=1.10}
\usepgfplotslibrary{fillbetween}

\definecolor{linkblue}{HTML}{001487}
\usepackage{hyperref}
\hypersetup{colorlinks=true,citecolor=linkblue,linkcolor=linkblue,filecolor=linkblue,urlcolor=linkblue,breaklinks=true}

\usepackage{nicefrac}
\usepackage{mathtools}
\usepackage[nameinlink,capitalize,noabbrev]{cleveref}

\usepackage{algorithm}
\usepackage{algorithmic}

\usepackage{mdframed}
\usepackage{aligned-overset}
\theoremstyle{plain}
\newtheorem{theorem}{Theorem}[section]
\newtheorem{theorem-nocolor}{Theorem}[section]
\newtheorem{lemma}[theorem]{Lemma}

\newtheorem{claim}[theorem]{Claim}
\newtheorem{fact}[theorem]{Fact}
\newtheorem{corollary}[theorem]{Corollary}
\newtheorem{proposition}[theorem]{Proposition}

\theoremstyle{definition}
\newtheorem{definition}[theorem]{Definition}
\newtheorem{remark}[theorem]{Remark}
\newtheorem{example}[theorem]{Example}
\newtheorem{question}[theorem]{Question}

\usepackage{tcolorbox}
\tcolorboxenvironment{definition}{}
\tcolorboxenvironment{theorem}{colback=pink!25!white,colframe=pink!100!black}
\tcolorboxenvironment{theorem-nocolor}{colback=white,colframe=lightgray}
\tcolorboxenvironment{lemma}{colback=yellow!5!white,colframe=yellow!75!black}
\tcolorboxenvironment{corollary}{colback=Dandelion!5!white,colframe=Dandelion!75!black}
\tcolorboxenvironment{proposition}{colback=Emerald!5!white,colframe=Emerald!75!black}
\tcolorboxenvironment{claim}{colback=RoyalBlue!5!white,colframe=RoyalBlue!75!black}
\tcolorboxenvironment{conjecture}{colback=red!5!white,colframe=red!75!black}
\tcolorboxenvironment{example}{colback=white,colframe=lightgray}
\tcolorboxenvironment{remark}{colback=white,colframe=lightgray}
\tcolorboxenvironment{question}{colback=SkyBlue!5!white,colframe=lightgray!85!SkyBlue}
\tcolorboxenvironment{fact}{colback=Purple!5!white,colframe=Purple!75!black}
\DeclareRobustCommand{\abbrevcrefs}{%
\Crefname{theorem}{Thm.}{Thms.}%
\Crefname{corollary}{Cor.}{Cors.}%
\Crefname{lemma}{Lem.}{Lems.}%
\Crefname{proposition}{Prop.}{Props.}%
\Crefname{equation}{Eq.}{Eqs.}%
\Crefname{example}{Ex.}{Exs.}%
}

\DeclareRobustCommand{\Cshref}[1]{{\abbrevcrefs\Cref{#1}}}

\newcommand*{\PP}{\mathbb{P}}
\newcommand*{\E}{\mathbb{E}}
\newcommand*{\ee}{\mathrm{e}}

\newcommand*{\cB}{\mathcal{B}}

\newcommand*{\cF}{\mathcal{F}}

\newcommand*{\cH}{\mathcal{H}}

\newcommand*{\cM}{\mathcal{M}}
\newcommand*{\cP}{\mathcal{P}}

\newcommand*{\cS}{\mathcal{S}}
\newcommand*{\cT}{\mathcal{T}}

\newcommand*{\cV}{\mathcal{V}}
\newcommand*{\cW}{\mathcal{W}}
\newcommand*{\cX}{\mathcal{X}}

\newcommand*{\N}{\mathbb{N}}

\newcommand*{\R}{\mathbb{R}}

\newcommand*{\C}{\mathbb{C}}

\newcommand*{\LOCC}{\mathrm{LOCC}}

\newcommand*{\St}{\mathrm{S}}

\newcommand*{\eps}{\varepsilon}

\newcommand*{\id}{\mathrm{id}}
\newcommand*{\poly}{\mathrm{poly}}

\newcommand*{\supp}{\mathrm{supp}}
\newcommand*{\tr}{\mathrm{tr}}
\newcommand*{\ket}[1]{| #1 \rangle}
\newcommand*{\bra}[1]{\langle #1 |}

\newcommand{\proj}[1]{|#1\rangle\!\langle #1|}
\newcommand*{\braket}[1]{\langle #1 \rangle}

\newcommand*{\conv}{\mathrm{conv}}

\newcommand*{\di}{\mathrm{d}} % integration d

\newcommand{\norm}[1]{\left\lVert#1\right\rVert}

\usepackage{stmaryrd}
\newcommand{\contradiction}{\lightning}

\usepackage{float}

 \allowdisplaybreaks

\title{Almost-iid information theory}

 \author{\normalsize Giulia Mazzola$^{1}$, David Sutter$^{2}$, and Renato Renner$^{1}$}
  \affil{\small $^{1}$Institute for Theoretical Physics, ETH Zurich\\
  $^{2}$IBM Research Europe -- Zurich
 }
 \date{}

\begin{document}

\maketitle

\begin{abstract}
Information-theoretic techniques are based on the assumption that resources are well characterized by independent and identically distributed (iid) states. This assumption cannot be justified operationally, since, for example, correlations between subsequent systems emitted by a source cannot be detected by any practical tomographic protocol. Operationally motivated symmetry assumptions still imply, via de Finetti theorems, that the resources are described by almost-iid states. This raises the question: Are almost-iid resources as effective as perfect iid resources for information-processing tasks? Here we address this question and prove that the conditional entropy of almost-iid states asymptotically coincides with that of iid states. As an application, this implies that squashed entanglement is robust for almost-iid states, asymptotically matching its value on iid states.

% Traditional information-theoretic techniques have been developed for independent and identically distributed (iid) states. However, from an operational perspective it is difficult to justify the iid assumption. Recent results such as de Finetti theorems indicate that a weaker almost-iid structure can emerge from well-justified operational symmetry assumptions. Almost-iid states, in which most subsystems are iid, are indistinguishable from iid states by any practical protocol. This raises the question: Are almost-iid resources as effective as perfect iid resources for information-processing tasks? 
% Here, we address this question and prove that the conditional entropy of almost-iid states asymptotically coincides with that of iid states.
% This result further shows that the squashed entanglement is robust for almost-iid states, asymptotically coinciding with its value on iid states.
\end{abstract}

%%%%%%%%%%%%%%%%%%%%%%%%%%%%%%%%%%%%%%%%%%%%%%%%%%%%%%%%%%%%%%%%%%%%%%%%%%%%%%%%%%%%
\section{Introduction}
A common assumption in physics is that the same experiment can be repeated many times independently. More precisely, one often assumes that the outcomes $X_1,\ldots,X_{n+k}$ of running the experiment $n+k$ times are described by independent and identically distributed (iid) random variables. In technical terms, this means that their joint distribution \smash{$P_{X_1^{n+k}}$} is factorized, i.e.,~\smash{$P_{X_1^{n+k}} =  (P_{X})^{n+k}$}.
The Italian mathematician Bruno de Finetti cautioned that the ``most common and misleading error" in probability theory is treating the iid assumption as fundamental~\cite{dF70}. He provided a mathematically convincing argument to operationally justify the iid assumption, relating it to the permutation invariance of the outcomes~\cite{deFinetti37,monari1993introduction}.
For many realistic systems, permutation invariance follows from natural assumptions such as the indistinguishability of subsystems (see~\cite{renner07} for a more detailed discussion).\footnote{Alternatively, permutation invariance can also be enforced by randomly permuting the subsystems.}

Suppose that we have a source that generates random variables $X_1^{n+k}$, where we assume that the underlying joint distribution \smash{$P_{X_1^{n+k}}$} is permutation-invariant and that each random variable takes values in the set $\cX$.
When selecting $n$ of these $n+k$ random variables, i.e.~ignoring $k$ random variables, de Finetti's theorem~\cite{diaconis_freedman80} shows that there exists a probability measure $\mu$ on the set of distributions on $\cX$ such that
\begin{align} \label{eq_deFinetti_classical}
\norm{P_{X_1^n} - \int  (Q_X)^{n}  \mu(\di  Q) }_1 \leq \frac{2 d n}{n+k} \,,
\end{align}
where $\norm{\cdot}_1$ denotes the $\ell^1$-norm and $d = |\cX|$.
This result justifies that when ignoring $k$ random variables the remaining ones are approximately a convex combination of iid random variables. 

De Finetti theorems have been generalized to the quantum case. In~\cite{dFFuchs02} a quantum de Finetti theorem has been obtained as an implication of the classical result in the case where $n = \infty$.
This yields a result that is applicable under the assumption of having an infinite number of samples. Later in~\cite{deFinetti_KR_05,deFinetti_CKMR_07} quantum de Finetti theorems have been derived that work for a finite number of samples --- analogous to~\cref{eq_deFinetti_classical}. For any state $\ket{\Phi^{(n+k)}}$ on $\cH^{\otimes n+k}$ that is symmetric (i.e.~invariant under permutations of the subsystems), there exists a probability measure $\mu$ on the unit sphere $\cB(\cH):=\{\ket{\theta} \in \cH : \norm{\theta} =1\}$ such that
\begin{align} \label{eq_deFinetti_quantum}
\norm{ \tr_{k}[\proj{\Phi^{(n+k)}}] - \int  \proj{\theta}^{\otimes n}   \mu(\di \theta) }_1 \leq \frac{2 d^2 n}{n+k} =: \eps_d(n,k)\,,
\end{align}
where $\norm{\cdot}_1$ is the trace norm, $\tr_k[\cdot]$ denotes the partial trace of $k$ subsystems, and $d = \dim(\cH)$.
It is worth mentioning that the results stated in~\cref{eq_deFinetti_classical,eq_deFinetti_quantum} are essentially tight~\cite{diaconis_freedman80, deFinetti_CKMR_07}.

The main obstacle with these de Finetti results (classical and quantum) is that they require $k$ to be large. Specifically, if we want the error term $\eps_d(n,k)$ to vanish in the limit $n \to \infty$, we need $k$ to be superlinear in $n$, i.e.~$k=\omega(n)$.\footnote{Note that, by definition, $f(n) = \omega(g(n)) \iff g(n) = o(f(n))$. Hence, $k=\omega(n) \iff n=o(k)$, or in other words, $\lim_{n \to \infty} \frac{n}{k} = 0$.}
Therefore, we need to ignore a large fraction of our data. This may be justified in certain scenarios where the number of data is by default huge\footnote{For example, random samples of a coin toss, which in principle can be repeated arbitrarily many times.}, however, it can be prohibitive in other settings.\footnote{For example, in the setting of quantum key distribution.}    
A second, but often less drastic, disadvantage is that, by choosing $k=\poly(n)$, the error term $\eps_d(n,k)$ is vanishing at a slow rate of order $1/\poly(n)$.

The insight of~\cite{renner_phd,renner07} was that we can overcome these limitations when relaxing the iid assumption.
More precisely, the \emph{exponential de Finetti theorem}~\cite{renner_phd,renner07} shows that there exist a probability measure $\nu$ on the unit sphere $\cB(\cH)$ and a family \smash{$\{\ket{\Psi_{r,\theta}^{(n)}}\}_{\theta}$} of almost-iid states in $\ket{\theta}$ with a defect of size $r$ such that
\begin{align} \label{eq_general_dF}
\norm{\tr_{k}[\proj{\Phi^{(n+k)}}] - \int \proj{\Psi_{r,\theta}^{(n)}} \nu(\di \theta)}_1 \leq 3 k^d \exp\Big(-\frac{k(r+1)}{n+k}\Big)=:\eps'_d(n,k) \, .
\end{align}
The precise statement is given in~\cref{thm_exp_deFinetti_quantum}.
Almost-iid states in $\ket{\theta}$ are superpositions of states that are equal to $\ket{\theta}^{\otimes n-r}$ on $n-r$ subsystems and arbitrary on the remaining $r$ subsystems. Note that in each element of the superposition the positions of the defects may be different. Usually, the number of defects is sublinear in $n$, i.e.~$r=o(n)$. The precise definition of almost-iid states is given in~\cref{sec_almost_iid_states}.
\Cref{eq_general_dF} has a drastically better parameter scaling compared to~\cref{eq_deFinetti_quantum}. It allows us to choose $k$ sublinear in $n$, i.e.~$k=o(n)$, and still have an error term that vanishes in the limit $n \to \infty$ --- even at an exponential rate.
To see this, note that, for example, by choosing \smash{$k=n^{\frac{3}{4}}$} and $r=\sqrt{n}$, we obtain \smash{$\eps'_d(n,k)=3\, n^{\frac{3d}{4}} \exp(-\frac{n^{5/4}+n^{3/4}}{n+n^{3/4}})=O(\exp(-n^{\frac{1}{4}}))$}. 

While the exponential de Finetti theorem justifies the importance of almost-iid states, it is natural to ask if these states are also relevant in a purely classical scenario. One may define almost-iid distributions as a convex combination of an iid distribution on $n-r$ subsystems and an arbitrary distribution on the remaining $r$ subsystems, where the position of the defects can be different in each element of the convex combination. For such distributions, we show that no classical equivalent to~\cref{eq_general_dF} exists. In other words, it is not possible to approximate a classical permutation-invariant distribution by a convex combination of almost-iid distributions for $k$ sublinear in $n$. This is made precise in~\cref{sec_exp_deFinetti}. 

The lack of a classical exponential de Finetti theorem shows that almost-iid states are considerably more powerful than classical almost-iid distributions. This is due to the allowed superpositions which can contain long-range entanglement. For that reason, almost-iid states are more complicated to analyze. In particular, it is unclear if certain (robust) applications could behave differently for almost-iid and iid states. 
However, it is expected that they should not behave differently for practical purposes.
Quantum tomography, which is used to infer the characteristics of a system, cannot distinguish between almost-iid and iid states for practically feasible scenarios. We refer to~\cref{prop_statistics} for a mathematically rigorous statement.

Given the operational relevance of almost-iid states (ensured by the exponential de Finetti theorem), it is crucial to understand which applications behave asymptotically equally under almost-iid and iid states. In this work, we introduce the definition of mixed almost-iid states and prove that for conditional entropies, there is no asymptotic difference between almost-iid and iid states. 
For $n\in \N$ and $\rho_{A_1^n B_1^n}$ a mixed almost-iid state in $\sigma_{AB}$ with a defect of size $r=o(n)$, we show that
\begin{align} \label{eq_result}
 \frac{1}{n} H(A_1^n|B_1^n)_{\rho} = H(A|B)_{\sigma} + \frac{o(n)}{n} \, ,
\end{align}
where $H(A|B)_{\sigma}:=H(AB)_{\sigma}-H(B)_{\sigma}$ is the conditional entropy and $H(B)_{\sigma} := -\tr[\sigma_B \log \sigma_B]$ for $\sigma_B=\tr_{A}[\sigma_{AB}]$ is the von Neumann entropy.
The exact result is given in~\cref{thm_entropy_of_almost_iid}. 
%Furthermore, we show in~\cref{prop_no_classical_exp_dF} that no classical exponential de Finetti theorem can exist in the sense explained above.
%In~\cref{sec_entag_meas}, we define the entanglement cost and distillation for almost-iid states and prove various properties of these entanglement measures.
To prove our results, we use information-theoretic tools that have been developed to go beyond the traditionally assumed iid structure~\cite{renner_phd,dr09,marco_book}.  

We conclude in~\cref{sec_entag_meas} with a discussion of whether popular entanglement measures asymptotically coincide for almost-iid and iid states. We prove that this holds for the squashed entanglement (see~\cref{cor_Squashed_is_robust}); however, it remains an open question for other entanglement measures such as the distillable entanglement, the entanglement cost, and the relative entropy of entanglement.  

%%%%%%%%%%%%%%%%%%%%%%%%%%%%%%%%%%%%%%%%%%%%%%%%%%%%%%%%%%%%%%%%%%%%
%%%%%%%%%%%%%%%%%%%%%%%%%%%%%%%%%%%%%%%%%%%%%%%%%%%%%%%%%%%%%%%%%%%%
%%%%%%%%%%%%%%%%%%%%%%%%%%%%%%%%%%%%%%%%%%%%%%%%%%%%%%%%%%%%%%%%%%%%
\section{Almost-iid states} \label{sec_almost_iid_states}
In this section, we formally define almost-iid states and discuss some of their properties.
Historically~\cite{renner_phd}, almost-iid states were introduced as pure and symmetric states with additional structure, as explained in the following. This family of states appears, for example, in exponential de Finetti theorems such as those presented in~\cref{eq_general_dF}. However, we will relax the definition from~\cite{renner_phd} to capture more general families of states, including mixed ones, that have a similar almost-iid structure.

Let $\cS_n$ be the set of permutations on $\{ 1,... ,n\}$ and let $\St(\cH)$ denote the set of density matrices on a Hilbert space $\cH$ whose dimension is denoted by $d$.
The symmetric subspace is given by $\mathrm{Sym}^n(\cH):= \mathrm{span} \{ \ket{\phi}^{\otimes n}: \ket{\phi} \in \cH \}$. For a fixed $\ket{\theta}$, let $\cV(\cH^{\otimes n}, \ket{\theta}^{\otimes m}):=\{\pi(\ket{\theta}^{\otimes m} \otimes \ket{\Omega^{(n -m)}}): \pi \in \cS_n, \ket{\Omega^{(n -m)}} \in \cH^{\otimes n -m} \}$ and 
\begin{align}
\mathrm{Sym}^n(\cH, \ket{\theta}^{\otimes m}):= \mathrm{Sym}^n(\cH) \cap \mathrm{span} \, \cV(\cH^{\otimes n}, \ket{\theta}^{\otimes m}) \, .
\end{align}
Let $n,r \in \N$ such that $r \leq n$ and $\ket{\theta} \in \cH$. In~\cite{renner_phd}, an $\binom{n}{r}$-\emph{almost-iid state in} $\ket{\theta}$ was defined as a pure state $\ket{\Psi^{(n)}} \in \mathrm{Sym}^n(\cH, \ket{\theta}^{\otimes n-r})$.

However, for pure states outside the symmetric subspace or mixed states, this definition needs to be relaxed in order to capture states that intuitively have an almost-iid structure such as those mentioned in~\cref{ex_almost_iid} below. We therefore next introduce a novel definition for mixed almost-iid states.
\begin{definition}[Almost-iid states] \label{def_almost_product_state_mixed}
Let $\cH_A$ be a Hilbert space, $\sigma_A \in \St(\cH_A)$, and $n,r \in \N$ such that $r \leq n$. Then, $\rho_{A_1^n} \in \St(\cH_A^{\otimes n})$ is called a $\binom{n}{r}$-\emph{almost-iid state in} $\sigma_A $ if there exists a purification $\ket{\theta}_{AE}$ of $\sigma_A$ and an extension $\rho_{A_1^n E_1^n}$ of $\rho_{A_1^n}$ such that
\begin{enumerate}[(i)]
\item $\rho_{A_1^n E_1^n}$ is permutation-invariant with respect to $(A_i,E_i) \leftrightarrow (A_j,E_j)$;\label{it_first_def}
\item $\supp(\rho_{A_1^n E_1^n}) \subseteq  \mathrm{span}\,\cV(\cH_{AE}^{\otimes n},\ket{\theta}_{AE}^{\otimes n-r})$. \label{it_second_def}
\end{enumerate}
We then write  $\rho_{A_1^n} \in \St^n(\cH_A, \sigma_A^{\otimes n-r})$.
\end{definition}
\noindent
Here, $\supp(X)$ denotes the support of a linear operator $X$ and is defined as $\supp(X)= \ker(X)^\perp$.
We next discuss a few pedagogical examples of mixed almost-iid states.
\begin{example} \label{ex_almost_iid}
A few simple instances of almost-iid states include:
\begin{enumerate}[(a)]
\item Tensor power states $\rho_{A_1^n}=\sigma_A^{\otimes n}$. These are almost-iid states with defect size $r=0$, i.e.~$\rho_{A_1^n} \in \St^n(\cH_A, \sigma_A^{\otimes n})$.
\item Convex combinations of tensor power states with a small number of defects, i.e.,~states of the form \smash{$\rho_{A_1^n}=\frac{1}{n!} \sum_{\pi \in \cS_n} \pi (\sigma_A^{\otimes n-r} \otimes \omega_{A_1^r}) \pi^\dagger$}, where $\omega_{A_1^r}$ denotes an arbitrary density operator of size $r$ representing the defects. We have $\rho_{A_1^n} \in \St^n(\cH_A, \sigma_A^{\otimes n-r})$. To see this, let $\ket{\theta}_{AE}$ and \smash{$\ket{\omega}_{A_1^r E_1^r}$} be purifications of $\theta_A$ and \smash{$\omega_{A_1^r}$}, respectively. Then the extension
\begin{align}
\rho_{A_1^n E_1^n} = \frac{1}{n!} \sum_{\pi \in \cS_n} \pi \big( \proj{\theta}^{\otimes n -r}_{AE} \otimes \proj{\omega}_{A_1^r E_1^r} \big) \pi^\dagger 
\end{align}
of \smash{$\rho_{A_1^n}$} clearly satisfies property~\eqref{it_first_def} since it is permutation-invariant. It also satisfies the property~\eqref{it_second_def} as it can be written as \smash{$\rho_{A_1^n E_1^n} = \sum_{i,j \in \cT} \beta_{i,j} \ket{\Psi_i} \bra{\Psi_j}$} for $\beta_{i,j} = \frac{1}{n!} \delta_{i,j}$ and  $\ket{\Psi_i} \in \cV(\cH_{AE}^{\otimes n},\ket{\theta}_{AE}^{\otimes n-r})$ for all $i \in \cT$.
\item The antisymmetric Bell state $\rho_{A_1^2} = \proj{\Psi^-}_{A_1^2}$ with $\ket{\Psi^-}_{A_1^2} := \frac{1}{\sqrt{2}}(\ket{0}\ket{1}_{A_1^2} - \ket{1}\ket{0}_{A_1^2})$ is a $\binom{2}{1}$-almost-iid state in $\ket{0}_A$. \label{it_counter_ex_fernando}
\end{enumerate}
\end{example}
We want to emphasize that the class of almost-iid states contains many more families than the ones presented in~\cref{ex_almost_iid}. In particular, it allows for superpositions of product states with a  small number of defects. These superpositions are crucial for an exponential de Finetti theorem as in~\cref{eq_general_dF} to hold. On the other hand, classical almost-iid distributions have a simpler structure as they do not have superpositions and hence are just convex combinations of almost-product distributions. This is discussed in~\cref{sec_exp_deFinetti}.

\begin{remark} \label{rmk_fernando_def}
A more restrictive definition of mixed almost-iid states was introduced in~\cite{brandao_Stein_10} by calling $\rho_{A_1^n} \in \St(\cH_A^{\otimes n})$ an $\binom{n}{r}$-\emph{almost-iid state in} $\sigma_A $ if there exists a purification $\ket{\Psi^{(n)}}_{A_1^n E_1^n}$ such that $\ket{\Psi^{(n)}} \in \mathrm{Sym}^n(\cH_A \otimes \cH_E, \ket{\theta}_{AE}^{\otimes n-r})$ for some purification $\ket{\theta}_{AE}$ of $\sigma_A$. Such states clearly satisfy the assumptions of~\cref{def_almost_product_state_mixed}. However,~\cref{def_almost_product_state_mixed} is strictly broader as it includes states that do not meet the definition given in~\cite{brandao_Stein_10}. One such example is given by~\cref{ex_almost_iid}~\eqref{it_counter_ex_fernando} as explained in~\cref{app_justification_no_fernando}.
\end{remark}

We call $\rho_{A_1^n}$ a $\binom{n}{r}$-\emph{generalized-almost-iid-state in} $\sigma_A$ if it satisfies~\eqref{it_second_def} but not necessarily~\eqref{it_first_def}. We then write  $\rho_{A_1^n} \in \bar \St^n(\cH_A, \sigma_A^{\otimes n-r})$. Trivially, we have $\St^n(\cH_A, \sigma_A^{\otimes n-r}) \subseteq \bar \St^n(\cH_A, \sigma_A^{\otimes n-r})$.
With the following trace-preserving completely positive map
\begin{align}
\mathrm{PERM}: X \mapsto \frac{1}{n!}\sum_{\pi \in \cS_n} \pi X \pi^\dagger \, ,
\end{align}
which symmetrizes the input, it is possible to convert a generalized-almost-iid-state into an almost-iid state in the sense that
\begin{align}
\rho_{A_1^n} \in \bar \St^n(\cH_A, \sigma_A^{\otimes n-r}) \implies \mathrm{PERM}(\rho_{A_1^n}) \in \St^n(\cH_A, \sigma_A^{\otimes n-r}) \, .
\end{align}

\begin{remark}[Properties of almost-iid states]\ \label{rmk_properties_almost_iid_states}
Let $\rho_{A_1^n} \in \St^n(\cH_A,\sigma_A^{\otimes n-r})$ be a mixed almost-iid state according to~\cref{def_almost_product_state_mixed}. Then it satisfies the following properties:
\begin{enumerate}[(a)]
\item For $\emph{any}$ purification $\ket{\theta}_{AE}$ of $\sigma_A$ there exists an extension $\rho_{A_1^n E_1^n}$ of $\rho_{A_1^n}$ that satisfies~\eqref{it_first_def} and~\eqref{it_second_def}. See~\cref{eq_def_almost_iid_mixed_states}. 
%\item In case that $\sigma$ and $\rho$ are pure,~\cref{def_almost_product_state_mixed} reduces to~\cref{def_almost_product_state} since the extension can be chosen to be trivial.
\item \smash{$\rho_{A_1^n}$} is permutation-invariant. 
\item There exists an orthonormal basis $\{\ket{\Psi_t}\}_{t \in \cT}$ of $\mathrm{span}\,\cV(\cH_{AE}^{\otimes n},\ket{\theta}^{\otimes n-r})$ with vectors $\ket{\Psi_t} \in \cV(\cH_{AE}^{\otimes n},\ket{\theta}^{\otimes n-r})$ and with
\begin{align} \label{eq_sizeT}
    |\cT| \leq \binom{n}{r}\,d_{AE}^{\,r} \leq 2^{n h(r/n)}\, d_{AE}^{\,r},
\end{align} 
where $h(x):=-x\log x - (1-x)\log(1-x)$ for $x \in [0,1]$ is the binary entropy function. With respect to this basis, condition~\eqref{it_second_def} can be rewritten  as
\begin{align} \label{eq_almost-iid-form}
\rho_{A_1^n E_1^n} = \sum_{i,j \in \cT} \beta_{i,j} \ket{\Psi_i} \bra{\Psi_j} \, ,
\end{align}
for coefficients $\beta_{i,j} \in \C$ with $\beta_{i,i} \in [0,1]$ and $\sum_{i \in \cT} \beta_{i,i}=1$. See~\cref{eq_ONB_almost_iid_mixed_states}.
\item  $\rho_{A_1^{n+m}} \in \St^{n+m}(\cH_A, \sigma_A^{\otimes n+m-r})$  implies $\rho_{A_1^n} \in \St^n(\cH_A, \sigma_A^{\otimes n-r})$ for any $m,n\in \N$. See~\cref{fact_Giulia_proof}.

%    \item $\ket{\Psi}_{A_1^n E_1^n} \in \mathrm{Sym}(\cH_{AE}^{\otimes n}, \ket{\theta}_{AE}^{\otimes n-r})$ implies $\rho_{A_1^n} \in \St(\cH_A^{\otimes n},\sigma_A^{\otimes n-r})$ for $\sigma_A = \tr_E[\proj{\theta}_{AE}]$ and $\rho_{A_1^n} = \tr_{E_1^n}[\proj{\Psi}_{A_1^n E_1^n}]$.
\item  $\rho_{A_1^n B_1^n} \in \St^n(\cH_{AB},\sigma_{AB}^{\otimes n-r})$ implies $\rho_{A_1^n} \in \St^n(\cH_{A},\sigma_{A}^{\otimes n-r})$. This follows directly from the definition of mixed almost-iid states.
\item For any $\rho_{A_1^n} \in \bar \St^n(\cH_A, \sigma_A^{\otimes n-r})$ we have for any $s \in \N$ that $(\rho_{A_1^n})^{\otimes s} \in \bar \St^{ns}(\cH_A, \sigma_A^{\otimes ns-rs})$. See~\cref{lem_PERM_almost_iid}.
\item  The set of almost-iid states is convex. This follows directly from the definition of mixed almost-iid states.
%\item For $\rho_{A_1^n} \in \St(\cH_A^{\otimes n}, \sigma_A^{\otimes n-r})$ and $\bar \rho_{B_1^n} \in \St(\cH_B^{\otimes n}, {\bar \sigma_B}^{\otimes n- \bar r})$ we have $\rho_{A_1^n} \otimes \bar \rho_{B_1^n} \in \St(\cH_{AB}^{\otimes n}, (\sigma_A \otimes \bar \sigma_B)^{\otimes n- r- \bar r})$.\footnote{A proof is given in~\cref{fact_tensor_product}.}
\end{enumerate}
\end{remark}

% For any $\ket{\Psi^{(n)}} \in \mathrm{Sym}(\cH^{\otimes n}, \ket{\theta}^{\otimes n-r})$ there exists a family $\{\ket{\Psi_t} \}_{t \in \cT}$ of orthonormal vectors from $\cV(\cH^{\otimes n},\ket{\theta}^{\otimes n-r})$ such that
% \begin{align} \label{eq_ONB_dec}
% \ket{\Psi^{(n)}} = \sum_{t \in \cT} \gamma_t \ket{ \Psi_t} \, ,
% \end{align}
% with $\sum_{t \in \cT } |\gamma_t|^2 =1$. By performing a measurement in the basis $\{\ket{\Psi_t} \}_{t \in \cT}$, which is sometimes also called performing a pinching operation~\cite{Sutter_book} $\cP: X \mapsto \sum_{t \in \cT} \proj{\Psi_t} X \proj{\Psi_t}$, we obtain
% \begin{align} \label{eq_classical_almost_iid_state}
% \cP(\proj{\Psi^{(n)}}) = \sum_{t \in \cT} |\gamma_t|^2 \proj{\Psi_t} \, ,
% \end{align}
% which is a convex mixture of almost-product states.
% \DS{Is it clear that any classical almost-iid distribution has this form? Or why can we restrict ourselves to only the measurements as described above? Is it because of the de Finetti?}

In the analysis of almost-iid states, the representation in~\cref{eq_almost-iid-form} is particularly useful because it exploits the product structure of the space $\cV(\cH_{CE}^{\otimes n},\ket{\theta}_{CE}^{\otimes n-r})$. However, it is sometimes preferable to go to a purified description, i.e., to work with a purification of~\cref{eq_almost-iid-form}. The following lemma establishes that the underlying structure of $\cV(\cH_{CE}^{\otimes n},\ket{\theta}_{CE}^{\otimes n-r})$ persists even in this purified formulation.
\begin{lemma}[Purification in a preferred basis] \label{lem_purif_almost_iid}
%Let $\rho_{C_1^n} \in \St(\cH_C^{\otimes n}, \tau_C^{\otimes n-r})$ be a $\binom{n}{r}$-\emph{almost-iid state} in $\tau_C $. 
Let $\ket{\Theta}_{AH} \in \cH_{AH}$ for some composite Hilbert space $\cH_{AH}$, and let $\cV_A \subseteq \cH_A$ be any subset of vectors such that $\,\mathrm{span}\,\cV_A$ admits an orthonormal basis $\big\{ \ket{\Psi_j}_A\big\}_{j \in \cT}$ with vectors $\ket{\Psi_j}_A \in \cV_A \ \  \forall\,j \in \cT$. 
Then, the following are equivalent:
\begin{enumerate}[(i)]
\item $\supp\big(\tr_H\big[\proj{\Theta}_{AH}\big]\big) \,\subseteq\, \mathrm{span}\, \cV_A$ \label{it_1}
\item $\ket{\Theta}_{AH} =\sum\limits_{j\in \cT} \alpha_j \ket{\Psi_j}_A \ket{\tilde{h}_j}_H$ for coefficients $ \alpha_j  \in \C$ and normalized vectors $\ket{\tilde{h}_j}_H \in \cH_H$ for all $j\in \cT$. \label{it_2}
\end{enumerate}
In addition, if $\ket{\Theta}_{AH}$ is normalized and satisfies $(ii)$, we have $1 = \braket{\Theta|\Theta}_{AH} = \sum_{j\in\cT} |\alpha_j|^2 $.
\end{lemma}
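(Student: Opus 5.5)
The plan is to expand $\ket{\Theta}_{AH}$ in a product basis adapted to $\mathrm{span}\,\cV_A$ and read off both directions from the Schmidt/operator form. First, extend the orthonormal basis $\{\ket{\Psi_j}_A\}_{j\in\cT}$ of $\mathrm{span}\,\cV_A$ to an orthonormal basis $\{\ket{\Psi_j}_A\}_{j\in\cT}\cup\{\ket{\Phi_k}_A\}_{k\in\cT'}$ of $\cH_A$. Fix any orthonormal basis $\{\ket{e_l}_H\}$ of $\cH_H$. Then every vector decomposes uniquely as
\begin{align}
\ket{\Theta}_{AH} = \sum_{j\in\cT}\ket{\Psi_j}_A\ket{v_j}_H + \sum_{k\in\cT'}\ket{\Phi_k}_A\ket{w_k}_H ,
\end{align}
where $\ket{v_j}_H := (\bra{\Psi_j}_A\otimes \mathbb{1}_H)\ket{\Theta}_{AH}$ and $\ket{w_k}_H := (\bra{\Phi_k}_A\otimes\mathbb{1}_H)\ket{\Theta}_{AH}$ are unnormalized vectors in $\cH_H$.

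For (ii)$\Rightarrow$(i), we compute $\rho_A:=\tr_H[\proj{\Theta}_{AH}] = \sum_{i,j\in\cT}\alpha_i\bar\alpha_j\braket{\tilde h_j|\tilde h_i}\ket{\Psi_i}\bra{\Psi_j}$. Its range lies in $\mathrm{span}\{\ket{\Psi_j}\}=\mathrm{span}\,\cV_A$. Since $\rho_A$ is Hermitian, $\supp(\rho_A)=\ker(\rho_A)^\perp$ equals its range, so (i) holds.

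For (i)$\Rightarrow$(ii), we note that $\bra{\Phi_k}\rho_A\ket{\Phi_k} = \braket{w_k|w_k}$. Because $\ket{\Phi_k}\perp \mathrm{span}\,\cV_A\supseteq\supp(\rho_A)$, we have $\rho_A\ket{\Phi_k}=0$, hence $\ket{w_k}=0$ for all $k$. We then set $\alpha_j:=\norm{v_j}$ and $\ket{\tilde h_j}:=\ket{v_j}/\norm{v_j}$ when $v_j\neq0$. When $v_j=0$, we take $\alpha_j=0$ and $\ket{\tilde h_j}$ an arbitrary unit vector; this is possible as long as $\cH_H\neq\{0\}$, a degenerate case that is trivial or excluded. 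This gives (ii).

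For the norm statement, orthonormality of the $\ket{\Psi_j}$ gives $\braket{\Theta|\Theta}=\sum_{i,j}\bar\alpha_i\alpha_j\braket{\Psi_i|\Psi_j}\braket{\tilde h_i|\tilde h_j}=\sum_j|\alpha_j|^2\braket{\tilde h_j|\tilde h_j}=\sum_j|\alpha_j|^2$. The only mildly delicate step is the identification of the support with the range (equivalently, positivity giving $\braket{w_k|w_k}=0$ from $\rho_A\ket{\Phi_k}=0$). The rest is bookkeeping; in particular, the $\ket{\tilde h_j}$ need not be orthogonal.
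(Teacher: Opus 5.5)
Your proof is correct. It reaches (i)$\Rightarrow$(ii) by a different decomposition than the paper.

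The paper starts from a Schmidt decomposition $\ket{\Theta}_{AH}=\sum_k\gamma_k\ket{\chi_k}_A\ket{\phi_k}_H$. Since $\tr_H[\proj{\Theta}_{AH}]=\sum_k\gamma_k^2\proj{\chi_k}_A$, every Schmidt vector with $\gamma_k\neq0$ is an eigenvector with nonzero eigenvalue, and a short contradiction argument places it in $\mathrm{span}\,\cV_A$. The paper then expands $\ket{\chi_k}_A=\sum_j c_j^{(k)}\ket{\Psi_j}_A$ and regroups to get $\ket{h_j}_H=\sum_k c_j^{(k)}\gamma_k\ket{\phi_k}_H$.

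You never diagonalize the reduced state. You split $\cH_A=\mathrm{span}\,\cV_A\oplus(\mathrm{span}\,\cV_A)^\perp$ and remove the complementary components using $\norm{w_k}^2=\bra{\Phi_k}\rho_A\ket{\Phi_k}$ together with $\ket{\Phi_k}\in\ker\rho_A$.

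Because $c_j^{(k)}=\braket{\Psi_j|\chi_k}$, the paper's $\ket{h_j}_H$ is exactly your $(\bra{\Psi_j}_A\otimes\id_H)\ket{\Theta}_{AH}$, so both routes produce the same decomposition. Yours gives this closed form directly and shows the vectors $\ket{h_j}_H$ are uniquely determined. The paper's ties the statement to the familiar fact that the support of a reduced state is spanned by the Schmidt vectors with nonzero coefficients.

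Your (ii)$\Rightarrow$(i) direction and the normalization identity match the paper's. Your explicit remark that the support of the Hermitian operator $\rho_A$ equals its range fills in a step the paper leaves implicit.
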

\begin{proof}
\underline{$\eqref{it_1} \implies \eqref{it_2}$}:
By the Schmidt decomposition, there exist orthonormal sets of vectors $\big\{ \ket{\chi_k}_A\big\}_{k} \subset \cH_A $ and $\big\{ \ket{\phi_k}_H\big\}_{k} \subset \cH_H $, and scalars $\gamma_k \geq 0$ such that $\ket{\Theta}_{AH} \,=\, \sum_k \gamma_k \ket{\chi_k}_A \ket{\phi_k}_H$.
Taking the partial trace, we obtain
\begin{align}
\tr_H\big[\proj{\Theta}_{AH}\big] \,=\, \sum_m \sum_{k,l} \gamma_k \gamma_l \ket{\chi_k}\bra{\chi_l}_A \braket{\phi_m|\phi_k}_H\braket{\phi_l|\phi_m}_H \,=\, \sum_k \gamma_k^2 \proj{\chi_k}_A \,.
\end{align}
By assumption~\eqref{it_1},
%and since the set $\big\{ \ket{\chi_k}_A\big\}_{k}$ is orthonormal, 
it follows that $ \ket{\chi_k} \in \mathrm{span}\, \cV_A$ for all $k$ such that $\gamma_k \neq 0$. (If, by contradiction, there were $ \ket{\chi_l} \notin \mathrm{span}\, \cV_A$ for some $l$ with $\gamma_l \neq 0$, then we would have $\tr_H\big[\proj{\Theta}_{AH}\big] \ket{\chi_l} = \gamma_l^2\ket{\chi_l}   \neq 0 \implies \supp(\dots) \not\subseteq  \mathrm{span}\, \cV_A $. $\contradiction$) Therefore, we may write $ \ket{\chi_k}_A = \sum_{j\in \cT} c^{(k)}_j \ket{\Psi_j}_A$ for some coefficients $c^{(k)}_j \in \C \ \forall\, j$ depending on $k$. With that, we find
\begin{align}
\ket{\Theta}_{AH} &= \sum_k \gamma_k \ket{\chi_k}_A \ket{\phi_k}_H  = \sum_k \gamma_k \sum_{j\in \cT} c^{(k)}_j \ket{\Psi_j}_A \ket{\phi_k}_H =  \sum_{j\in \cT}  \ket{\Psi_j}_A \Big( \sum_k  c^{(k)}_j \gamma_k \ket{\phi_k}_H \Big) \\
 &=  \sum_{j\in \cT}  \ket{\Psi_j}_A  \ket{{h}_j}_H = \sum_{j\in \cT}  \alpha_j \ket{\Psi_j}_A  \ket{\tilde{h}_j}_H \,,
\end{align}
where we have introduced coefficients $\alpha_j \geq 0$ to normalize the vector $ \ket{{h}_j}_H :=  \sum_k  c^{(k)}_j \gamma_k \ket{\phi_k}_H $ (with the convention that $\alpha_j = 0$ and \smash{$ \ket{\tilde{h}_j}_H $} are normalized but arbitrary if $ \ket{{h}_j}_H = 0$).

\underline{$\eqref{it_2} \implies \eqref{it_1}$}: For any orthonormal basis $\big\{ \ket{\phi_m}_H\big\}_{k}$ of $\cH_H$, we simply evaluate
\begin{align}
\tr_H\big[\proj{\Theta}_{AH}\big] \,&=\,  \sum_{i,j \in \cT} \alpha_i \alpha^*_j \sum_m \braket{\phi_m|\tilde{h}_i}_H\braket{\tilde{h}_j |\phi_m}_H \,\ket{\Psi_i}\bra{\Psi_j}_A  \, ,
\end{align}
which implies $\supp\big(\tr_H\big[\proj{\Theta}_{AH}\big]\big) \,\subseteq\, \mathrm{span}\, \cV_A$ since $\ket{\Psi_j}_A \in \cV_A \ \  \forall\,j \in \cT$. In addition, if $\ket{\Theta}_{AH}$ is normalized, we find $1 = \braket{\Theta|\Theta}_{AH} = \sum_{j\in\cT} |\alpha_j|^2 $ since the vectors $\big\{ \ket{\Psi_j}_A\big\}_{j \in \cT}$ are orthonormal and the vectors \smash{$\ket{\tilde{h}_j}_H$} are normalized for each $j \in \cT$.
\end{proof}

An important property of almost-iid states is that when tracing out many subsystems, we obtain a state that is close to an iid state. 
\begin{proposition} \label{prop_distance}
Let $n,r,s \in \N$ such that $r,s \leq n$, $\sigma_A \in \St(\cH_A)$, and $\rho_{A_1^n} \in \St^n(\cH_A, \sigma_A^{\otimes n-r})$.
Then
\begin{align} \label{eq_dist_claim}
\norm{ \tr_{n-s}[\rho_{A_1^n}]- \sigma_A^{\otimes s}}_1 
\leq 4 \sqrt{\frac{rs}{n}}\, .
\end{align}
%In particular, if $s=\lfloor (\log n)^{\frac{3}{2}} \rfloor$ and $r=O(\frac{n}{(\log n)^2})$ we have $\lim_{n \to \infty} \eps_n =0$.
\end{proposition}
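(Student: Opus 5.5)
Work with the extension and then discard $E$. By \cref{def_almost_product_state_mixed} and \cref{rmk_properties_almost_iid_states}, there is a purification $\ket{\theta}_{AE}$ of $\sigma_A$ and a permutation-invariant extension $\rho_{A_1^nE_1^n}$ of the form \cref{eq_almost-iid-form}. Because the partial trace over $E_1^s$ and over $n-s$ of the $(A,E)$ pairs is a trace-nonincreasing CPTP operation, monotonicity of the trace distance gives
\begin{align*}
\norm{\tr_{n-s}[\rho_{A_1^n}]-\sigma_A^{\otimes s}}_1 \le \norm{\tr_{n-s}[\rho_{A_1^nE_1^n}]-\proj{\theta}_{AE}^{\otimes s}}_1 .
\end{align*}
It therefore suffices to prove the bound for the pure iid target $\proj{\theta}^{\otimes s}$. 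For a state $\tau$ and a pure state $\ket{\phi}$ we have $\norm{\tau-\proj{\phi}}_1\le 2\sqrt{1-\bra{\phi}\tau\ket{\phi}}$. The task thus reduces to a fidelity bound of the form
\begin{align*}
\tr\big[(\proj{\theta}^{\otimes s}\otimes \mathbb{1}^{\otimes n-s})\rho_{A_1^nE_1^n}\big]\ge 1-\frac{4rs}{n},
\end{align*}
or any bound with a constant that produces the prefactor $4$.

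To get the fidelity bound, let $P=\proj{\theta}$ and $Q=\mathbb{1}-P$ on a single $AE$ system. By permutation invariance, the probability that some position among the first $s$ is not projected onto $\theta$ is at most $\sum_{i\le s}\tr[Q_i\rho]=s\,\tr[Q_1\rho]=\frac{s}{n}\tr[N\rho]$, where $N=\sum_{i=1}^n Q_i$ is the ``number of defects'' operator. This uses the union bound $\mathbb{1}-P^{\otimes s}\le \sum_{i\le s} Q_i$, which holds because these are commuting projectors.

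The remaining step is to show $\tr[N\rho]\le r$. Every vector in $\cV(\cH_{AE}^{\otimes n},\ket{\theta}^{\otimes n-r})$ has $\ket{\theta}$ on at least $n-r$ sites, so it lies in the spectral subspace $\{N\le r\}$. Since $N$ is a sum of commuting projectors, that subspace is a linear subspace, and so it contains the whole span. Condition~\eqref{it_second_def} then gives $\supp\rho\subseteq\{N\le r\}$, and hence $\tr[N\rho]\le r$.

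Combining these, $1-F\le rs/n$, so the trace distance is at most $2\sqrt{rs/n}\le 4\sqrt{rs/n}$. The slack in the constant covers any looser fidelity-to-trace-distance conversion. The only delicate point is this last span argument, namely checking that superpositions of defect configurations cannot increase the expected defect number. This works because $N$ is a linear observable whose eigenspaces are orthogonal, so the bound $N\le r$ on each vector passes to their span.
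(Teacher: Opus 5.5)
Your proof is correct, and it takes a different and leaner route than the paper.

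\textbf{The paper's route.} It first restricts to $\tilde n=\lfloor n/s\rfloor s$ systems via \cref{fact_Giulia_proof}. It then expands the extension in the product orthonormal basis $\{\ket{\Psi_t}\}_{t\in\cT}$ of $\mathrm{span}\,\cV$ and cuts the sites into $\lfloor n/s\rfloor$ blocks of length $s$. Block-defect weights $w_i$ satisfy $\sum_i w_i\le r$, and a pigeonhole step selects a block $j$ with $w_j\le r/\lfloor n/s\rfloor$. Projecting onto the basis vectors that are clean on block $j$, and using the fidelity with a subnormalized operator, data processing and permutation invariance, gives $1-F\le 1-(1-w_j)^2\le 2w_j$.

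\textbf{Your route.} You average over single sites instead of blocks. The union bound $\id-P^{\otimes s}\le\sum_{i\le s}Q_i$ together with permutation invariance turns the infidelity into $\frac{s}{n}\tr[N\rho]$. The inclusion $\mathrm{span}\,\cV\subseteq\{N\le r\}$ then plays the role of the paper's $\sum_i w_i\le r$. To state that inclusion cleanly: each vector $\pi(\ket{\theta}^{\otimes n-r}\otimes\ket{\Omega})$ lies in the range of $\prod_{i\in S}P_i$ with $|S|=n-r$. That projector commutes with $N$, and on its range $N=\sum_{i\notin S}Q_i\le r$. In the paper's basis one has $\tr[N\rho]=\sum_t\beta_{t,t}\,d_t$, where $d_t\le r$ is the number of defective sites of $\ket{\Psi_t}$. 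So your bound is the site-level version of the same averaging.

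\textbf{Comparison.} Your version avoids the floor bookkeeping, the restriction lemma, the explicit basis and the subnormalized-fidelity step. It also gives the sharper bound $2\sqrt{rs/n}$. The paper loses one factor $\sqrt{2}$ in $1-(1-w_j)^2\le 2w_j$ and another in $n/(s\lfloor n/s\rfloor)\le 2$. The paper's formulation has the merit of staying inside the basis representation \cref{eq_almost-iid-form} that it reuses elsewhere.

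One minor correction: the partial trace is trace-preserving, not merely trace-nonincreasing. What you actually use is monotonicity of the trace norm under CPTP maps.
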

\begin{proof}
Let $\tilde n:=\lfloor \frac{n}{s} \rfloor s \leq n$ and define $\rho_{A_1^{\tilde n}}:=\tr_{n-\tilde n}[\rho_{A_1^n}]$.
Let $\ket{\theta}_{AE}$ be a purification of $\sigma_A$.
By assumption $\rho_{A_1^n} \in \St^n(\cH_A, \sigma_A^{\otimes n-r})$ which according to~\cref{rmk_properties_almost_iid_states} implies $\rho_{A_1^{\tilde n}} \in \St^{\tilde n}(\cH_A, \sigma_A^{\otimes \tilde n-r})$.
Hence, due to the definition of this vector space, there exists a family $\{\ket{\Psi_t} \}_{t \in \cT}$ of orthonormal vectors from $\cV(\cH_{AE}^{\otimes \tilde n},\ket{\theta}_{AE}^{\otimes \tilde n-r})$ such that
\begin{align} \label{eq_ONB_dec}
\rho_{A_1^{\tilde n} E_1^{\tilde n}} = \sum_{t,t' \in \cT} \beta_{t,t'} \ket{ \Psi_t} \bra{\Psi_{t'}} \, ,
\end{align}
with $\sum_{t \in \cT } \beta_{t,t} =1$, where $\rho_{A_1^{\tilde n} E_1^{\tilde n}}$ is an extension of $\rho_{A_1^{\tilde n}}$.
Now, we may interpret $\ket{ \Psi_t}$ as a state on the $\lfloor \tfrac{n}{s} \rfloor$ blocks.
Since $\ket{ \Psi_t} \in \cV(\cH_{AE}^{\otimes \tilde n},\ket{\theta}_{AE}^{\otimes \tilde n-r})$, we know that $\ket{ \Psi_t}$ is at least in $\lfloor\tfrac{n}{s}\rfloor -r $ blocks of the form $\ket{\theta}^{\otimes s}$ and in the remaining $r$ blocks arbitrary.
For any $t \in \cT$, let $\cF_t$ denote the set of block indices $i \in \{1,\ldots,\lfloor \frac{n}{s} \rfloor\}$ on which  $\ket{\Psi_t}$ is not of the form $\ket{\theta}^{\otimes s}$ and note that 
\begin{align} \label{eq_size_error}
|\cF_t| \leq r \, .
\end{align}
Then, for any block specified by $i$, the total weight of the vectors $\ket{\Psi_t}$ that deviate from $\ket{\theta}^{\otimes s}$ for that block is given by
\begin{align} \label{eq_wi}
w_i := \sum_{t \in \cT} \beta_{t,t} \delta_{i \in \cF_t} \, .
\end{align}
Summing over all blocks yields
\begin{align} \label{eq_renato_dist}
\sum_{i=1}^{\lfloor \frac{n}{s} \rfloor} w_i 
\overset{\textnormal{\Cshref{eq_wi}}}{=} \sum_{i=1}^{\lfloor \frac{n}{s} \rfloor}\sum_{t \in \cT} \beta_{t,t} \delta_{i \in \cF_t}
= \sum_{t \in \cT} \beta_{t,t} \sum_{i=1}^{\lfloor\frac{n}{s}\rfloor}\delta_{i \in \cF_t}
= \sum_{t \in \cT} \beta_{t,t}  |\cF_t| 
\overset{\textnormal{\Cshref{eq_size_error}}}{\leq} \sum_{t \in \cT} \beta_{t,t}  r 
= r\, ,
\end{align}
where the final step uses $\sum_{t \in \cT } \beta_{t,t} =1$.
\Cref{eq_renato_dist} implies that 
\begin{align}\label{eq_renato_dist2}
w_j \leq \frac{r}{\lfloor \frac{n}{s} \rfloor} \quad \textnormal{for some} \quad j \in \left\{1,\ldots, \left\lfloor \frac{n}{s} \right\rfloor \right \} \, .
\end{align}
For a fixed $j \in \left\{1,\ldots, \left\lfloor \frac{n}{s} \right\rfloor \right \}$, let $\Pi$ be the projector onto the subspace spanned by $\{\ket{ \Psi_t} : j \not \in \cF_t \}$, i.e. 
\begin{align} \label{eq_def_PI}
\Pi = \sum_{t \in \cT \textnormal{ s.t. } j \not \in \cF_t } \proj{\Psi_t}   \qquad \textnormal{and} \qquad \Pi^\perp = \sum_{t \in \cT \textnormal{ s.t. } j \in \cF_t }\proj{\Psi_t} \, .
\end{align}
Note that $\Pi \, \Pi^\perp = 0$ and $\Pi + \Pi^\perp =\id_{\mathrm{span}\, \cV(\cH_{AE}^{\otimes \tilde n},\ket{\theta}_{AE}^{\otimes \tilde n-r})}$.
The operator \smash{$\rho'_{A_1^{\tilde n} E_1^{\tilde n}}:=\Pi \rho_{A_1^{\tilde n} E_1^{ \tilde n}} \Pi^\dagger $} is subnormalized, i.e., 
\begin{align} \label{eq_subnormalized_ds}
\tr[\rho'] = \tr[\Pi \rho \Pi^\dagger] \leq 1 \, ,
\end{align}
which can be seen, for example, via H\"older's inequality~\cite[Proposition~2.5]{Sutter_book}. 
The fidelity between two density operators is defined as \smash{$F(\rho,\sigma):=\norm{\sqrt{\rho}\sqrt{\sigma}}^2_1$}.
Hence, for any $\omega\geq 0$ we have
\begin{align} \label{eq_monotonicity_operator}
F(\omega,\rho') 
= \norm{\sqrt{\omega} \sqrt{\rho'}}^2_1 
=\tr[\rho'] \norm{\sqrt{\omega} \sqrt{\frac{\rho'}{\tr[\rho']}}}^2_1 
\overset{\textnormal{\Cshref{eq_subnormalized_ds}}}{\leq}  \norm{\sqrt{\omega} \sqrt{\frac{\rho'}{\tr[\rho']}}}^2_1 
= F\Big(\omega,\frac{\rho'}{\tr[\rho']}\Big)\, .
\end{align}
Using the permutation invariance of $\rho_{A_1^{\tilde n} E_1^{\tilde n}}$ we obtain
\begin{align} \label{eq_renato_dist3}
F(\rho_{A_1^s E_1^s}, \proj{\theta}^{\otimes s})
\overset{\textnormal{\Cshref{eq_monotonicity_operator}}}{\geq} F(\rho_{A_{(j-1)s+1}^{js} E_{(j-1)s+1}^{js}},\rho'_{A_{(j-1)s+1}^{js} E_{(j-1)s+1}^{js}}) 
\overset{\textnormal{DPI}}{\geq} F(\rho_{A_1^{\tilde n} E_1^{\tilde n}},\rho'_{A_1^{\tilde n} E_1^{\tilde n}}) \, ,
\end{align}
where the final step uses the data-processing inequality for the fidelity~\cite[Lemma B.4]{FR14}. By definition of the fidelity we have
\begin{align}
\sqrt{F(\rho_{A_1^{\tilde n} E_1^{\tilde n}},\rho'_{A_1^{\tilde n} E_1^{\tilde n}})}
=\norm{\sqrt{\rho} \sqrt{\rho'}}_1
=\tr \Big[ \sqrt{\sqrt{\rho} \rho' \sqrt{\rho} } \, \Big]
=\tr \Big[ \sqrt{\sqrt{\rho} \Pi \rho \Pi^\dagger \sqrt{\rho} } \, \Big]
=\tr[\Pi \rho ]\, . \label{eq_renato_dist_before4}
\end{align} 
By definition of $\Pi$ we have
\begin{align}
\tr[\Pi \rho ]
=1-\tr[\Pi^\perp \rho ]
\overset{\textnormal{\Cshref{eq_def_PI}}}{=} 1 - \sum_{t \in \cT, j \in \cF_t} \bra{\Psi_t} \rho \ket{\Psi_t}
\overset{\textnormal{\Cshref{eq_ONB_dec}}}{=} 1- \sum_{t \in \cT, j \in \cF_t} \beta_{t,t}
\overset{\textnormal{\Cshref{eq_wi}}}{=}1 - w_j \, .  \label{eq_renato_dist4}
\end{align}
The Fuchs-van der Graaf inequality~\cite{fuchs99} then gives
\begin{align}
\norm{\rho_{A_1^s E_1^s} - \proj{\theta}^{\otimes s}}_1
&\leq 2 \sqrt{1-F(\rho_{A_1^s E_1^s}, \proj{\theta}^{\otimes s})} \\
\overset{\textnormal{\Cshref{eq_renato_dist3,eq_renato_dist_before4,eq_renato_dist4}}}&{\leq} 2 \sqrt{1-(1 - w_j)^2}  \\
\overset{\textnormal{\Cshref{eq_renato_dist2}}}&{\leq} 2 \sqrt{2} \sqrt{\frac{r}{\lfloor \frac{n}{s} \rfloor}} \, . \label{eq_almostDONE-dist}
\end{align}
Recalling that the trace distance is contractive under the partial trace~\cite[Theorem 8.16]{wolf_notes} implies
\begin{align}
\norm{\rho_{A_1^s} - \sigma_A^{\otimes s}}_1  
\leq  \norm{\rho_{A_1^s E_1^s} - \proj{\theta}^{\otimes s}}_1 
\overset{\textnormal{\Cshref{eq_almostDONE-dist}}}{\leq} 2 \sqrt{2} \sqrt{\frac{r}{\lfloor \frac{n}{s} \rfloor}} \, . \label{eq_almostDONE2-dist}
\end{align}
To conclude the proof of the assertion, note that for $q=\lfloor \frac{n}{s} \rfloor$ we have
\begin{align}
sq = s \left \lfloor \frac{n}{s} \right \rfloor \leq n \leq s (q+1) \, . \label{eq_trivial_step}
\end{align}
Hence
\begin{align}
\frac{\sqrt{r/q}}{\sqrt{rs/n}}  
= \sqrt{\frac{n}{qs}}
\overset{\textnormal{\Cshref{eq_trivial_step}}}{\leq}  \sqrt{\frac{s(q+1)}{qs}}
=\sqrt{1+ \frac{1}{q}}
\leq \sqrt{2} \, , \label{eq_trivial_step2}
\end{align}
where the final step uses $q\geq 1$.
Putting everything together yields
\begin{align}
\norm{\rho_{A_1^s} - \sigma_A^{\otimes s}}_1  
\overset{\textnormal{\Cshref{eq_almostDONE2-dist}}}{\leq} 2 \sqrt{2} \sqrt{\frac{r}{\lfloor \frac{n}{s} \rfloor}}
\overset{\textnormal{\Cshref{eq_trivial_step2}}}{\leq} 4 \sqrt{\frac{rs}{n}} \, .
\end{align}
\end{proof}

Another property concerns the statistics of almost-iid and iid states when performing an iid measurement (as is used in a tomography procedure, for example)~\cite[Theorem 4.5.2]{renner_phd}. Let $\sigma \in \St(\cH)$ and assume that $n$ independent measurements with respect to a POVM $\cM = \{M_x\}_{x\in\cX}$ are performed on $n$ iid copies of $\sigma$, giving the outcomes ${\bf{x}} = (x_1,\dots,x_n)$ with $x_i \in \cX \ \forall\, i$. The outcomes ${\bf{x}}$ can be characterized by a \emph{frequency distribution} (or \emph{type}) $\lambda_{\bf{x}}$, which is defined as the following probability distribution on $\cX$,
\begin{align}
    \lambda_{\bf{x}}(y) := \frac{1}{n}\big|\{i:\,x_i = y\}\big|\,,
\end{align}
for all $y\in\cX$. Intuitively, $\lambda_{\bf{x}}(y)$ corresponds to the relative number of occurrences of $y$ in the sequence of outcomes ${\bf{x}} = (x_1,\dots,x_n)$. For large values of $n$, it follows by the law of large numbers that the frequency distribution $\lambda_{\bf{x}}$ is close to the probability distribution $P_X$ defined by $P_X(y) := \tr(M_y \sigma)$. 
\Cref{prop_statistics} shows that a similar statement holds when performing $n$ independent measurements on a $\binom{n}{r}$-almost-iid state in $\sigma$ for small enough values of $r$.
This proves that almost-iid and iid states cannot be distinguished by any practically feasible measurement. This is discussed in more detail in~\cite{MR_future}.

\begin{proposition}[Statistics of almost-iid states] \label{prop_statistics}
Let $n,r \in \N$ such that $r\leq \tfrac{1}{2}n$, $\sigma \in \St(\cH)$, and $\rho^{(n)} \in \St^{n}(\cH, \sigma^{\otimes n-r})$. Let $\cM = \{M_x\}_{x\in\cX}$ be a POVM on $\cH$, and $P_X(x) = \tr[M_x \sigma]$ for all $x\in \cX$. Then, for any $\eps > 0$, we have
\begin{align}
\underset{{\bf{x}}}{\PP}\big[ \norm{\lambda_{\bf{x}} - P_X }_1 >  f(\eps,r,n)\big] \leq \eps \,,
\end{align}
where the probability is taken over the outcomes ${\bf{x}} = (x_1,\dots,x_n)$ of the product measurement $\cM^{\otimes n}$ applied to $\rho^{(n)}$, and 
\begin{align}
    f(\eps,r,n) = 2\sqrt{\frac{\log{\big(\frac{1}{\eps}\big)}}{n} + \frac{|\cX|}{n}\log\Big( \frac{n}{2}+1\Big) + h\Big(\frac{r}{n}\Big) + \frac{2r}{n}\log(d) } + \frac{2r}{n}\,,
\end{align}
where $h(\cdot)$ is the binary entropy and $ d = \dim{\cH}$.
Furthermore, if $r=o(n)$, then $\lim_{n\to \infty} f(\eps,r,n) = 0$.
\end{proposition}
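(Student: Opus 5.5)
We work in the purified, defect-basis picture of \cref{rmk_properties_almost_iid_states} and combine a classical type-class estimate with a projector argument. Fix a purification $\ket{\theta}_{AE}$ of $\sigma$ and an extension $\rho_{A_1^nE_1^n}=\sum_{i,j\in\cT}\beta_{i,j}\ket{\Psi_i}\bra{\Psi_j}$ as in \cref{eq_almost-iid-form}. Purify further and apply \cref{lem_purif_almost_iid} to write a purification as $\ket{\Theta}=\sum_{t\in\cT}\alpha_t\ket{\Psi_t}\ket{\tilde h_t}$ with $\sum_t|\alpha_t|^2=1$. The measurement $\cM^{\otimes n}$ acts on $A_1^n$ only, so the outcome distribution is $\PP[\mathbf x]=\bra{\Theta}(M_{\mathbf x}\otimes\id)\ket{\Theta}$ with $M_{\mathbf x}=\bigotimes_i M_{x_i}$. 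Let $B$ be the bad set $\{\mathbf x:\norm{\lambda_{\mathbf x}-P_X}_1>f\}$ and $P_B=\sum_{\mathbf x\in B}M_{\mathbf x}$.

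The first step removes the coherences between the $\ket{\Psi_t}$ at a multiplicative cost $|\cT|$. By Cauchy--Schwarz for the positive operator $P_B\otimes\id$,
\begin{align}
\bra{\Theta}P_B\ket{\Theta}\le |\cT|\sum_t|\alpha_t|^2\bra{\Psi_t}P_B\ket{\Psi_t},
\end{align}
since $\norm{\sum_t v_t}^2\le|\cT|\sum_t\norm{v_t}^2$ applied to $v_t=\alpha_t\sqrt{P_B}\ket{\Psi_t}\ket{\tilde h_t}$. By \cref{eq_sizeT}, $|\cT|\le 2^{nh(r/n)}d^{2r}$, where $d_{AE}\le d^2$ if $E\cong A$. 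This prefactor accounts for the terms $h(r/n)$ and $\tfrac{2r}{n}\log d$ inside the square root.

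The second step is the per-branch bound. Each $\ket{\Psi_t}$ equals $\ket{\theta}^{\otimes n-r}$ on a set $S_t$ of $n-r$ positions and something arbitrary on the other $r$. Tracing out the $E$ systems, the outcomes on $S_t$ are iid $P_X$, independent of the rest, and the remaining $r$ outcomes are arbitrary. Since $r$ arbitrary outcomes change the type by at most $2r/n$ in $\ell_1$, the event $\mathbf x\in B$ forces the type $\lambda'$ of the $n-r$ iid outcomes to satisfy $\norm{\lambda'-P_X}_1>f-2r/n$; this is where the additive $\tfrac{2r}{n}$ in $f$ comes from. For iid samples, the method of types (number of types at most $(m+1)^{|\cX|}$, each type class having probability at most $2^{-mD(\lambda'\|P_X)}$) combined with Pinsker, $D\ge\tfrac{1}{2}\norm{\cdot}_1^2$ up to a $\ln 2$ factor absorbed in constants, gives
\begin{align}
\PP\big[\norm{\lambda'-P_X}_1>\delta\big]\le (m+1)^{|\cX|}2^{-m\delta^2/2},\qquad m=n-r\ge n/2.
\end{align}
Here $r\le n/2$ is used to replace $m$ by $n/2$, which yields the $\log(\tfrac n2+1)$ term.

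Finally, multiply the two bounds and require $|\cT|(\tfrac n2+1)^{|\cX|}2^{-n\delta^2/4}\le\eps$, i.e.
\begin{align}
\delta^2\ge 4\Big[\tfrac{\log(1/\eps)}{n}+\tfrac{|\cX|}{n}\log(\tfrac n2+1)+h(\tfrac rn)+\tfrac{2r}{n}\log d\Big].
\end{align}
This matches $\delta=f-2r/n$ exactly. The main obstacle is exactly this constant bookkeeping, together with justifying the Cauchy--Schwarz dephasing step and the dimension of $E$; I would try the stated inequality first. The limit statement is immediate: if $r=o(n)$ then $h(r/n)\to0$, $(r/n)\log d\to0$ and $\log n/n\to0$, so $f(\eps,r,n)\to0$.
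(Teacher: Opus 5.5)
Your proposal is correct and is essentially the paper's proof. The Cauchy--Schwarz step on the purification is just the pinching inequality $\rho_{A_1^nE_1^n}\le|\cT|\sum_t\beta_{t,t}\proj{\Psi_t}$ of \cref{lem_pinching_statement} (with $\beta_{t,t}=|\alpha_t|^2$), and the per-branch type bound on the $n-r$ iid positions, the $2r/n$ triangle inequality, the estimate $|\cT|\le 2^{nh(r/n)}d^{2r}$ and the final choice of threshold all coincide with the paper's.

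One step needs tightening. From $m=n-r\ge n/2$ you cannot conclude $(m+1)^{|\cX|}\le(\tfrac n2+1)^{|\cX|}$; that inequality goes the wrong way. Instead, write the bound as $2^{-m(\delta^2/2-|\cX|\log(m+1)/m)}$ and use that $\log(m+1)/m$ is decreasing in $m$. If $\delta^2/2<\tfrac{2|\cX|}{n}\log(\tfrac n2+1)$, the target bound exceeds $1$ and is trivial. This is what the paper achieves by keeping $\tfrac{\log(n-r+1)}{n-r}$ inside the threshold.
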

The proof is given in~\cref{app_proof_statistics}. 

%%%%%%%%%%%%%%%%%%%%%%%%%%%%%%%%%%%%%%%%%%%%%%%%%%%%%%%%%%%%%%%%%%%%
%%%%%%%%%%%%%%%%%%%%%%%%%%%%%%%%%%%%%%%%%%%%%%%%%%%%%%%%%%%%%%%%%%%%
\section{Quantum vs.~classical exponential de Finetti theorem} \label{sec_exp_deFinetti} 
In this section, we formally state the quantum exponential de Finetti theorem~\cite{renner07} which justifies the definition of almost-iid states. 
We then show that for classical almost-iid distributions, defined as convex combinations of product distributions with a small number of defects, no classical exponential de Finetti theorem can hold.
\begin{theorem-nocolor}[{Exponential de Finetti~\cite[Theorem~1]{renner07}}] \label{thm_exp_deFinetti_quantum}
Let $n,k,r \in \N$ and let $\cH$ be a $d$-dimensional Hilbert space. For any $\ket{\Phi^{(n+k)}} \in \mathrm{Sym}^{n+k}(\cH)$ there exists a probability measure $\nu$ on the unit sphere $\cB(\cH)$ and a family \smash{$\{\ket{\Psi^{(n)}_{r,\theta}}\}_{\theta}$} of states such that \smash{$\ket{\Psi^{(n)}_{r,\theta}} \in \mathrm{Sym}^n(\cH, \ket{\theta}^{\otimes n-r})$} and
\begin{align}
\norm{\tr_{k}[\proj{\Phi^{(n+k)}}] - \int \proj{\Psi^{(n)}_{r,\theta}}  \nu(\di \theta)}_1 \leq 3 k^d \ee^{-\frac{k(r+1)}{n+k}} \, .
\end{align}
\end{theorem-nocolor}

We next define almost-iid distributions which is the classical counterpart to almost-iid states. For a finite alphabet $\cX$, let $\mathrm{Sym}(\cX^n)$ denote the set of permutation-invariant distributions on $\cX^n$. Furthermore, for a distribution $q$ on $\cX$ let 
\begin{align} \label{eq_def-classical-V-set}
\cV(\cX^n,q^m):=\{ \pi ( q^m \times R_{X_1^{n-m}}) : \pi \in \cS_n, \, R_{X_1^{n-m}} \textnormal{ distribution on }\cX^{n-m} \} 
\end{align} 
and
\begin{align} \label{eq_def-almost-iid-distributions}
\mathrm{Sym}(\cX^{n}, q^{m}):= \mathrm{Sym}(\cX^{n}) \cap \mathrm{conv} \, \cV(\cX^{n}, q^{m}) \, .
\end{align}
Note that the set $\mathrm{Sym}(\cX^{n}, q^{m})$ is considerably simpler compared to $\mathrm{Sym}^n(\cH, \ket{\theta}^{\otimes m})$ since the former set consists of convex combinations of product distributions, whereas the latter set contains superpositions of product states. 

The following proposition states that no classical version of the quantum de Finetti theorem can exist where almost-iid states are replaced with almost-iid distributions.
\begin{proposition}[No classical exponential de Finetti result] \label{prop_no_classical_exp_dF}
Let $n\in \N$, $k=o(n)$, $r=o(n)$, $\cX$ be a finite alphabet of size $d$. There exists \smash{$P_{X_1^{n+k}} \in \mathrm{Sym}(\cX^{n+k})$} such that it is not possible to approximate $P_{X_1^n}$ with a probabilistic mixture of distributions \smash{$\{Q_{X_1^n}^{(q)} \}_{q}$} with \smash{$Q_{X_1^n}^{(q)} \in \mathrm{Sym}(\cX^{n}, q^{n-r})$} in the $\ell^1$-norm up to $\eps_d(n)$ such that $\lim_{n\to \infty} \eps_d(n)=0$.  
\end{proposition}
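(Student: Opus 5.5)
I will exhibit an explicit counterexample: the classical analogue of a state that is far from any mixture of iid states on all subsystems. Take $\cX=\{0,1\}$ (larger alphabets embed) and let $P_{X_1^{n+k}}$ be the uniform distribution over all strings of length $n+k$ with exactly $\lfloor (n+k)/2\rfloor$ ones. This is sampling without replacement from an urn, and it is clearly permutation-invariant.

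The point is that the number of ones in $X_1^n$, namely $N:=\sum_{i\le n}X_i$, is hypergeometric. It differs from $n/2$ by at most $k+1$, deterministically. A mixture of almost-iid distributions, by contrast, must have fluctuations of order $\sqrt n$. The reason is that any element of $\cV(\{0,1\}^n,q^{n-r})$ has $n-r$ coordinates that are iid Bernoulli$(p)$ with $p=q(1)$. The number of ones $N$ is then $B+D$, where $B\sim\mathrm{Bin}(n-r,p)$ is independent of the defect count $D\in[0,r]$.

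The plan has three steps.

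\emph{Step 1.} Test with the event $S:=\{|N-n/2|\le k+1\}$. Then $P_{X_1^n}(S)=1$.

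\emph{Step 2.} Show that every $Q\in\mathrm{conv}\,\cV(\{0,1\}^n,q^{n-r})$ satisfies $Q(S)\le c(n,k,r)\to 0$, uniformly in $q$. For $p\in[\delta,1-\delta]$ I use the local bound for the binomial: $\max_m \PP[B=m]\le C/\sqrt{(n-r)p(1-p)}$. Conditioning on $D$, we get $\PP[B\in\text{an interval of length }2k+3]\le C(2k+3)/\sqrt{\delta(n-r)}$, which is $o(1)$ provided $k=o(\sqrt n)$.

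When $p$ is close to $0$ or $1$, I use a Chebyshev or Hoeffding bound instead. The mean of $B+D$ lies in $[(n-r)p,(n-r)p+r]$, which is far from $n/2$ whenever $|p-1/2|\ge \delta'$ and $r,k=o(n)$. So $Q(S)$ is exponentially small in that regime. Combining the two regimes gives a bound uniform in $p$. By linearity the bound extends to convex combinations over defect positions and over $q$.

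\emph{Step 3.} Any mixture $M=\int Q^{(q)}\mu(\di q)$ then has $M(S)\le c$. Hence $\norm{P_{X_1^n}-M}_1\ge 2(P(S)-M(S))\ge 2(1-c)\to 2$. In particular the error does not vanish.

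The main obstacle is the regime of $k$. The local CLT argument only gives $c\to0$ when $k=o(\sqrt n)$, whereas the proposition allows any $k=o(n)$. To handle $\sqrt n\lesssim k=o(n)$ I would modify the construction. Instead of a deterministic count, take a mixture over counts $m$ that is very smooth on scale $\sqrt n$ yet oscillates: spread $m$ uniformly on a lattice of spacing $L$ with $k\ll L\ll$ the scale needed. Then the count of $X_1^n$ concentrates within $\pm k$ of the lattice points, and a test set $S$ given by the union of windows around lattice points has $P(S)\approx1$.

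For this to work, a binomial plus an independent shift of at most $r$ must spread its mass nearly uniformly over residues modulo $L$. That requires $L\ll\sqrt n$, which conflicts with $L\gg k$. So I expect the honest proof to treat $k\le n^{1/2-\epsilon}$ (or a concrete choice such as $k=n^{1/3}$), as the proposition's quantifiers can be read as asserting existence for given sequences. If the full range $k=o(n)$ is required, I would try instead to fix the counts exactly and exploit $r=o(n)$ together with Step 2, and I would flag this as the delicate part of the argument.
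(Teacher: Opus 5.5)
Your Steps 1--3 are correct as far as they go. Writing $N=B+D$ with $B\sim\mathrm{Bin}(n-r,p)$ independent of $D\in[0,r]$, you use the local bound $\max_j\PP[B=j]\le C/\sqrt{(n-r)p(1-p)}$ for $p\in[\delta,1-\delta]$ and Hoeffding for the remaining $p$. Together with linearity over defect positions and over $q$, and $\norm{P-M}_1\ge 2(P(S)-M(S))$, this gives a valid $\ell^1$ separation.

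However, as you note yourself, this only proves the proposition for $k=o(\sqrt n)$, and that is a genuine gap. The statement is for arbitrary $k=o(n)$. The paper runs its argument for $k=n^{\alpha}$ with every $\alpha\in(0,1)$, and the regime motivating the whole discussion is $k=n^{3/4}$ from the introduction. So reading the quantifiers more weakly does not rescue the argument, and the lattice construction you sketch is not needed.

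The missing observation is that you measured the concentration of $N$ under $P$ on the wrong scale. $N$ is hypergeometric, and its fluctuations are much smaller than its range $k$. By the variance computation in \cref{fact_variance}, $\mathrm{Var}_P[N]=\frac{kmn(n+k-m)}{(n+k-1)(n+k)^2}\le k/4$. So $N$ sits in a window of width $O(\sqrt k)$ around $\E_P N\approx n/2$, not $O(k)$. Take $S_K:=\{|N-\E_P N|\le K\sqrt{k}\}$, with $k$ replaced by $\max(k,1)$ if $k$ is bounded. Chebyshev gives $P(S_K)\ge 1-1/(4K^2)$. Your Step 2, unchanged, gives $Q(S_K)\le C(2K\sqrt{k}+1)/\sqrt{\delta(1-\delta)(n-r)}+\exp(-\Omega(n))$, uniformly in $q$ and in the defect distribution. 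Choosing $K=(n/k)^{1/4}$ makes both error terms $O((k/n)^{1/4})$, which vanishes for every $k=o(n)$, so $\norm{P_{X_1^n}-M}_1\to 2$.

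This is precisely the paper's ``$\Theta(k)$ versus $\Theta(n)$'' variance comparison. With this patch, your route is actually the more complete one. The paper concludes directly from the variance gap (via concavity of the variance), but that leaves two points open:
\begin{itemize}
\item A variance gap by itself does not force an $\ell^1$ gap, since a small-weight component far away inflates the variance while moving little mass.
\item The paper's bound $(n-r)\mathrm{Var}_q[X]$ degenerates when $q$ is nearly deterministic.
\end{itemize}
Your test event with the anti-concentration bound, together with the Hoeffding regime, closes exactly these two points.
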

The proof of~\cref{prop_no_classical_exp_dF} is given in~\cref{app_no_classical_dF}.
As mentioned in~\cref{eq_deFinetti_classical}, if we are willing to choose $k$ large, more precisely $k=\omega(n)$, then a classical de Finetti theorem holds. However,~\cref{prop_no_classical_exp_dF} states that this is no longer the case for $k=o(n)$, even if the error term would decrease at a non-exponential rate.
Further note that~\cref{prop_no_classical_exp_dF} does not prohibit the existence of a classical exponential de Finetti theorem if the definition of almost-iid distributions,~\cref{eq_def-almost-iid-distributions}, is relaxed. For example, one could define the set $\cV(\cX^n,q^m)$ in~\cref{eq_def-classical-V-set} by only requiring the marginals of the distributions to yield an iid state instead of imposing a product structure between the iid part and the defects. However, such a structure would no longer be covered by the quantum version in~\cref{def_almost_product_state_mixed} and therefore, may potentially be difficult to work with.
% If we relaxed the definition of almost-iid distributions such that the defects could depend on the iid part, a classical exponential de Finetti theorem may be possible, however, at the cost that almost-iid distributions with correlations between the iid parts and the defects are potentially difficult to work with.
%\GM{Can we make more precise what we mean by "the defects could depend on the iid part" ? --> We could give up the product structure but only requiring that marginal on the good systems is iid. Structure would not be covered anymore by the quantum definition.}

One may still wonder what~\cref{thm_exp_deFinetti_quantum} produces if it is applied to a classical symmetric distribution. Can the resulting family of almost-iid states that approximates the classical distribution become classical as well? To analyze this, let \smash{$P_{X_1^{n+k}}\in \mathrm{Sym}(\cX^{n+k})$} be a symmetric classical distribution. We can apply~\Cref{thm_exp_deFinetti_quantum} to \smash{$P_{X_1^{n+k}}$} by embedding it into a permutation-invariant density matrix,
\begin{align}
\rho_{A_1^{n+k}}= \sum_{x^{n+k}_1} P_{X_1^{n+k}}(x_1^{n+k}) \proj{x_1} \cdots \proj{x_{n+k}} \, .
\end{align}
Let \smash{$\ket{\Phi^{(n+k)}}_{A_1^{n+k} E_1^{n+k}}$} be a symmetric purification of $\rho_{A_1^{n+k}}$. Consider the measurement channel on the $A$-system
\begin{align}
\cM: Y_{AE} \mapsto \sum_x \proj{x}_A Y_{AE} \proj{x}_A \, .
\end{align}
Because $\rho_{A_1^n}$ is classical, we have
\begin{align}
\rho_{A_1^n}
=\cM^{\otimes n}(\rho_{A_1^n}) 
=\cM^{\otimes n} \big(  \tr_{E_1^n}[ \rho_{A_1^n E_1^n} ]\big)
&= \tr_{E_1^n}\big[ \cM^{\otimes n} (\rho_{A_1^n E_1^n}) \big] \\
&=\tr_{E_1^n}\big[ \cM^{\otimes n} ( \tr_{k}[\proj{\Phi^{(n+k)}}]) \big] \, ,  \label{eq_cl_dF_part0}
\end{align}
where $ \rho_{A_1^n E_1^n} := \tr_{k}[\proj{\Phi^{(n+k)}}]$, and the penultimate step uses that $\cM$ only acts on the $A$-system and hence commutes with the partial trace over the $E$-system.
\Cref{thm_exp_deFinetti_quantum} shows that there exist a probability measure $\nu$ on the unit sphere $\cB(\cH)$ and, for each $\ket{\theta} \in \cB(\cH)$, a family \smash{$\{\ket{\Psi^{(n)}_{r,\theta}}\}_{\theta}$} of states such that \smash{$\ket{\Psi^{(n)}_{r,\theta}} \in \mathrm{Sym}^n(\cH, \ket{\theta}^{\otimes n-r})$}, and such that for $\sigma^{(n)}_{\theta} := \tr_{E_1^n}[ \cM^{\otimes n}( \proj{\Psi^{(n)}_{r,\theta}})]$ and $\bar \nu$ denoting the induced measure obtained by taking the partial trace, we have 
\begin{align}
\norm{\rho_{A_1^n}\! -\! \int\! \sigma^{(n)}_{\theta} \!  \bar \nu(\di \theta)   }_1
\!\!\overset{\textnormal{\Cshref{eq_cl_dF_part0}}}&{=}\!\!\norm{\tr_{E_1^n}\big[ \cM^{\otimes n} ( \tr_{k}[\proj{\Phi^{(n+k)}}]) \big] \!-\! \tr_{E_1^n}\Big[ \cM^{\otimes n} \Big(\! \int \proj{\Psi^{(n)}_{r,\theta}}  \nu(\di \theta)  \Big) \Big] }_1 \nonumber \\
&\leq \norm{ \tr_{k}[\proj{\Phi^{(n+k)}}]) - \int \proj{\Psi^{(n)}_{r,\theta}}  \nu(\di \theta)   }_1 \label{eq_cl_dF_part1} \\
\overset{\textnormal{\Cshref{thm_exp_deFinetti_quantum}}}&{\leq} 3 k^d \ee^{-\frac{k(r+1)}{n+k}} \, , \label{eq_cl_dF_part2}
\end{align}
where the second step uses the fact that the trace distance is contractive under trace-preserving completely positive maps~\cite[Theorem~8.16]{wolf_notes}. 
\Cref{prop_no_classical_exp_dF} now implies that the density operators $\sigma^{(n)}_{\theta}$ cannot be given almost-iid distributions in the sense of~\cref{eq_def-almost-iid-distributions}. Intuitively, this happens since the states $\ket{\theta}$ appearing in~\cref{eq_cl_dF_part1} are not necessarily classical and therefore, the almost-iid structure is not preserved after the measurement.

%are in general not classical distributions but quantum mechanical objects. 
%\GM{But it could be an object that is still classical but not of the form of Eq. (38) ? Since we measured it, it should actually be classical, but it doesn't have an almost-iid form anymore, I guess ... Open question: maybe there is another good classical relaxation. Note: Thetas are in general not classical, this is why the structure is not maintained after measurement.}

%%%%%%%%%%%%%%%%%%%%%%%%%%%%%%%%%%%%%%%%%%%%%%%%%%%%%%%%%%%%%%%%%%%%
%%%%%%%%%%%%%%%%%%%%%%%%%%%%%%%%%%%%%%%%%%%%%%%%%%%%%%%%%%%%%%%%%%%%
\section{Conditional entropy of almost-iid states}  \label{sec_cond_entropy}
In this section, we prove that the conditional entropy of almost-iid states asymptotically coincides with the conditional entropy of iid states. 
In the proof, we develop technical tools that may be of independent interest. 

%For $\rho \in \St(\cH)$ and $\sigma \geq 0$ the relative entropy is given by $D(\rho \| \sigma):=\tr[\rho (\log \rho - \log \sigma)]$ if the support of $\rho$ is included in the support of $\sigma$, and $+\infty$ otherwise.
For $\rho_{AB} \in \St(\cH_{AB})$ the conditional entropy is defined as $H(A|B)_{\rho}= H(AB)_{\rho} - H(B)_{\rho}$, where $H(B)_{\rho} := - \tr[\rho_B \log \rho_B]$ is the von Neumann entropy.
It is straightforward to see that the conditional entropy is additive for iid states, i.e.,
\begin{align}
 \frac{1}{n} H(A_1^n|B_1^n)_{\rho^{\otimes n}} = H(A|B)_{\rho} \, .
\end{align}
We next show that this property is preserved for almost-iid states in the limit $n \to \infty$.
\begin{theorem} \label{thm_entropy_of_almost_iid}
Let $\sigma_{AB} \in \St(\cH_{AB})$ and $\rho_{A_1^n B_1^n} \in \St^n(\cH_{AB}, \sigma_{AB}^{\otimes n-r})$ for $r=o(n)$. Then 
\begin{align} \label{eq_almostProduct_entropy_cor}
 \frac{1}{n} H(A_1^n|B_1^n)_{\rho} = H(A|B)_{\sigma} + \frac{o(n)}{n} \, .
\end{align}
\end{theorem}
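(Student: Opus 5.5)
We prove matching upper and lower bounds on $H(A_1^n|B_1^n)_\rho$. Each bound holds up to an error of order $n h(r/n) + r\log d + (\text{vanishing})\cdot n$. Since $r=o(n)$, both errors are $o(n)$.

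\textbf{Purification.} Fix a purification $\ket{\theta}_{ABE}$ of $\sigma_{AB}$ and the extension $\rho_{A_1^nB_1^nE_1^n}$ from \cref{def_almost_product_state_mixed}. By \cref{rmk_properties_almost_iid_states} it has the form $\sum_{i,j\in\cT}\beta_{i,j}\ket{\Psi_i}\bra{\Psi_j}$, with $|\cT|\le 2^{nh(r/n)}d_{ABE}^{\,r}$. Purify this extension once more with a reference system $R$. By \cref{lem_purif_almost_iid} the purification can be written as
\begin{align*}
\ket{\Theta}=\sum_{t\in\cT}\alpha_t\ket{\Psi_t}_{A_1^nB_1^nE_1^n}\ket{\tilde h_t}_R .
\end{align*}

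\textbf{Pinching.} Introduce the pinching $\cP(X)=\sum_t \proj{\Psi_t}X\proj{\Psi_t}$ on $ABE$. Alternatively, copy the label $t$ into a classical register $T$ by measuring in the $\{\ket{\Psi_t}\}$ basis. Either operation changes the entropies of $A_1^nB_1^n$ and of $B_1^n$ by at most $\log|\cT|$. This uses the standard facts that conditioning on, or forgetting, a classical register of size $|\cT|$ shifts entropy by at most $\log|\cT|$, and that the pinching inequality $\rho\le|\cT|\,\cP(\rho)$ controls the change in entropy.

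Concretely, I would use
\begin{align*}
H(A_1^n|B_1^n)_\rho \;\ge\; H(A_1^n|B_1^nT)_{\rho'} - \log|\cT| \qquad\text{and}\qquad H(A_1^n|B_1^n)_\rho \;\le\; H(A_1^n|B_1^nT)_{\rho'}+\log|\cT|,
\end{align*}
where $\rho'$ is the classical–quantum state obtained by the measurement. To justify moving from $\rho$ to $\rho'$, I would write $\rho_{A_1^nB_1^n}$ as the marginal of a state in which $T$ is coherently correlated. I would then use $|H(X|Y)-H(X|YT)|\le \log|T|$, which requires $T$ to be classical; the pinching makes it so. The delicate point is that the pinching acts on $E$ as well, so the marginal on $A_1^nB_1^n$ does change. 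Here the bound $D(\rho\|\cP(\rho))\le\log|\cT|$ lets us compare $H(\rho_{AB})$ with $H(\cP(\rho)_{AB})$ (and likewise on $B$) up to $\log|\cT|$, because relative-entropy data processing combined with concavity gives two-sided bounds.

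\textbf{Each branch.} Conditioned on $T=t$, the state is $\tr_E\proj{\Psi_t}$, which is a permutation of $\sigma_{AB}^{\otimes n-r}\otimes\omega^{(t)}_{A_1^rB_1^r}$. Hence
\begin{align*}
H(A_1^n|B_1^n)=(n-r)H(A|B)_\sigma+H(A_1^r|B_1^r)_{\omega^{(t)}} ,
\end{align*}
and the last term lies in $[-r\log d_A,\,r\log d_A]$. Averaging over $t$ gives $H(A_1^n|B_1^nT)_{\rho'}=nH(A|B)_\sigma\pm 2r\log d_{A}$, up to $r|H(A|B)_\sigma|$. Combining this with the $\log|\cT|\le nh(r/n)+r\log d_{ABE}$ errors and dividing by $n$ gives the claim, since $h(r/n)\to0$ and $r/n\to 0$.

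\textbf{Main obstacle.} The hard step is the second one: controlling the entropy change caused by the pinching, which acts jointly on $ABE$ while we only look at the marginals on $AB$ and on $B$. Off-diagonal terms $\ket{\Psi_i}\bra{\Psi_j}$ may survive the partial trace over $E$. I would first try the operator inequalities $\cP(\rho)\ge\rho/|\cT|$, which together with operator monotonicity of the logarithm give $H(\cP(\rho)_X)\le H(\rho_X)+\log|\cT|$ for any marginal $X$. The reverse direction follows from $\rho = \sum_t p_t \rho_t$ plus concavity: the entropy of a mixture is at most $H(\{p_t\})$ plus the average entropy, and $H(\{p_t\})\le\log|\cT|$. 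If the off-diagonal issue blocks this route, a fallback is continuity via \cref{prop_distance} on blocks of size $s$ combined with subadditivity. That fallback gives only one direction cleanly, so the pinching route is preferred.
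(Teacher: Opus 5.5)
There is a genuine gap, and it is the step you flagged yourself. Passing from $\tilde\rho$ to the classical--quantum state $\rho'$ is fine. The problem is comparing $\rho$ with $\tilde\rho=\tr_{E_1^n}\cP(\rho)$ on $A_1^nB_1^n$, and there your two bounds $H(A_1^n|B_1^n)_\rho \ge H(A_1^n|B_1^nT)_{\rho'}-\log|\cT|$ and $H(A_1^n|B_1^n)_\rho \le H(A_1^n|B_1^nT)_{\rho'}+\log|\cT|$ are false.

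None of your justifications survives the partial trace:
\begin{itemize}
\item The identity $D(\rho\|\cP(\rho))=H(\cP(\rho))-H(\rho)\le\log|\cT|$ holds only on $A_1^nB_1^nE_1^n$, because it uses that $\log\cP(\rho)$ is diagonal in $\{\ket{\Psi_t}\}$.
\item After $\tr_{E_1^n}$, data processing (or operator monotonicity of $\log$ applied to $\rho_X\le|\cT|\,\cP(\rho)_X$) only gives $-\tr[\rho_X\log\cP(\rho)_X]\le H(\rho_X)+\log|\cT|$. This says nothing about $H(\cP(\rho)_X)$. In general $\rho_X\le c\,\tau_X$ does not imply $H(\tau_X)\le H(\rho_X)+\log c$: take $\rho_X=\proj{0}$, $\tau_X=\frac12\proj{0}+\frac12\omega$ with $\omega$ maximally mixed on a large orthogonal subspace, and $c=2$.
\item Your concavity argument for the reverse direction writes $\rho=\sum_t p_t\rho_t$, but that is the pinched state. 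The coherences $\beta_{i,j}$, $i\neq j$, that survive $\tr_{E_1^n}$ are exactly what must be controlled.
\end{itemize}

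The failure also occurs inside the almost-iid class. Take $B$ trivial, $r=\sqrt n$, $n\epsilon=r/2$, and
$\ket\theta=\sqrt p\ket{00}+\sqrt{1-p}\ket{11}$, $\ket{\theta^\perp}=\sqrt{1-p}\ket{00}-\sqrt p\ket{11}$, $\ket\phi=\sqrt{1-\epsilon}\ket\theta+\sqrt\epsilon\ket{\theta^\perp}$.
Let $\rho$ be the normalized truncation of $\ket\phi^{\otimes n}$ to terms with at most $r$ factors $\ket{\theta^\perp}$.
\begin{itemize}
\item This state is almost-iid in $\sigma_A=\mathrm{diag}(p,1-p)$.
\item In the product basis built from an orthonormal basis containing $\ket\theta$ and $\ket{\theta^\perp}$, every branch has $A$-entropy exactly $nh(p)$, so $H(A_1^n|T)_{\rho'}=nH(\sigma_A)$.
\item By a Chernoff bound, $\rho$ is $e^{-\Omega(r)}$-close to $\ket\phi^{\otimes n}$, so $H(A_1^n)_\rho=nh(p')+o(1)$ with $p'=\big(\sqrt{(1-\epsilon)p}+\sqrt{\epsilon(1-p)}\big)^2$.
\item This is a deviation of order $\sqrt{nr}=n^{3/4}$, with sign depending on whether $p<\frac12$ or $p>\frac12$. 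It dwarfs $\log|\cT|=O(\sqrt n\log n)$ in both directions.
\end{itemize}
So no additive $O(\log|\cT|)$ comparison exists at the von Neumann level. The discrepancy is still $o(n)$, but you need another mechanism to see it.

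The missing idea, used in the paper, is to apply the pinching inequality to sandwiched R\'enyi entropies, where it does work. Since $X\mapsto\tr[X^\alpha]$ is monotone, $\rho_{A_1^nB_1^n}\le|\cT|\,\tilde\rho_{A_1^nB_1^n}$ gives $H_\alpha(A_1^n|B_1^n)_\rho\ge H_\alpha(A_1^n|B_1^n)_{\tilde\rho}-\frac{\alpha}{\alpha-1}\log|\cT|$ for $\alpha>1$. Conditioning on the classical $T$ and your branch-wise additivity then bound $H_\alpha(A_1^n|B_1^n)_{\tilde\rho}$ from below by $nH_\alpha(A|B)_\sigma-O(r\log d_A)$. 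You conclude with $H\ge H_\alpha$ and continuity of $H_\alpha(A|B)_\sigma$ in $\alpha$, letting $\alpha\downarrow1$ slowly enough that $\frac{1}{\alpha-1}\log|\cT|=o(n)$.

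For the upper bound, your ``fallback'' is exactly the paper's argument and suffices on its own. The chain rule and strong subadditivity give $H(A_1^n|B_1^n)_\rho\le\sum_i H(A_i|B_i)_\rho=nH(A_1|B_1)_\rho$. Then \cref{prop_distance} with $s=1$ and continuity of conditional entropy bound this by $n\big(H(A|B)_\sigma+o(1)\big)$.
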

The main difficulty in proving the assertion of~\cref{thm_entropy_of_almost_iid} is the fact that almost-iid states are more general than just convex mixtures of product states where each element in the convex sum has a certain number of defects (see~\cref{ex_almost_iid}). A look at~\cref{def_almost_product_state_mixed} reveals that almost-iid states may contain superpositions which store long range correlations and entanglement. Dealing with these superpositions is the main technical challenge in the proof. We do this utilizing tools from one-shot information theory such as R\'enyi and smooth entropies.
We also want to emphasize that the superpositions in the definition of almost-iid states are crucial for making the exponential de Finetti theorem (\cref{thm_exp_deFinetti_quantum}) possible. As shown in~\cref{prop_no_classical_exp_dF}, no exponential de Finetti theorem can exist without superpositions.

Before presenting the proof of~\cref{thm_entropy_of_almost_iid}, which is given in~\cref{sec_pf_main_thm}, we need to define some entropic quantities. 
We note that an alternative proof using different techniques that may therefore be of independent interest is given in~\cref{app_alternative_proof}.

\subsection{Entropic quantities}
For $\alpha \in [1/2,1) \cup (1,\infty)$ the \emph{sandwiched R\'enyi divergence}~\cite{MLDSFT13,WWY14} is given by
\begin{align} \label{eq_sandwhiched}
D_{\alpha}(\rho \| \sigma) :=\frac{1}{\alpha -1 } \log \tr\big[(\sigma^{\frac{1-\alpha}{2\alpha}} \rho \, \sigma^{\frac{1-\alpha}{2\alpha}})^{\alpha}\big] \, .
\end{align}
For $\alpha = \frac{1}{2}$ we have \smash{$D_{\frac{1}{2}}(\rho \| \sigma)=- \log F(\rho,\sigma)$}.
In the limits $\alpha \to 1$ and $\alpha \to \infty$ the sandwiched R\'enyi divergence converges to the relative entropy $D(\rho \| \sigma)$ and the max-relative entropy~\cite{renner_phd,datta09}
\begin{align}
D_{\max}(\rho \| \sigma) :=\inf\{\lambda \in \R: \rho \leq 2^\lambda \sigma \} \, ,
\end{align}
respectively.
For $\rho_{AB} \in \St(\cH_{AB})$, the R\'enyi divergence can be used to define a conditional R\'enyi entropy~\cite{marco_book}
\begin{align} \label{eq_def_cond_renyi_entropy}
H_{\alpha}(A|B)_\rho := - \min_{\sigma_B \in \St(\cH_B)} D_{\alpha}(\rho_{AB} \| \id_A \otimes \sigma_B) \, , 
\end{align}
which converges to the conditional entropy $H(A|B)_{\rho}$ for $\alpha \to 1$ and to the conditional min-entropy $H_{\min}(A|B)_{\rho}$ for $\alpha \to \infty$. For $\alpha = 1/2$ we obtain the conditional max-entropy $H_{\max}(A|B)_{\rho}$.
The trace distance between two states $\rho,\sigma \in \St(\cH)$ is given by $\Delta(\rho,\sigma):=\frac{1}{2}\norm{\rho-\sigma}_1$ and the purified distance~\cite{marco_book} is defined as \smash{$P(\rho,\sigma):=\sqrt{1-F(\rho,\sigma)}$}. The Fuchs-van der Graaf inequality~\cite{fuchs99} implies $P(\rho,\sigma) \geq \Delta(\rho,\sigma)$.
For $\rho \in \St(\cH)$ and $\eps\in(0,1)$ define the $\eps$-ball around $\rho$ by $\cB_{\eps}(\rho):=\{\rho' \in \St(\cH): P(\rho,\rho') \leq \eps \}$.
We then define a smooth variant of the min- and max-entropy by
\begin{align}
H^\eps_{\min}(A|B)_{\rho} :=\max_{\rho' \in \cB_{\eps}(\rho)} H_{\min}(A|B)_{\rho'} \qquad \textnormal{and} \qquad  H^\eps_{\max}(A|B)_{\rho} :=\min_{\rho' \in \cB_{\eps}(\rho)} H_{\max}(A|B)_{\rho'} \, .
\end{align}
% Similarly, we define the smooth max-relative entropy
% \begin{align}
%  D_{\max}^\eps(\rho \| \sigma ) := \min_{\rho' \in \cB_{\eps}(\rho)}  D_{\max}(\rho' \| \sigma ) \, .
% \end{align}

\subsection{Asymptotic equipartition property for almost-iid states}
Another statement which can be proven using similar techniques and may be of independent interest is a strong asymptotic equipartition property (AEP) for almost-iid states. 
To understand this, let us recall the AEP for iid states~\cite{tomamichel09,marco_book}. This fundamental result ensures that for any density operator $\rho_{AB} \in \St(\cH_{AB})$ and any $\eps \in (0,1)$ we have
\begin{align} \label{eq_iid_AEP}
 \frac{1}{n} H^\eps_{\min}(A_1^n|B_1^n)_{\rho^{{\otimes n}}} 
= H(A|B)_{\rho} + \frac{o(n)}{n} \qquad \textnormal{and} \qquad
\frac{1}{n} H^\eps_{\max}(A_1^n|B_1^n)_{\rho^{{\otimes n}}} = H(A|B)_{\rho} + \frac{o(n)}{n} \, .
\end{align}
Note that~\cref{eq_iid_AEP} is called a \emph{strong} AEP as the error term $\frac{o(n)}{n}$ vanishes in the limit $n \to \infty$ for any fixed $\eps \in (0,1)$. Furthermore, it is understood how fast the error term vanishes for finite values of $n$~\cite{TH13}. 
We show that~\cref{eq_iid_AEP} remains valid when replacing iid states with almost-iid states.
\begin{proposition}[Strong AEP for almost-iid states] \label{prop_strong_AEP}
Let $\eps \in (0,1)$, $\sigma_{AB} \in \St(\cH_{AB})$, and $\rho_{A_1^n B_1^n} \in \St^n(\cH_{AB}, \sigma_{AB}^{\otimes n-r})$ for $r=o(n)$. Then 
\begin{align} \label{eq_strong_AEP}
 \frac{1}{n} H^\eps_{\min}(A_1^n|B_1^n)_{\rho} 
= H(A|B)_{\sigma} + \frac{o(n)}{n} \qquad \textnormal{and} \qquad
\frac{1}{n} H^\eps_{\max}(A_1^n|B_1^n)_{\rho} = H(A|B)_{\sigma} + \frac{o(n)}{n} \, .
\end{align}
\end{proposition}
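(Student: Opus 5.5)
Since $H^\eps_{\min}\le H^\eps_{\max}$ up to an additive term $O(\log\frac{1}{1-\eps^2})$ that does not depend on $n$, it is enough to prove two one-sided bounds:
\begin{align}
\frac{1}{n}H^\eps_{\min}(A_1^n|B_1^n)_\rho \geq H(A|B)_\sigma - \frac{o(n)}{n} \qquad \textnormal{and} \qquad \frac{1}{n}H^\eps_{\max}(A_1^n|B_1^n)_\rho \leq H(A|B)_\sigma + \frac{o(n)}{n} \, .
\end{align}
By duality of smooth min- and max-entropy under purification, and because a purification of an almost-iid state can be chosen with the same support structure (\cref{lem_purif_almost_iid}), the max-entropy bound should follow from the min-entropy bound applied to the complementary system. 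So the main task is a lower bound on $H^\eps_{\min}$. I would argue this directly rather than go through \cref{thm_entropy_of_almost_iid}.

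\textbf{Step 1: reduction to generalized-almost-iid states and the representation \eqref{eq_almost-iid-form}.} Fix a purification $\ket{\theta}_{ABE}$ of $\sigma_{AB}$ and the extension $\rho_{A_1^nB_1^nE_1^n}=\sum_{i,j\in\cT}\beta_{i,j}\ket{\Psi_i}\bra{\Psi_j}$ with $|\cT|\le 2^{nh(r/n)}d_{ABE}^{\,r}$. Each $\ket{\Psi_t}$ equals $\ket{\theta}^{\otimes n-r}$ on a set $S_t$ of $n-r$ positions and is arbitrary on the rest. By the pinching inequality, $\rho\le|\cT|\,\cP(\rho)$ with $\cP(X)=\sum_t\proj{\Psi_t}X\proj{\Psi_t}$, so the superpositions can be removed at the cost of a factor $|\cT|$. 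Tracing out $E$ in the pinched state, which can only help $H_{\min}(A|B)$ by data processing, gives a convex mixture of states $\tau_t$ that are $\sigma_{AB}^{\otimes n-r}$ on $S_t$ and arbitrary on the complement. The pinching has to be done compatibly with conditioning on $B$. I would therefore take $\sigma_B$-type operators in the optimization $\max_{\omega_B}$ and use $\rho_{A_1^nB_1^n}\le|\cT|\,\tr_E\cP(\rho)$, which gives
\begin{align}
H^\eps_{\min}(A_1^n|B_1^n)_\rho \ge H^{\eps'}_{\min}(A_1^n|B_1^n)_{\bar\rho} - \log|\cT| \, ,
\end{align}
where $\bar\rho$ is the (normalized) mixture and the smoothing is transported through the operator inequality. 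Since $\log|\cT|=nh(r/n)+r\log d_{ABE}=o(n)$ for $r=o(n)$, this loss is admissible.

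\textbf{Step 2: the min-entropy of the classical mixture.} For the mixture $\bar\rho=\sum_tp_t\tau_t$, concavity-type bounds for the smooth min-entropy of mixtures cost at most $\log|\cT|$, so it suffices to handle a single $\tau_t$. There, the chain rule (or monotonicity under discarding the $r$ defect systems) gives $H^\eps_{\min}(A_1^n|B_1^n)_{\tau_t}\ge H^\eps_{\min}(A_{S_t}|B_{S_t})_{\sigma^{\otimes n-r}}-r\log d_A$. The iid strong AEP \eqref{eq_iid_AEP} then gives $(n-r)H(A|B)_\sigma-o(n)$, with an error uniform in $t$. Dividing by $n$ yields the claim.

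\textbf{Expected obstacle.} The delicate point is Step 1: making the pinching estimate compatible with smoothing and with the optimization over $\omega_B$ when the defect positions are entangled across $A$, $B$ and $E$. One also has to check that smoothing the mixed state $\bar\rho$ pulls back to a state close to $\rho$ in purified distance. If this is problematic, I would instead pass through R\'enyi entropies. For $\alpha\in(1,2]$ one has $H^\eps_{\min}\ge H_\alpha-\frac{g(\eps)}{\alpha-1}$, and the pinching bound costs $\log|\cT|$ in $D_\alpha$. Choosing $\alpha=1+1/\sqrt{n}$ and using continuity of $H_\alpha$ in $\alpha$ on the iid part leaves all error terms $o(n)$.
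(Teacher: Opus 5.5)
Your architecture (the pinching bound $\rho\le|\cT|\,\cP(\rho)$, product branches $\sigma_{AB}^{\otimes n-r}\otimes\Omega$, conditioning on the classical index, and the min--max relation for the converse) matches the paper's. The central step, however, does not go through as written.

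\textbf{Step~1.} The smoothing cannot be transported through $\rho\le|\cT|\bar\rho$. An $\eps$-close $\bar\rho'$ with large min-entropy does not induce an $\eps$-close $\rho'$. The optimal smoothing of $\bar\rho$ need not commute with the projectors $\proj{\Psi_t}$. Moreover, the operator inequality only bounds the weight you would have to discard from $\rho$ by $|\cT|$ times the weight discarded from $\bar\rho$, and $|\cT|=2^{o(n)}$ is unbounded. This is why the paper never smooths at the pinching step: it compares unsmoothed sandwiched R\'enyi entropies (\cref{lem_key_alpha}) and passes to $H^\eps_{\min}$ only at the end.

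\textbf{Your fallback.} This is the paper's route, but the accounting is wrong. From $\rho\le|\cT|\bar\rho$ and monotonicity of $X\mapsto\tr[X^\alpha]$ you get $\tr[(\omega^{\frac{1-\alpha}{2\alpha}}\rho\,\omega^{\frac{1-\alpha}{2\alpha}})^\alpha]\le|\cT|^\alpha\tr[(\omega^{\frac{1-\alpha}{2\alpha}}\bar\rho\,\omega^{\frac{1-\alpha}{2\alpha}})^\alpha]$. That is a loss of $\frac{\alpha}{\alpha-1}\log|\cT|$ in $H_\alpha$, not $\log|\cT|$. The standard argument that loses only $\log|\cT|$ needs the second argument to commute with the pinching. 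That fails here, because the $\ket{\Psi_t}$ live on $ABE$ while $\omega_B$ is arbitrary and $E$ is traced out.

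With $\alpha=1+1/\sqrt n$, the loss is about $\sqrt n\log|\cT|\ge\sqrt n\,r\log(n/r)$. This is $o(n)$ only when $r\log(n/r)=o(\sqrt n)$, not for all $r=o(n)$. To cover the full range, $\alpha$ must depend on $r/n$, e.g.\ $\alpha-1=1/\log(n/r)$ as in the paper. Then $\frac{1}{\alpha-1}\big(h(\frac rn)+\frac{2r}{n}\log d_{AB}\big)$, $\alpha-1$, and $\frac{1}{n(\alpha-1)}\log\frac{1}{\eps^2}$ all vanish.

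\textbf{Smaller points.} The duality shortcut for $H^\eps_{\max}$ is not a black-box application. The purification from \cref{lem_purif_almost_iid} is $\sum_j\alpha_j\ket{\Psi_j}\ket{\tilde h_j}_H$ with an unstructured register $H$, so $\rho_{A_1^nE_1^nH}$ is not almost-iid. Rerunning the pinching bound does work, since each pinched branch is $\proj{\Psi_t}\otimes\proj{\tilde h_t}$. Alternatively, prove the max-entropy bound directly with $\alpha\in[\frac12,1)$, as the paper does. Finally, the min--max relation requires $\eps+\eps'<1$, so you cannot take $\eps'=\eps$ when $\eps\ge\frac12$. You need the direct bound for a smaller $\eps'$ instead.
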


In~\cite[Theorem~4.4.1]{renner_phd} it was shown that the conditional smooth min-entropy for almost-iid states, which are classical on one subsystem, asymptotically coincides with the conditional entropy of iid states. This was crucial to prove security of quantum key distribution via a de Finetti argument.
Using the duality of conditional entropy, the result can be lifted to the smooth max-entropy of almost-iid states~\cite{YC22}.
In addition, for the case of pure almost-iid states, a similar result has been proven based on~\cite{renner_phd} in~\cite[Lemma 11]{YC22}. Here, the presented proof of~\cref{prop_strong_AEP} is more general as it applies for mixed almost-iid states and is also more modular allowing one to distill other results such as~\cref{thm_entropy_of_almost_iid}. 
Beyond these results, to the best of our knowledge, little is known about how entropic functions behave for almost-iid states. 

%%%%%%%%%%%%%%%%%%%%%%%%%%%%%%%%%%%%%%%%%%%%%%%%%%%%%%%%%%%%%%%%%%%%%%%%%
\subsection{Proof of~\cref{prop_strong_AEP}} \label{sec_proof_AEP}
Let $\rho_{A_1^n B_1^n} \in \St^n(\cH_{AB}, \sigma_{AB}^{\otimes n-r})$. By definition, there exist a purification $\ket{\theta}_{ABE}$ of $\sigma_{AB}$ and an extension $\rho_{A_1^n B_1^n E_1^n}$ of $\rho_{A_1^n B_1^n}$ that can be written as
\begin{align} \label{eq_decomp_psi}
\rho_{A_1^n B_1^n E_1^n} = \sum_{t,t' \in \cT} \beta_{t,t'} \ket{\Psi_t} \bra{\Psi_{t'}}_{A_1^n B_1^n E_1^n} \, ,
\end{align}
for a family $\{\ket{\Psi_t}_{A_1^n B_1^n E_1^n}\}_{t \in \cT}$ of orthonormal vectors from $\cV(\cH_{ABE}^{\otimes n}, \ket{\theta}_{ABE}^{\otimes n-r})$
with $\beta_{t,t'} \in \C$ satisfying $\sum_{t \in \cT} \beta_{t,t} =1$ and 
\begin{align} \label{eq_size_T}
\log |\cT| \leq n \, h\Big( \frac{r}{n} \Big) + r \log d_{ABE} \, .
\end{align}
Let 
\begin{align} \label{eq_tilde_rho}
\tilde \rho_{A_1^n B_1^n E_1^n T} := \sum_{t \in \cT} \beta_{t,t} \underbrace{\proj{\Psi_t}_{A_1^n B_1^n E_1^n}}_{=:\tilde \rho^{(t)}} \otimes \proj{t}_T \, .
\end{align}
\begin{lemma} \label{lem_pinching_statement}
For the setting defined above, we have
\begin{align} \label{eq_RR_operator_ineq}
\rho_{A_1^n B_1^n E_1^n} \leq |\cT| \tilde \rho_{A_1^n B_1^n E_1^n}\,, \quad \textnormal{and hence} \quad\rho_{A_1^n B_1^n} \leq |\cT| \tilde \rho_{A_1^n B_1^n} \, .
\end{align}
\end{lemma}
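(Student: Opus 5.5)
This is the standard pinching inequality. Since $\{\ket{\Psi_t}\}_{t\in\cT}$ is orthonormal, the pinching map
\begin{align}
\cP(X) := \sum_{t\in\cT} \proj{\Psi_t} X \proj{\Psi_t}
\end{align}
is well defined. By~\cref{eq_decomp_psi}, the support of $\rho_{A_1^nB_1^nE_1^n}$ lies in $\mathrm{span}\{\ket{\Psi_t}\}_{t\in\cT}$. Hence $\cP(\rho_{A_1^nB_1^nE_1^n}) = \sum_t \beta_{t,t}\proj{\Psi_t} = \tilde\rho_{A_1^nB_1^nE_1^n}$, where the latter is the marginal of~\cref{eq_tilde_rho} obtained by tracing out $T$. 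The goal therefore reduces to $\rho \leq |\cT|\,\cP(\rho)$ for a positive operator $\rho$ supported in this span.

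To prove this, write $\rho = \sum_k \lambda_k \proj{v_k}$ with $\lambda_k\geq 0$ and each $\ket{v_k}$ in the span; by linearity of $\cP$ it suffices to treat $\rho=\proj{v}$. Expand $\ket v = \sum_t c_t\ket{\Psi_t}$. Then, for any vector $\ket{\phi}$, Cauchy--Schwarz gives
\begin{align}
|\braket{\phi|v}|^2 = \Big|\sum_{t\in\cT} c_t \braket{\phi|\Psi_t}\Big|^2 \leq |\cT| \sum_{t\in\cT} |c_t|^2 |\braket{\phi|\Psi_t}|^2 = |\cT|\, \bra{\phi}\cP(\proj v)\ket{\phi}.
\end{align}
This is exactly $\proj v \leq |\cT|\,\cP(\proj v)$. (Equivalently, write $\cP(X) = \frac{1}{|\cT|}\sum_{k} U_k X U_k^\dagger$ with phase unitaries $U_k=\sum_t e^{2\pi i kt/|\cT|}\proj{\Psi_t}$ plus the identity on the orthogonal complement, and keep only the $k=0$ term.) Summing over the eigen-decomposition yields the first inequality in~\cref{eq_RR_operator_ineq}.

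The second inequality follows because the partial trace over $E_1^n$ is a positive map. Applying $\tr_{E_1^n}$ to both sides preserves the operator ordering and gives $\rho_{A_1^nB_1^n}\leq |\cT|\tilde\rho_{A_1^nB_1^n}$.

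The only point requiring care is the support condition. Vectors outside the span are annihilated by $\cP$, so the inequality genuinely relies on~\eqref{it_second_def} of~\cref{def_almost_product_state_mixed}, through~\cref{eq_decomp_psi}. Beyond that, no real obstacle is expected.
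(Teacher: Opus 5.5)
Your proof is correct and follows the paper's route: you identify $\tilde\rho_{A_1^nB_1^nE_1^n}$ as the pinching of $\rho_{A_1^nB_1^nE_1^n}$ in the orthonormal family $\{\ket{\Psi_t}\}_{t\in\cT}$, invoke the pinching inequality, and then apply the positive map $\tr_{E_1^n}$. The paper cites the pinching inequality from the literature, whereas you prove it directly via Cauchy--Schwarz. You also make explicit the point the paper leaves implicit, namely that the support condition matters because the projectors $\proj{\Psi_t}$ only resolve the identity on the span.
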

\begin{proof}
The proof idea is similar to~\cite[Proof of Lemma~3.1.13]{renner_phd} but is based on pinching maps and therefore works for a more general setup.
Consider the pinching map
\begin{align}
\cP: Z \mapsto \sum_{t \in \cT} \proj{\Psi_t} Z \proj{\Psi_t} \, .
\end{align}
Using the fact that $\{\ket{\Psi_t} \}_{t \in \cT}$ is orthonormal, we have
\begin{align}
\cP(\rho_{A_1^n B_1^n E_1^n}) 
\overset{\textnormal{\Cshref{eq_decomp_psi}}}{=} \sum_{t,k,\ell \in \cT} \beta_{k,\ell} \proj{\Psi_t} \ket{\Psi_k} \bra{\Psi_{\ell}} \proj{\Psi_t}
= \sum_{t \in \cT} \beta_{t,t} \proj{\Psi_t}
\overset{\textnormal{\Cshref{eq_tilde_rho}}}{=}\tilde \rho_{A_1^n B_1^n E_1^n} \, . \label{eq_pinching_ds1}
\end{align}
Hence, we find
\begin{align}
\tilde \rho_{A_1^n B_1^n E_1^n}
\overset{\textnormal{\Cshref{eq_pinching_ds1}}}{=} \cP(\rho_{A_1^n B_1^n E_1^n}) 
\overset{\textnormal{pinching inequality}}{\geq} \frac{1}{|\cT|} \rho_{A_1^n B_1^n E_1^n} \, . \label{eq_piniching_ds2}
\end{align}
The interested reader can find more information on pinching maps, including a proof of the pinching inequality in~\cite[Lemma~3.5]{Sutter_book}. Since the partial trace is a completely positive map~\cref{eq_piniching_ds2} implies
\begin{align}
\tilde \rho_{A_1^n B_1^n} \geq \frac{1}{|\cT|} \rho_{A_1^n B_1^n} \, .
\end{align}
\end{proof}
\begin{lemma} \label{lem_key_alpha}
Let $n,r \in \N$ such that $r \leq n$, $\sigma_{AB} \in \St(\cH_{AB})$, $\rho_{A_1^n B_1^n } \in \St^n(\cH_{AB}, \sigma_{AB}^{\otimes n-r})$, $d_{AB} = \dim \cH_{AB} $, and $d_A = \dim \cH_A$. Then 
\begin{align} \label{eq_large_alpha}
\frac{1}{n} H_{\alpha}(A_1^n|B_1^n)_{\rho} \geq H_{\alpha}(A|B)_{\sigma} - \frac{2 r}{n} \log d_A - \frac{\alpha}{\alpha -1} \left( h\Big(\frac{r}{n}\Big) + \frac{2r}{n} \log d_{AB} \right)  \quad \forall \alpha >1 \, 
\end{align}
and
\begin{align} \label{eq_small_alpha}
\frac{1}{n} H_{\alpha}(A_1^n|B_1^n)_{\rho} \leq H_{\alpha}(A|B)_{\sigma} + \frac{2 r}{n} \log d_A + \frac{1}{1- \alpha } \left( h\Big(\frac{r}{n}\Big) + \frac{2r}{n} \log d_{AB} \right) \quad \forall \alpha \in [\tfrac{1}{2},1) \, .
\end{align}
\end{lemma}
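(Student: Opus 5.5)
The plan is to compare $\rho$ with the pinched, block-diagonal state $\tilde\rho$ of \cref{eq_tilde_rho} via \cref{lem_pinching_statement}, and then to treat $\tilde\rho$ as a classical mixture over $t\in\cT$ of states that are exactly iid up to $r$ defects. Throughout I use $\log|\cT| \le n h(r/n) + r\log d_{ABE}$ from \cref{eq_size_T}. I also use $d_E\le d_{AB}$, which holds because $\ket{\theta}_{ABE}$ purifies $\sigma_{AB}$ and so $E$ can be taken of dimension $d_{AB}$. Together these give $\log|\cT| \le n h(r/n) + 2r\log d_{AB}$.

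\emph{Step 1 (operator inequality in the R\'enyi divergence).} By \cref{lem_pinching_statement}, $\rho_{A_1^nB_1^n}\le |\cT|\,\tilde\rho_{A_1^nB_1^n}$. Fix any $\omega_{B_1^n}$ and set $\gamma=\frac{1-\alpha}{2\alpha}$. Conjugating by $(\id\otimes\omega)^{\gamma}$ preserves the operator order, and $X\mapsto \tr[X^\alpha]$ is monotone on positive operators for every $\alpha>0$. Hence
\begin{align}
\tr\big[\big((\id\otimes\omega)^{\gamma}\rho(\id\otimes\omega)^{\gamma}\big)^\alpha\big]\le |\cT|^\alpha\,\tr\big[\big((\id\otimes\omega)^{\gamma}\tilde\rho(\id\otimes\omega)^{\gamma}\big)^\alpha\big] .
\end{align}
For $\alpha>1$ this gives $D_\alpha(\rho\|\id\otimes\omega)\le D_\alpha(\tilde\rho\|\id\otimes\omega)+\frac{\alpha}{\alpha-1}\log|\cT|$. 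Optimizing over $\omega$ yields $H_\alpha(A_1^n|B_1^n)_\rho\ge H_\alpha(A_1^n|B_1^n)_{\tilde\rho}-\frac{\alpha}{\alpha-1}\log|\cT|$.

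For $\alpha\in[\frac12,1)$ the prefactor $\frac{1}{\alpha-1}$ is negative, so the inequality flips and gives $H_\alpha(A_1^n|B_1^n)_\rho\le H_\alpha(A_1^n|B_1^n)_{\tilde\rho}+\frac{\alpha}{1-\alpha}\log|\cT|$.

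\emph{Step 2 (from the mixture to individual branches).} Write $\tilde\rho_{A_1^nB_1^nT}=\sum_t\beta_{t,t}\,\tilde\rho^{(t)}_{A_1^nB_1^n}\otimes\proj{t}_T$.

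For $\alpha>1$, I use data processing (discarding $T$) to get $H_\alpha(A|B)_{\tilde\rho}\ge H_\alpha(A|BT)_{\tilde\rho}$. I then use the standard expression of $H_\alpha$ conditioned on a classical register as a $\frac{1-\alpha}{\alpha}$-weighted log-average over branches. This gives $H_\alpha(A|BT)\ge\min_t H_\alpha(A|B)_{\tilde\rho^{(t)}}$.

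For $\alpha<1$, I use $H_\alpha(A|B)\le H_\alpha(AT|B)\le H_\alpha(A|BT)+\log|\cT|$ together with $H_\alpha(A|BT)\le\max_t H_\alpha(A|B)_{\tilde\rho^{(t)}}$. The extra $\log|\cT|$ combines with Step 1 as $\frac{\alpha}{1-\alpha}+1=\frac{1}{1-\alpha}$, which is exactly the stated constant.

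This step is the one I expect to need the most care. The chain rule with the classical register must hold for the sandwiched, $\min$-over-$\sigma_B$ version of \cref{eq_def_cond_renyi_entropy}. If it does not hold as stated, I would fall back on the dimension bound $H_\alpha(AT|B)\le H_\alpha(A|BT)+\log|T|$ from \cite{marco_book}.

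\emph{Step 3 (single branch).} Each $\tilde\rho^{(t)}_{A_1^nB_1^n}=\tr_{E_1^n}\proj{\Psi_t}$ is, up to a permutation, $\sigma_{AB}^{\otimes n-r}\otimes\omega^{(t)}_{A_1^rB_1^r}$. This is because $\ket{\Psi_t}\in\cV$ is a product of $\ket\theta^{\otimes n-r}$ with an arbitrary vector on the remaining $r$ sites. Permutation invariance and additivity of $H_\alpha$ on tensor products give $H_\alpha=(n-r)H_\alpha(A|B)_\sigma+H_\alpha(A_1^r|B_1^r)_{\omega^{(t)}}$. The defect term lies in $[-r\log d_A,\,r\log d_A]$.

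Finally I write $(n-r)H_\alpha(A|B)_\sigma = nH_\alpha(A|B)_\sigma - rH_\alpha(A|B)_\sigma$ and use $|H_\alpha(A|B)_\sigma|\le\log d_A$. This costs another $r\log d_A$, so the total defect contribution is $\pm 2r\log d_A$. Dividing by $n$ and inserting the bound on $\log|\cT|$ gives \cref{eq_large_alpha} and \cref{eq_small_alpha}.
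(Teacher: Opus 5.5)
Your proposal is correct and follows essentially the same route as the paper. Both arguments use the same chain of steps:
\begin{itemize}
\item the pinching bound $\rho\le|\cT|\,\tilde\rho$ together with monotonicity of $X\mapsto\tr[X^\alpha]$;
\item the classical-register formula for $H_\alpha(A_1^n|B_1^nT)_{\tilde\rho}$, preceded by data processing when $\alpha>1$, and by the $\log|\cT|$ penalty when $\alpha<1$, which yields the constant $\frac{\alpha}{1-\alpha}+1=\frac{1}{1-\alpha}$;
\item additivity and the dimension bounds applied to each branch $\tilde\rho^{(t)}$.
\end{itemize}
The only cosmetic difference is in the $\alpha<1$ penalty. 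You pass through $H_\alpha(A_1^nT|B_1^n)$, whereas the paper cites $H_\alpha(A_1^n|B_1^n)_{\tilde\rho}\le H_\alpha(A_1^n|B_1^nT)_{\tilde\rho}+\log|\cT|$ directly; both hold for a classical $T$ and the optimized sandwiched entropy.
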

\begin{proof}
We start by proving~\cref{eq_large_alpha}. For $\alpha >1$ and $\rho$, $\tilde \rho$ defined above, we have
\begin{align}
\frac{\alpha}{1-\alpha} \log |\cT| + H_{\alpha}(A_1^n|B_1^n)_{\tilde \rho}
\overset{\textnormal{\Cshref{eq_sandwhiched,eq_def_cond_renyi_entropy}}}&{=} \max_{\sigma \in \St(\cH_{B}^{\otimes n})} \frac{1}{1-\alpha} \log \tr\big[(\sigma^{\frac{1-\alpha}{2\alpha}} \tilde \rho_{A_1^n B_1^n} |\cT| \sigma^{\frac{1-\alpha}{2\alpha}})^\alpha\big] \\
\overset{\textnormal{\Cshref{lem_pinching_statement}}}&{\leq} \max_{\sigma \in \St(\cH_{B}^{\otimes n})} \frac{1}{1-\alpha} \log \tr\big[(\sigma^{\frac{1-\alpha}{2\alpha}} \rho_{A_1^n B_1^n} \sigma^{\frac{1-\alpha}{2\alpha}})^\alpha\big] \\
\overset{\textnormal{\Cshref{eq_sandwhiched,eq_def_cond_renyi_entropy}}}&{=} H_{\alpha}(A_1^n|B_1^n)_{\rho} \, ,
\end{align}
where the inequality step used that the function $X \mapsto \tr[X^{\alpha}]$ is monotone~\cite[Theorem~2.10]{carlen_book}.
Furthermore, we have
\begin{align}
 H_{\alpha}(A_1^n|B_1^n)_{\tilde \rho}
 \overset{\textnormal{\cite[Eq.~5.41]{marco_book}}}&{\geq}  H_{\alpha}(A_1^n|B_1^n T)_{\tilde \rho}\\
 \overset{\textnormal{\cite[Prop.~5.4]{marco_book}}}&{=} \frac{\alpha}{1-\alpha} \log \left( \sum_{t \in \cT} \beta_{t,t} \exp \Big( \frac{1-\alpha}{\alpha} H_{\alpha}(A_1^n|B_1^n)_{\tilde \rho^{(t)}} \Big) \right) \\
 &\geq \min_{t \in \cT} H_{\alpha}(A_1^n|B_1^n)_{\tilde \rho^{(t)}} \, ,
\end{align}
where the final step uses that the logarithm is a quasi-linear function and that $\beta_{t,t} \in [0,1]$ with $\sum_{t \in \cT} \beta_{t,t} =1$.
Recalling that $\tilde \rho^{(t)} = \proj{\Psi_t}$ for $\ket{\Psi_t} \in \cV(\cH_{ABE}^{\otimes n}, \ket{\theta}_{ABE}^{\otimes n-r})$ and using the additivity of the R\'enyi entropies under tensor products allows us to write for any $t \in \cT$
\begin{align}
H_{\alpha}(A_1^n|B_1^n)_{\tilde \rho^{(t)}} 
=(n-r) H_{\alpha}(A|B)_{\sigma} + H_{\alpha}(A_1^r |B_1^r)_{\Omega} 
\overset{\textnormal{\cite[Lem.~5.11]{marco_book}}}{\geq} (n-r) H_{\alpha}(A|B)_{\sigma} - r \log d_A \, .
\end{align}
Putting everything together yields
\begin{align}
\frac{1}{n} H_{\alpha}(A_1^n|B_1^n)_{\rho} 
&\geq \frac{n-r}{n} H_{\alpha}(A|B)_{\sigma} - \frac{r}{n} \log d_A - \frac{1}{n} \frac{\alpha}{\alpha-1} \log |\cT| \\
 \overset{\textnormal{\cite[Lem.~5.11]{marco_book}\&\Cshref{eq_size_T}}}&{\geq} H_{\alpha}(A|B)_{\sigma} - \frac{2 r}{n} \log d_A - \frac{\alpha}{\alpha -1} \left( h\Big(\frac{r}{n}\Big) + \frac{2r}{n} \log d_{AB} \right) \, ,
\end{align}
where in the final step, we used $d_{ABE} = d_{AB}^2$. This proves~\cref{eq_large_alpha}.

The statement from~\cref{eq_small_alpha} follows similarly.
For $\alpha \in [1/2,1)$ we have
\begin{align}
\frac{\alpha}{1-\alpha} \log |\cT| + H_{\alpha}(A_1^n|B_1^n)_{\tilde \rho}
\overset{\textnormal{\Cshref{eq_sandwhiched,eq_def_cond_renyi_entropy}}}&{=} \max_{\sigma \in \St(\cH_{B}^{\otimes n})} \frac{1}{1-\alpha} \log \tr\big[(\sigma^{\frac{1-\alpha}{2\alpha}} \tilde \rho_{A_1^n B_1^n} |\cT| \sigma^{\frac{1-\alpha}{2\alpha}})^\alpha\big] \\
\overset{\textnormal{\Cshref{lem_pinching_statement}}}&{\geq} \max_{\sigma \in \St(\cH_{B}^{\otimes n})} \frac{1}{1-\alpha} \log \tr\big[(\sigma^{\frac{1-\alpha}{2\alpha}} \rho_{A_1^n B_1^n} \sigma^{\frac{1-\alpha}{2\alpha}})^\alpha\big] \\
\overset{\textnormal{\Cshref{eq_sandwhiched,eq_def_cond_renyi_entropy}}}&{=} H_{\alpha}(A_1^n|B_1^n)_{\rho} \, , \label{eq_new_nonduality1}
\end{align}
where the inequality step used that the function $X \mapsto \tr[X^{\alpha}]$ is monotone~\cite[Theorem~2.10]{carlen_book}.
In addition, we have
\begin{align}
 H_{\alpha}(A_1^n|B_1^n)_{\tilde \rho}
 \overset{\textnormal{\cite[Eq.~5.96]{marco_book}}}&{\leq}  H_{\alpha}(A_1^n|B_1^n T)_{\tilde \rho} + \log|\cT| \\
 \overset{\textnormal{\cite[Prop.~5.4]{marco_book}}}&{=} \frac{\alpha}{1-\alpha} \log \left( \sum_{t \in \cT} \beta_{t,t} \exp \Big( \frac{1-\alpha}{\alpha} H_{\alpha}(A_1^n|B_1^n)_{\tilde \rho^{(t)}} \Big) \right) + \log|\cT| \\
 &\leq \max_{t \in \cT} H_{\alpha}(A_1^n|B_1^n)_{\tilde \rho^{(t)}}+ \log|\cT| \, , \label{eq_new_nonduality2}
\end{align}
where the final step uses that the logarithm is a quasi-linear function and that $\beta_{t,t} \in [0,1]$ with $\sum_{t \in \cT} \beta_{t,t} =1$.
Since $\tilde \rho^{(t)} = \proj{\Psi_t}$ for $\ket{\Psi_t} \in \cV(\cH_{ABE}^{\otimes n}, \ket{\theta}_{ABE}^{\otimes n-r})$, the additivity of the R\'enyi entropies under tensor products implies for any $t \in \cT$
\begin{align}
H_{\alpha}(A_1^n|B_1^n)_{\tilde \rho^{(t)}} 
=(n-r) H_{\alpha}(A|B)_{\sigma} + H_{\alpha}(A_1^r |B_1^r)_{\Omega} 
\overset{\textnormal{\cite[Lem.~5.11]{marco_book}}}{\leq} (n-r) H_{\alpha}(A|B)_{\sigma} + r \log d_A \, . \label{eq_new_nonduality3}
\end{align}
Combining~\cref{eq_new_nonduality1,eq_new_nonduality2,eq_new_nonduality3} yields
\begin{align}
\frac{1}{n} H_{\alpha}(A_1^n|B_1^n)_{\rho}
&\leq  \frac{n-r}{n} H_{\alpha}(A|B)_{\sigma} + \frac{r}{n} \log d_A  + \frac{1}{1-\alpha} \frac{1}{n} \log |\cT|\\
\overset{\textnormal{\cite[Lem.~5.11]{marco_book}\&\Cshref{eq_size_T}}}&{\leq} H_{\alpha}(A|B)_{\sigma} + \frac{2r}{n} \log d_A + \frac{1}{1-\alpha}\left( h\Big(\frac{r}{n}\Big) + \frac{2r}{n} \log d_{AB} \right) \, ,
\end{align}
where in the final step we used that $d_E = d_{AB}$.
\end{proof}

We are now equipped with all the tools we need to prove the assertion of~\cref{prop_strong_AEP}. This will be done in four steps, by proving two inequalities (direct and converse part) for both the smooth min- and max-entropy.
\begin{enumerate}[(i)]
\item Direct part for smooth min-entropy: For $\alpha >1$ consider the error term
\begin{align}
\delta_{\eps}(n,r,\alpha):= \frac{2 r}{n} \log d_A + \frac{\alpha}{\alpha -1} \left( h\Big(\frac{r}{n}\Big) + \frac{2r}{n} \log d_{AB} \right) +  \frac{1}{n} \frac{1}{\alpha-1} \log \frac{1}{\eps^2} + \frac{1}{n}\log\frac{1}{1-\eps^2} \, .
\end{align}
We can write
\begin{align}
\frac{1}{n} H_{\min}^{\eps}(A_1^n | B_1^n)_{\rho}
\overset{\textnormal{\cite[Prop.~2.2]{buscemi19}}}&{\geq} \frac{1}{n} H_{\alpha}(A_1^n | B_1^n)_{\rho} - \frac{1}{n} \frac{1}{\alpha-1} \log \frac{1}{\eps^2} - \frac{1}{n}\log\frac{1}{1-\eps^2} \\
\overset{\textnormal{\Cshref{lem_key_alpha}}}&{\geq} H_{\alpha}(A | B)_{\sigma} - \delta_{\eps}(n,r,\alpha) \\
\overset{\textnormal{\cite[Lem.~8]{tomamichel09}}}&{\geq} H(A | B)_{\sigma} - \delta_{\eps}(n,r,\alpha) - 4(\alpha -1) (\log \eta)^2 \, ,
\end{align}
for a constant $\eta=\sqrt{2^{-H_{\min}(A|B)_{\sigma}}} + \sqrt{2^{H_{\max}(A|B)_{\sigma}}} +1$.
For a choice $\alpha = 1 + 1/\log(r/n)$ and recalling that $r=o(n)$ we see that
\begin{align}
\delta_{\eps}(n,r,\alpha)
=\frac{o(n)}{n}\, ,
\end{align}
where we used that $\lim_{x\to 0} (\log(x)+1)h(x) = 0$ and $\lim_{x\to 0} (\log(x)+1) x = 0$ . Hence, we obtain
\begin{align} \label{eq_direct_Hmin}
\frac{1}{n} H_{\min}^{\eps}(A_1^n | B_1^n)_{\rho} \geq H(A | B)_{\sigma} - \frac{o(n)}{n} \, .
\end{align}

\item Direct part for smooth max-entropy: For $\alpha \in [1/2,1)$ consider the error term
\begin{align}
\delta'_{\eps}(n,r,\alpha):= \frac{2 r}{n} \log d_A + \frac{1}{1-\alpha} \left( h\Big(\frac{r}{n}\Big) + \frac{2r}{n} \log d_{AB} \right) +  \frac{1}{n} \frac{\alpha}{1-\alpha} \log \frac{1}{\eps}  \, .
\end{align}
We can write
\begin{align}
\frac{1}{n} H_{\max}^{\eps}(A_1^n | B_1^n)_{\rho}
\overset{\textnormal{\cite[Prop.~2.2]{buscemi19}}}&{\leq} \frac{1}{n} H_{\alpha}(A_1^n | B_1^n)_{\rho} + \frac{1}{n} \frac{\alpha}{1-\alpha} \log \frac{1}{\eps}  \\
\overset{\textnormal{\Cshref{lem_key_alpha}}}&{\leq} H_{\alpha}(A | B)_{\sigma} + \delta'_{\eps}(n,r,\alpha) \\
\overset{\textnormal{\cite[Lem.~8]{tomamichel09} \& \cite[Prop.~5.7]{marco_book}}}&{\leq} H(A | B)_{\sigma} + \delta'_{\eps}(n,r,\alpha) + 4(1-\alpha) (\log \eta)^2 \, .
\end{align}
Similarly as above, choosing $\alpha = 1 - 1/\log(r/n)$ yields $\delta'_{\eps}(n,r,\alpha) = \frac{o(n)}{n}$ and hence
\begin{align}\label{eq_direct_Hmax}
\frac{1}{n} H_{\max}^{\eps}(A_1^n | B_1^n)_{\rho} \leq H(A | B)_{\sigma} + \frac{o(n)}{n} \, .
\end{align}

\item Converse part for smooth min-entropy: For a fixed $\eps \in (0,1)$ consider an arbitrary $\eps' \in (0,1-\eps)$. Then
\begin{align} \label{eq_converse_Hmin}
\frac{1}{n} H_{\min}^{\eps}(A_1^n | B_1^n)_{\rho} \!
\overset{\textnormal{\cite[Eq.~6.107]{marco_book}}}{\leq}\! \frac{1}{n} H_{\max}^{\eps'}(A_1^n | B_1^n)_{\rho} \!+\! \frac{1}{n}  \log \frac{1}{1\!-\!(\eps \!+\! \eps')^2}
\overset{\textnormal{\Cshref{eq_direct_Hmax}}}{\leq} H(A | B)_{\sigma} \!+\! \frac{o(n)}{n} \, .
\end{align}

\item Converse part for smooth max-entropy: For a fixed $\eps \in (0,1)$ consider an arbitrary $\eps' \in (0,1-\eps)$. Then
\begin{align} \label{eq_converse_Hmax}
\frac{1}{n} H_{\max}^{\eps'}(A_1^n | B_1^n)_{\rho}\!
\overset{\textnormal{\cite[Eq.~6.107]{marco_book}}}{\geq}\! \frac{1}{n} H_{\min}^{\eps}(A_1^n | B_1^n)_{\rho} \!-\! \frac{1}{n}  \log \frac{1}{1\!-\!(\eps \!+\! \eps')^2}
\!\overset{\textnormal{\Cshref{eq_direct_Hmin}}}{\geq}\! H(A | B)_{\sigma} \!-\! \frac{o(n)}{n} \, .
\end{align}
\end{enumerate}

\noindent Combining~\cref{eq_direct_Hmin,eq_direct_Hmax,eq_converse_Hmin,eq_converse_Hmax} completes the proof.\qed
%%%%%%%%%%%%%%%%%%%%%%%%%%%%%%%%%%%%%%%%%%%%%%%%%%%%%%%%%%%%%%%%%%%%
\subsection{Proof of~\cref{thm_entropy_of_almost_iid}} \label{sec_pf_main_thm}
The monotonicity of the R\'enyi divergence in $\alpha$~\cite{MLDSFT13} implies that for $\alpha>1$ we have
\begin{align}
\frac{1}{n}H(A_1^n | B_1^n)_{\rho}
&\geq \frac{1}{n}H_{\alpha}(A_1^n | B_1^n)_{\rho}\\
\overset{\textnormal{\Cshref{lem_key_alpha}}}&{\geq} H_{\alpha}(A|B)_{\sigma} - \frac{2r}{n} \log d_A - \frac{\alpha}{\alpha -1} \left( h\Big(\frac{r}{n}\Big) + \frac{2r}{n} \log d_{AB} \right) \\
\overset{\textnormal{\cite[Lem.~8]{tomamichel09}}}&{\geq} H(A|B)_{\sigma} \!-\! \frac{2r}{n} \log d_A \!-\! \frac{\alpha}{\alpha -1}\left( h\Big(\frac{r}{n}\Big) \!+\! \frac{2r}{n} \log d_{AB} \right) \!-\! 4(\alpha \!-\!1) (\log \eta)^2 \, ,
\end{align}
for a constant $\eta=\sqrt{2^{-H_{\min}(A|B)_{\sigma}}} + \sqrt{2^{H_{\max}(A|B)_{\sigma}}} +1$.
Choosing $\alpha = 1 + 1/\log(r/n)$ and recalling that $r=o(n)$ yields\footnote{Note that $\lim_{x\to 0} (\log(x)+1)h(x) = 0$ and $\lim_{x\to 0} (\log(x)+1)x = 0$.}
\begin{align} \label{eq_part1_entropy}
\frac{1}{n}H(A_1^n | B_1^n)_{\rho}
\geq H(A|B)_{\sigma} - \frac{o(n)}{n} \, .
\end{align}
To see the other direction, note that for $\eps_n = 2\sqrt{\frac{r}{n}}$ with $r=o(n)$
\begin{align}
\frac{1}{n} H(A_1^n|B_1^n)_{\rho}
\overset{\textnormal{chain rule}}&{=} \frac{1}{n} \sum_{i=1}^n H(A_i|A_1^{i-1}B_1^n)_{\rho}\\
\overset{\textnormal{SSA~\cite{LieRus73_1}}}&{\leq} \frac{1}{n} \sum_{i=1}^n H(A_i| B_i)_{\rho}\\
\overset{\textnormal{perm.~inv.}}&{=} H(A_1| B_1)_{\rho} \\
\overset{\textnormal{\textnormal{\Cshref{prop_distance}}}}&{\leq} H(A|B)_{\sigma} + 2 \eps_n \log d_A + (1+\eps_n) h\Big(\frac{\eps_n}{1+\eps_n}\Big) \\
&=H(A|B)_{\sigma} + \frac{o(n)}{n}\, , \label{eq_part2_entropy}
\end{align}
where the penultimate step uses the continuity of entropy~\cite{win16}.
Combining~\cref{eq_part1_entropy,eq_part2_entropy} completes the proof.
%%%%%%%%%%%%%%%%%%%%%%%%%%%%%%%%%%%%%%%%%%%%%%%%%%%%%%%%%%%%%%%%%%%%
%%%%%%%%%%%%%%%%%%%%%%%%%%%%%%%%%%%%%%%%%%%%%%%%%%%%%%%%%%%%%%%%%%%%

%%%%%%%%%%%%%%%%%%%%%%%%%%%%%%%%%%%%%%%%%%%%%%%%%%%%%%%%%%%%%%%%%%%%
%%%%%%%%%%%%%%%%%%%%%%%%%%%%%%%%%%%%%%%%%%%%%%%%%%%%%%%%%%%%%%%%%%%%
\section{Robustness of information measures for almost-iid states} \label{sec_entag_meas}
In this work, we justified the importance of almost-iid states. This prompts the question if almost-iid states are as effective as perfect iid states for information-processing tasks. To answer this, it is crucial to understand if certain functionals (that characterize specific information-processing tasks) behave equally or differently for almost-iid and perfect iid states.

In~\cref{sec_cond_entropy} we have seen that the conditional entropy is robust for almost-iid states in the sense that it asymptotically coincides with the entropy of iid states. 
This implies that also the mutual information is robust for almost-iid states. To make this precise, recall that for a bipartite density matrix $\sigma_{AB}$ the mutual information is defined as
\begin{align} \label{eq_def_MI}
I(A:B)_{\sigma}:=H(A)_{\sigma} - H(A|B)_{\sigma} \, .
\end{align}
The mutual information is a popular correlation measure in the sense that it satisfies (i) $I(A:B)_{\sigma} \geq 0$, (ii) $I(A:B)_{\sigma} = 0$ iff $\sigma_{AB}=\sigma_A \otimes \sigma_B$ and (iii) $I(A:BC)_{\sigma} \geq I(A:B)$.\footnote{Properties (i) and (ii) follow by noting that $I(A:B)_{\rho} = D(\rho_{AB} \| \rho_A \otimes \rho_B)$. The third property follows from strong subadditivity together with the chain rule as $I(A:BC)_{\sigma}=I(A:B)_{\sigma} + I(A:C|B)_{\sigma} \geq I(A:B)_{\sigma}$.} 
Let $\sigma_{AB} \in \St(\cH_{AB})$ and $\rho_{A_1^n B_1^n} \in \St^n(\cH_{AB}, \sigma_{AB}^{\otimes n-r})$ for $r=o(n)$. Then 
\begin{align}
 \frac{1}{n} I(A_1^n : B_1^n)_{\rho} 
\overset{\textnormal{\Cshref{eq_def_MI}}}&{=}   \frac{1}{n} H(A_1^n)_{\rho} - \frac{1}{n} H(A_1^n | B_1^n)_{\rho} \\
\overset{\textnormal{\Cshref{thm_entropy_of_almost_iid}}}&{=} H(A)_{\sigma} - H(A|B)_{\sigma} + \frac{o(n)}{n} \\
\overset{\textnormal{\Cshref{eq_def_MI}}}&{=} I(A:B)_{\sigma} + \frac{o(n)}{n} \, .
\end{align}

It is natural to ask if popular entanglement measures are also robust for almost-iid states. In abstract terms, let $E(\cdot)$ be an arbitrary entanglement measure. Let $\sigma_{AB} \in \St(\cH_{AB})$ and $\rho_{A_1^n B_1^n} \in \St^n(\cH_{AB}, \sigma_{AB}^{\otimes n-r})$ for $r=o(n)$. Is it true that 
\begin{align}
\frac{1}{n} E(A_1^n :B_1^n )_{\rho} \overset{?}{=} \frac{1}{n} E(A_1^n :B_1^n )_{\sigma^{\otimes n}} + \frac{o(n)}{n}
\end{align}
In the following, we discuss the robustness of (a) squashed entanglement, (b) entanglement distillation, (c) entanglement cost, and (d) relative entropy of entanglement. 

\subsection{Robustness of squashed entanglement}
Above we have seen that the mutual information is robust under almost-iid states. The same argument can be extended to see that the conditional mutual information also coincides for almost-iid and iid states. 
The \emph{squashed entanglement}~\cite{CW04} is an entanglement measure that is based on the conditional mutual information. Given a biparitite density matrix $\rho_{AB} \in \St(\cH_{AB})$, the squashed entanglement is defined as 
\begin{align}
E_{sq}(A:B)_{\rho}:=\frac{1}{2} \inf_{\rho_{ABE} \in \St(\cH_{ABE})} \big \{I(A:B|E)_{\rho} : \tr_{E}[\rho_{ABE}]=\rho_{AB} \big \} \, ,
\end{align}
where there is no bound on the dimension of $E$. It features many desirable properties such as being additive on tensor products and superadditive in general. We next show that the squashed entanglement for almost-iid and iid states coincide.
\begin{corollary} \label{cor_Squashed_is_robust}
Let $\sigma_{AB} \in \St(\cH_{AB})$ and $\rho_{A_1^n B_1^n} \in \St^n(\cH_{AB}, \sigma_{AB}^{\otimes n-r})$ for $r=o(n)$. Then
\begin{align} \label{eq_question_Esq}
\frac{1}{n} E_{sq}(A_1^n :B_1^n )_{\rho} = E_{sq}(A :B )_{\sigma} + \frac{o(n)}{n} \, .
\end{align}
\end{corollary}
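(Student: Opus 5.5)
We prove the two inequalities separately, using the additivity and superadditivity of $E_{sq}$ together with \cref{thm_entropy_of_almost_iid} and \cref{prop_distance}.

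\textbf{Upper bound.} Fix any extension $\sigma_{ABE}$ of $\sigma_{AB}$ that is $\delta$-optimal, i.e.\ $\frac12 I(A:B|E)_\sigma \leq E_{sq}(A:B)_\sigma + \delta$. The idea is to lift the almost-iid structure of $\rho_{A_1^nB_1^n}$ to a joint state on $A_1^nB_1^nE_1^n$. Let $\ket{\theta}_{ABEF}$ purify $\sigma_{ABE}$; it is then also a purification of $\sigma_{AB}$ with purifying system $EF$. By \cref{rmk_properties_almost_iid_states}(a), which allows any purification, there is an extension $\rho_{A_1^nB_1^nE_1^nF_1^n}$ satisfying (i) and (ii) of \cref{def_almost_product_state_mixed} with respect to $\ket{\theta}$. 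Tracing out $F_1^n$ gives
\[
\rho_{A_1^nB_1^nE_1^n} \in \St^n(\cH_{ABE},\sigma_{ABE}^{\otimes n-r}),
\]
which is an extension of $\rho_{A_1^nB_1^n}$. Expand the conditional mutual information as
\[
I(A_1^n:B_1^n|E_1^n) = H(A_1^n|E_1^n) + H(B_1^n|E_1^n) - H(A_1^nB_1^n|E_1^n).
\]
Each term is a conditional entropy of an almost-iid state: for instance $\rho_{A_1^nE_1^n}\in\St^n(\cH_{AE},\sigma_{AE}^{\otimes n-r})$ by \cref{rmk_properties_almost_iid_states}(e). Applying \cref{thm_entropy_of_almost_iid} to each of the three terms yields
\[
\tfrac1n I(A_1^n:B_1^n|E_1^n)_\rho = I(A:B|E)_\sigma + \tfrac{o(n)}{n}.
\]
Hence $\frac1n E_{sq}(A_1^n:B_1^n)_\rho \le E_{sq}(A:B)_\sigma + \delta + \frac{o(n)}{n}$.

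The delicate point is uniformity. The $o(n)$ in \cref{thm_entropy_of_almost_iid} depends on the dimensions, and here $d_E$ is unbounded but fixed once $\delta$ is fixed. So we first fix $\delta$, let $n\to\infty$, and then send $\delta\to0$. This gives a $\limsup$ statement, which is exactly what the $o(n)/n$ claim means.

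\textbf{Lower bound.} Superadditivity of $E_{sq}$ on $A_1^n B_1^n$ across blocks, combined with permutation invariance, gives
\[
E_{sq}(A_1^n:B_1^n)_\rho \ge \lfloor n/s\rfloor\, E_{sq}(A_1^s:B_1^s)_\rho .
\]
Then use continuity of $E_{sq}$: it is asymptotically continuous (Alicki--Fannes type), with error $O(\eps\, s\log d_{AB} + h(\eps))$ when the trace distance is $\eps$. By \cref{prop_distance}, $\norm{\rho_{A_1^sB_1^s}-\sigma_{AB}^{\otimes s}}_1\le 4\sqrt{rs/n}$. Additivity gives $E_{sq}(\sigma^{\otimes s}) = s\,E_{sq}(\sigma)$. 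Combining,
\[
\tfrac1n E_{sq}(A_1^n:B_1^n)_\rho \ \ge\ \tfrac{\lfloor n/s\rfloor s}{n}\Big(E_{sq}(\sigma) - O\big(\sqrt{rs/n}\,\log d_{AB}\big) - \tfrac1s\,O(1)\Big).
\]
Choose $s\to\infty$ with $rs/n\to0$, for example $s=\sqrt{n/r}$ (possible since $r=o(n)$). Then every error term vanishes, giving the lower bound.

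I expect the main obstacle to be the lower bound. It needs superadditivity and continuity of squashed entanglement, including careful dimension dependence. It also needs a block size $s$ that balances the error $\sqrt{rs/n}\,\log d$, which grows like $s$ inside the continuity bound and is then normalized by $s$, against the $1/s$ term. If continuity with linear-in-$s$ dimension dependence turns out to be insufficient, the fallback is to bound $E_{sq}$ from below via a direct chain-rule argument on an optimal extension, mimicking the SSA step used in the proof of \cref{thm_entropy_of_almost_iid}.
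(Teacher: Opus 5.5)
Your proposal is correct and follows essentially the paper's proof. In the upper bound, your purification-and-trace-out argument is exactly \cref{lem_amost_iid_extension_GM}, followed by termwise application of \cref{thm_entropy_of_almost_iid}. The lower bound uses superadditivity, permutation invariance, continuity of $E_{sq}$ and \cref{prop_distance}.

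There are two small differences. First, the paper takes $s=1$ (single-site superadditivity), and this already suffices because the continuity error $h(\eps_n)$ vanishes as $\eps_n=4\sqrt{r/n}\to 0$. Your growing block size $s\to\infty$ is therefore harmless but unnecessary. Second, you tacitly assume that the near-optimal extension $\sigma_{ABE}$ has finite-dimensional $E$ when you say $d_E$ is fixed once $\delta$ is fixed. This is needed for \cref{thm_entropy_of_almost_iid}, whose error depends on $d_E$. The paper justifies it by truncating $E$ and using the fact that continuity of conditional entropy does not depend on the dimension of the conditioning system.
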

\begin{proof}
Let $d_A := \dim(\cH_A)$ and $d_B := \dim(\cH_B)$.
We can employ the permutation invariance of $\rho$ and the superadditivity of the squashed entanglement~\cite[Proposition~4]{CW04} to write for $\eps_n:=4 \sqrt{\frac{r}{n}}$
\begin{align}
\frac{1}{n} E_{sq}(A_1^n :B_1^n )_{\rho}
\overset{\textnormal{superadditivity}}&{\geq} \frac{1}{n}  \sum_{i=1}^n E_{sq}(A_i:B_i)_{\rho} \\
\overset{\textnormal{perm.~inv.}}&{=}E_{sq}(A:B)_{\rho} \\
\overset{\textnormal{continuity \& \Cshref{prop_distance}}}&{\geq}  E_{sq}(A:B)_{\sigma} - 12 \eps_n \log(d_A d_B) - 6  h(\eps_n) \\
&=  E_{sq}(A:B)_{\sigma} + \frac{o(n)}{n}\, ,
\end{align}
where the continuity of squashed entanglement follows from the continuity of the conditional entropy~\cite{AF03} as explained in~\cite[Section~IV]{CW04}.

It thus remains to prove the other direction. 
For any $\xi >0$ there exists an extension $\sigma_{ABE}$ or $\sigma_{AB}$ with $d_E := \dim(\cH_E) <\infty$ such that 
\begin{align} \label{eq_optimizing_sequence}
\Big|E_{sq}(A:B)_{\sigma} - \frac{1}{2}I(A:B|E)_{\sigma}\Big|\leq \xi \, .
\end{align} 
To see this, note that by the definition of the squashed entanglement there exists an extension $\sigma'_{ABE}$ of $\sigma_{AB}$ (with possibly unbounded $E$-system) such that
\begin{align} \label{eq_sq_def_cont}
\Big|E_{sq}(A:B)_{\sigma} - \frac{1}{2}I(A:B|E)_{\sigma'}\Big|\leq \frac{\xi}{2} \, \, . 
\end{align}
Choose a finite-dimensional projector $\Pi_E$ on the $E$-system such that $\tr[\Pi_E \sigma'_{ABE}]=1-\eps$ for some $\eps>0$. Let
\begin{align}
\sigma_{ABE} := \Pi_E \sigma'_{ABE} \Pi_E^\dagger + \big(1-\tr[\Pi_E \sigma'_{ABE}] \big) \proj{e}_{ABE}\, ,
\end{align}
where $\ket{e}$ is a state orthogonal to the support of $\Pi_E$. 
By the continuity of the conditional entropy~\cite{AF03} we can choose $\eps>0$ such that 
\begin{align} \label{eq_cont_CMI_Esq}
|I(A:B|E)_{\sigma'} - I(A:B|E)_{\sigma}| \leq \xi \, ,
\end{align}
where we used that the continuity of the conditional entropy does not depend on the dimension of the conditioning system.
The triangle inequality implies
\begin{align}
\Big|E_{sq}(A:B)_{\sigma} - \frac{1}{2}I(A:B|E)_{\sigma}\Big|
&\leq \Big|E_{sq}(A:B)_{\sigma} - \frac{1}{2}I(A:B|E)_{\sigma'}\Big| + \Big| \frac{1}{2}I(A:B|E)_{\sigma'} - \frac{1}{2}I(A:B|E)_{\sigma}\Big| \nonumber \\
\overset{\textnormal{\Cshref{eq_sq_def_cont,eq_cont_CMI_Esq}}}&{\leq} \xi \, ,
\end{align}
which thus justifies~\cref{eq_optimizing_sequence}.

Due to~\cref{lem_amost_iid_extension_GM}, there exists an extension $\rho_{A_1^n B_1^n E_1^n}$ of $\rho_{A_1^n B_1^n}$ which is an $\binom{n}{r}$-almost-iid state in $\sigma_{ABE}$. Hence,
\begin{align}
\frac{1}{n} E_{sq}(A_1^n :B_1^n )_{\rho}
&\leq \frac{1}{2n} I(A_1^n:B_1^n|E_1^n)_{\rho} \\
&=\frac{1}{2n} H(A_1^n| E_1^n)_{\rho} - \frac{1}{2n} H(A_1^n|B_1^n E_1^n)_{\rho}  \\
\overset{\textnormal{\Cshref{thm_entropy_of_almost_iid}}}&{=}\frac{1}{2}H(A|E)_{\sigma} - \frac{1}{2}H(A|BE)_{\sigma} + \frac{o(n)}{n} \\
&=\frac{1}{2}I(A:B|E)_{\sigma} + \frac{o(n)}{n} \\
\overset{\textnormal{\Cshref{eq_optimizing_sequence}}}&{\leq}  E_{sq}(A:B)_{\sigma} + \xi  + \frac{o(n)}{n} \, .
\end{align}
Since this holds for any $\xi>0$ we can consider $\xi \to 0$, which concludes the proof.
\end{proof}

\subsection{Robustness of entanglement distillation and entanglement cost}
Let $\ket{\Phi}_{AB}$ denote an entangled Bell state.
Given a bipartite density matrix $\rho_{AB} \in \St(\cH_{AB})$, recall the definitions of \emph{entanglement distillation}~\cite{bennett96,BBPSSW96,bennett_mixed-state_1996} 
\begin{align} \label{eq_def-entanglement-distillation}
E_D(A:B)_{\rho}:= \lim_{\eps \to 0} \lim_{n \to \infty} \sup \Big \{ \frac{m}{n}:  \inf_{\cP_n \in \LOCC} \frac{1}{2} \norm{ \cP_n(\rho^{\otimes n}) - \proj{\Phi}^{\otimes m} }_1 \leq \eps \Big\} 
\end{align}
and \emph{entanglement cost}~\cite{HHT01}
\begin{align}  \label{eq_def-entanglement-cost}
E_C(A:B)_{\rho}:= \lim_{\eps \to 0} \lim_{n \to \infty} \inf \Big \{ \frac{m}{n} : \inf_{\cP_n \in \LOCC} \frac{1}{2} \norm{ \cP_n(\proj{\Phi}^{\otimes m}) - \rho^{\otimes n} }_1 \leq \eps \Big\}  \, .
\end{align}

\begin{question}
Let $\sigma_{AB} \in \St(\cH_{AB})$ and $\rho_{A_1^n B_1^n} \in \St^n(\cH_{AB}, \sigma_{AB}^{\otimes n-r})$ for $r=o(n)$. Is it true that  
\begin{align} \label{eq_question_ED}
\frac{1}{n} E_D(A_1^n :B_1^n )_{\rho} \overset{?}{=} E_D(A :B )_{\sigma} + \frac{o(n)}{n}
\end{align}
\end{question}
\noindent 
As discussed in~\cite{MR_future}, there are strong indications that one direction of~\cref{eq_question_ED} holds, namely that
\begin{align} \label{eq_GR_future}
\frac{1}{n} E_D(A_1^n :B_1^n )_{\rho} \geq E_D(A :B )_{\sigma} + \frac{o(n)}{n}\, .
\end{align}
Whether the other direction holds also remains an open question.

The equivalent question for the entanglement cost asks:
\begin{question}
Let $\sigma_{AB} \in \St(\cH_{AB})$ and $\rho_{A_1^n B_1^n} \in \St^n(\cH_{AB}, \sigma_{AB}^{\otimes n-r})$ for $r=o(n)$. Is it true that 
\begin{align} \label{eq_open_question_EC}
\frac{1}{n} E_C(A_1^n :B_1^n )_{\rho} \overset{?}{=} E_C(A :B )_{\sigma} + \frac{o(n)}{n}
\end{align}
\end{question}
\noindent
However, none of the two directions of~\cref{eq_open_question_EC} are known to hold.

At this point, we emphasize that, unlike squashed entanglement, already the definitions of both entanglement cost and entanglement distillation rely on a tensor power (iid) structure. This may suggest that, rather than asking about the robustness of~\cref{eq_def-entanglement-distillation,eq_def-entanglement-cost}, one should incorporate robustness directly into the definition itself. One may therefore wonder how these notions would change if an almost-iid structure were built into the definition from the outset. In~\cite{MR_future}, the authors investigate this question by introducing new asymptotic state transformation rates that avoid the standard iid assumption. We refer the interested reader to that paper for further details.

\subsection{Robustness of relative entropy of entanglement }
A popular measure to quantify the amount of entanglement is the \emph{relative entropy of entanglement}~\cite{VPRK97} defined as
\begin{align}  \label{eq_def_RE}
E_R(A:B)_{\rho}:= \min_{\sigma_{AB} \in \mathrm{SEP}(A:B)} D(\rho_{AB} \| \sigma_{AB}) \, ,
\end{align}
where $\mathrm{SEP}(A:B):=\conv\{ \proj{\phi}_A \otimes \proj{\varphi}_B : \ket{\phi}_A \in \cH_A, \ket{\varphi}_B \in \cH_B, \braket{\phi|\phi}=\braket{\varphi|\varphi}=1\}$.
It is known~\cite{VW01} that the relative entropy of entanglement is not additive under the tensor product, which justifies the definition of a regularized version $E^\infty_R(A:B)_{\rho}:=\lim_{k \to \infty} \frac{1}{k} E_R(A_1^k:B_1^k)_{\rho^{\otimes k}}$. The limit in the regularization exists due to Fekete's subadditivity lemma.
\begin{question} \label{question_ER}
Let $\sigma_{AB} \in \St(\cH_{AB})$ and $\rho_{A_1^n B_1^n} \in \St^n(\cH_{AB}, \sigma_{AB}^{\otimes n-r})$ for $r=o(n)$. Is it true that 
\begin{align} \label{eq_ER_question}
\frac{1}{n} E_R(A_1^n :B_1^n )_{\rho} \overset{?}{=} \frac{1}{n} E_R(A_1^n :B_1^n )_{\sigma^{\otimes n}} + \frac{o(n)}{n}
\end{align}
\end{question}

If~\cref{eq_ER_question} were true, this would save the original proof of the generalized quantum Stein's lemma~\cite{haya_stein_25,ludo25} by Brand\~ao and Plenio~\cite{brandao_Stein_10} (see also~\cite{Berta2023gapinproofof,stein_nature_24}).
One direction of~\cref{eq_ER_question} follows from the results developed in this paper. To see this, choose $s_n$ a monotonically increasing sequence of integers such that $s_n r = o(n)$ and $\lim_{n \to \infty} s_n = \infty$. Let $k_n = \lceil n /s_n \rceil$.
Note that $n \leq s_n k_n \leq n+s_n$.
%\leq n + s_n$ and therefore $n \leq s_n(k_n +1)$. 
Hence, using the monotonicity of the entanglement of formation under partial trace,
\begin{align}
\frac{1}{n} E_R(A_1^n: B_1^n)_{\rho}
&\leq \frac{1}{n} E_R\big(A_1^{s_n k_n}: B_1^{s_n k_n}\big)_{\rho} \\
&\leq \frac{1}{s_n k_n } E_R\big(A_1^{s_n k_n}: B_1^{s_n k_n}\big)_{\rho} + \frac{s_n}{n}\log d_{A} d_B \, , \label{eq_step_ER_0}
\end{align}
where the final step uses $E_R(A^m:B^m)_{\rho} \leq m \log(d_A d_B)$ and $s_n = o(n)$. In the following steps, we omit the subscripts $n$ for better readability.
Let $\omega_s \in \arg \min_{\tau_{A_1^s B_1^s} \in \mathrm{SEP}} D(\rho_{A_1^s B_1^s} \| \tau_{A_1^s B_1^s})$. Then for $\rho_s = \tr_{n-s}[\rho_n]$ we have
\begin{align}
\frac{1}{n} E_R(A_1^n: B_1^n)_{\rho}
\overset{\textnormal{\Cshref{eq_step_ER_0}}}&{\leq} \frac{1}{sk} E_R\big(A_1^{sk}: B_1^{sk}\big)_{\rho} + \frac{o(n)}{n} \\
 &= \frac{1}{sk} \min_{\tau_{A_1^{sk} B_1^{sk}} \in \mathrm{SEP}} D(\rho_{A_1^{sk} B_1^{sk} } \| \tau_{A_1^{sk} B_1^{sk}}) + \frac{o(n)}{n} \\
 &\leq \frac{1}{sk} D\big(\rho_{sk} \| (\omega_s)^{\otimes k} \big) + \frac{o(n)}{n} \\
 &= -\frac{1}{sk} H(\rho_{sk} ) - \frac{1}{s} \tr[\rho_s \log \omega_s] + \frac{o(n)}{n} \\
  \overset{\textnormal{\Cshref{thm_entropy_of_almost_iid}}}&{=} -\frac{1}{s} H(\sigma^{\otimes s}) - \frac{1}{s} \tr[\rho_s \log \omega_s]  + \frac{o(n)}{n} \\
 \overset{\textnormal{continuity \& \Cshref{prop_distance}}}&{\leq}\frac{1}{s} H(\rho_s) - \frac{1}{s} \tr[\rho_s \log \omega_s] + \frac{o(n)}{n}  \\
 &=\frac{1}{s} E_R(A_1^s: B_1^s)_{\rho} +  \frac{o(n)}{n} \\
 \overset{\textnormal{continuity \& \Cshref{prop_distance}}}&{\leq}\frac{1}{s} E_R(A_1^s: B_1^s)_{\sigma^{\otimes s}} +  \frac{o(n)}{n} \\
 &=\frac{1}{n} E_R(A_1^n: B_1^n)_{\sigma^{\otimes n}} +  \frac{o(n)}{n} \, ,
\end{align}
 where the continuity of the von Neumann entropy and the relative entropy of entanglement can be found in~\cite{audenaert07,petz_Entropybook,win16}. In the last step, we use that $\lim_{n \to \infty} s_n = \infty$ and that the limit in the RHS of~\cref{eq_ER_question} exists due to Fekete's subadditivity lemma.
 
 The other direction of~\cref{eq_ER_question} appears more complicated and remains an open question.

\paragraph{Acknowledgements} We thank Fernando Brand\~ao for his talk on almost-iid states at the SwissMAP Research Station (SRS) conference in Les Diablerets 2024, which motivated us to write this paper. We further thank Frédéric Dupuis and Ludovico Lami for insightful discussions on this topic at the same conference.
GM and RR acknowledge support from the NCCR SwissMAP, the ETH Zurich Quantum Center, the SNSF project No.~20QU-1 225171, and the CHIST-ERA project MoDIC.

%%%%%%%%%%%%%%%%%%%%%%%%%%%%%%%%%%%%%%%%%%%%%%%%%%%%%%%%%%%%%%%%%%%%
%%%%%%%%%%%%%%%%%%%%%%%%%%%%%%%%%%%%%%%%%%%%%%%%%%%%%%%%%%%%%%%%%%%%
\appendix
\section*{Appendix}
%%%%%%%%%%%%%%%%%%%%%%%%%%%%%%%%%%%%%%%%%%%%%%%%%%%%%%%%%%%%%%%%%%%%
\section{Justification of~\cref{rmk_fernando_def}} \label{app_justification_no_fernando}
% Recall the definition from~\cite{brandao_Stein_10} to call a (mixed) density matrix $\rho_{A_1^n}$ almost-iid along $\sigma_A$ if there exist purifications $\ket{\psi}_{A_1^n E_1^n}$ and $\ket{\theta}_{AB}$ of $\rho_{A_1^n}$ and $\sigma_A$, respectively, such that $\ket{\psi}_{A_1^n E_1^n} \in \mathrm{Sym}^n(\cH_{AE}) \cap \mathrm{span}(\cV(\cH_{AB}^{\otimes n},\ket{\theta}_{AB}^{\otimes n -r}))$. 
To justify the assertion of~\cref{rmk_fernando_def}, we need to show that for all purifications $\ket{\psi}_{A_1^2 E_1^2}$ of $\rho_{A_1^2}$ and for all purifications $\ket{\theta}_{AE}$ of $\ket{0}_A$ it follows that $\ket{\psi}_{A_1^2 E_1^2} \not \in \mathrm{Sym}^2(\cH_{AE}) \cap \mathrm{span}\,\cV(\cH_{AE}^{\otimes 2},\ket{\theta}_{AE})$

To see this note that, because $\ket{0}_A$ is pure, $\ket{\theta}_{AE} = \ket{0}_A \otimes \ket{\vartheta}_E$. 
% \begin{align}
% \mathrm{span}\big(\cV(\cH_{AE}^{\otimes 2},\ket{\theta}_{AE})\big)
% &=\mathrm{span} \{ \ket{\theta}_{AE}\ket{0}_A \ket{0}_E,\ket{\theta}_{AE}\ket{0}_A \ket{1}_E, \ldots, \ket{\theta}_{AE}\ket{0}_A \ket{d_E -1}_E,  \nonumber \\
% &\hspace{13.5mm} \ket{\theta}_{AE}\ket{1}_A \ket{0}_E,\ket{\theta}_{AE}\ket{1}_A \ket{1}_E, \ldots, \ket{\theta}_{AE}\ket{1}_A \ket{d_E -1}_E,  \nonumber \\
% &\hspace{13.5mm} \ket{0}_A \ket{0}_E\ket{\theta}_{AE},\ket{0}_A \ket{1}_E\ket{\theta}_{AE}, \ldots, \ket{0}_A \ket{d_E-1}_E\ket{\theta}_{AE},  \nonumber \\
% &\hspace{13.5mm} \ket{1}_A \ket{0}_E\ket{\theta}_{AE},\ket{1}_A \ket{1}_E\ket{\theta}_{AE}, \ldots, \ket{1}_A \ket{d_E-1}_E\ket{\theta}_{AE} \} \, .
% \end{align}
Similarly, because $\rho_{A_1^2} = \proj{\Psi^{-}}$ is pure, any purification on $E$ must be of the form $\ket{\Psi^{-}}_{A_1 A_2} \otimes \ket{\varphi}_{E_1 E_2}=:\ket{\psi}_{A_1^2 E_1^2}$. 
To ensure that $\rho_{A_1^2}$ is an almost-iid state according to the definition from~\cite{brandao_Stein_10}, we need $\ket{\psi}_{A_1^2 E_1^2} \in \mathrm{Sym}^2(\cH_{AE}) \cap \mathrm{span}\,\cV(\cH_{AE}^{\otimes 2},\ket{\theta}_{AE})$. Thus, the swap operation $\pi$ must satisfy
\begin{align}
\pi \ket{\psi}_{A_1^2 E_1^2} 
= (\pi \ket{\Psi^{-}}_{A_1 A_2} ) \otimes (\pi \ket{\varphi}_{E_1 E_2}) 
= - \ket{\Psi^{-}}_{A_1 A_2} \otimes (\pi \ket{\varphi}_{E_1 E_2})
\overset{!}{=} \ket{\Psi^{-}}_{A_1 A_2} \otimes  \ket{\varphi}_{E_1 E_2} \, .
\end{align}
This yields $\pi \ket{\varphi}_{E_1 E_2} = - \ket{\varphi}_{E_1 E_2}$, i.e., $\ket{\varphi}$ is anti-symmetric. Expressing this in an orthonormal basis $\{\ket{i}_E \}_i$ of $E$ with $\ket{0}_E=\ket{\vartheta}_E$ gives $\ket{\varphi}_{E_1 E_2}= \sum_{i,j=0}^{d_E-1} \alpha_{i,j} \ket{i}_{E_1} \ket{j}_{E_2}$ with $\alpha_{i,j}=-\alpha_{j,i}$ for all $i,j \in \{ 0,\ldots,d_E-1\}$. Hence,
\begin{align}
\ket{\psi}_{A_1^2 E_1^2} &= \frac{1}{\sqrt{2}} \sum_{i<j} \alpha_{i,j} \big( \ket{0}_{A_1} \ket{i}_{E_1} \ket{1}_{A_2} \ket{j}_{E_2} - \ket{0}_{A_1} \ket{j}_{E_1} \ket{1}_{A_2} \ket{i}_{E_2} \nonumber \\
&\hspace{25mm} - \ket{1}_{A_1} \ket{i}_{E_1} \ket{0}_{A_2} \ket{j}_{E_2} + \ket{1}_{A_1} \ket{j}_{E_1} \ket{0}_{A_2} \ket{i}_{E_2}\big) \, ,
\end{align}
which cannot be inside $\mathrm{span}\,\cV(\cH_{AE}^{\otimes 2},\ket{\theta}_{AE})$. Thus, the state is not an almost-iid state according to the definition from~\cite{brandao_Stein_10}.

%%%%%%%%%%%%%%%%%%%%%%%%%%%%%%%%%%%%%%%%%%%%%%%%%%%%%%%%%%%%%%%%%%%%
\section{Properties of almost-iid states} \label{app_almost_iid_mixed}
\begin{lemma} \label{eq_def_almost_iid_mixed_states}
Let $\rho_{A_1^n} \in \St^n(\cH_A,\sigma_A^{\otimes n -r})$ with purification $\ket{\theta}_{AE}$ of $\sigma_A$ according to~\cref{def_almost_product_state_mixed}. 
Then, any other purification $\ket{\tilde \theta}_{AR}$ of $\sigma_A$ would also satisfy~\cref{def_almost_product_state_mixed}.
\end{lemma}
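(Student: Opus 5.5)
The plan is to transport the extension $\rho_{A_1^n E_1^n}$ to a new extension $\tilde\rho_{A_1^n R_1^n}$. We do this by applying an $n$-fold tensor power of a suitable channel $\Lambda: \St(\cH_E) \to \St(\cH_R)$ acting only on the purifying systems. The channel is built so that it maps $\ket{\theta}_{AE}$ exactly to $\ket{\tilde\theta}_{AR}$. No assumption on the relative dimensions of $\cH_E$ and $\cH_R$ is needed, because we use a channel rather than an isometry extension.

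\textbf{Step 1: relating the two purifications.} Write Schmidt decompositions with the same eigenbasis of $\sigma_A$:
\begin{align}
\ket{\theta}_{AE}=\sum_{i}\sqrt{\lambda_i}\ket{a_i}_A\ket{e_i}_E, \qquad \ket{\tilde\theta}_{AR}=\sum_{i}\sqrt{\lambda_i}\ket{a_i}_A\ket{r_i}_R .
\end{align}
Here $\{\ket{e_i}\}$ and $\{\ket{r_i}\}$ are orthonormal families indexed by $i$ with $\lambda_i>0$. (If $\sigma_A$ has degenerate eigenvalues, we first fix one eigenbasis of $\sigma_A$ and expand both purifications in it.) Since both families have the same cardinality, namely the rank of $\sigma_A$, the operator $V:=\sum_i \ket{r_i}\bra{e_i}_E$ is a partial isometry from $\cH_E$ to $\cH_R$. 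It satisfies $(\id_A\otimes V)\ket{\theta}=\ket{\tilde\theta}$. Let $P=V^\dagger V$ be the projector onto $\mathrm{span}\{\ket{e_i}\}$. Complete it by an orthonormal basis $\{\ket{f_k}\}$ of $(\id-P)\cH_E$, and fix any unit vector $\ket{r_0}\in\cH_R$. Define $\Lambda$ by Kraus operators $K_0:=V$ and $K_k:=\ket{r_0}\bra{f_k}$. Then $\sum K^\dagger K=P+(\id-P)=\id_E$, so $\Lambda$ is CPTP. Crucially, $(\id_A\otimes K_k)\ket{\theta}=0$ for $k\ge1$, since $\ket{\theta}$ has $E$-support in $P\cH_E$.

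\textbf{Step 2: the new extension.} Set $\tilde\rho_{A_1^nR_1^n}:=(\id_{A}\otimes\Lambda)^{\otimes n}(\rho_{A_1^nE_1^n})$, with $\Lambda$ acting on each $E_i$. Since $\Lambda$ is trace preserving and acts only on $E$, we get $\tr_{R_1^n}\tilde\rho=\tr_{E_1^n}\rho=\rho_{A_1^n}$, so $\tilde\rho$ is an extension. Property~\eqref{it_first_def} holds because a tensor-power map commutes with permutations of the pairs $(A_i,E_i)\mapsto(A_i,R_i)$: $\Lambda^{\otimes n}(\pi X\pi^\dagger)=\pi\Lambda^{\otimes n}(X)\pi^\dagger$.

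\textbf{Step 3: the support condition~\eqref{it_second_def}.} This is the main point. Write $\rho_{A_1^nE_1^n}=\sum_{t,t'}\beta_{t,t'}\ket{\Psi_t}\bra{\Psi_{t'}}$ as in \cref{eq_almost-iid-form}. Then
\begin{align}
\tilde\rho=\sum_{\mathbf k}K_{\mathbf k}\,\rho\, K_{\mathbf k}^\dagger, \qquad K_{\mathbf k}:=\bigotimes_{j=1}^n(\id_A\otimes K_{k_j}).
\end{align}
The support of a sum of positive operators is the span of the supports. Moreover, $\supp(K\rho K^\dagger)\subseteq K\,\supp(\rho)\subseteq\mathrm{span}\{K\ket{\Psi_t}\}_t$. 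It therefore suffices to show that each $K_{\mathbf k}\ket{\Psi_t}$ lies in $\mathrm{span}\,\cV(\cH_{AR}^{\otimes n},\ket{\tilde\theta}^{\otimes n-r})$.

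Take $\ket{\Psi_t}=\pi(\ket{\theta}^{\otimes n-r}\otimes\ket{\Omega})$. Each factor of $K_{\mathbf k}$ sitting on a $\ket{\theta}$ position maps it either to $\ket{\tilde\theta}$ (if $k_j=0$) or to $0$ (if $k_j\ge1$). The remaining factors map $\ket{\Omega}$ to some vector $\ket{\Omega'}\in\cH_{AR}^{\otimes r}$. Hence $K_{\mathbf k}\ket{\Psi_t}$ is either $0$ or $\pi(\ket{\tilde\theta}^{\otimes n-r}\otimes\ket{\Omega'})$, which lies in $\cV$. This proves~\eqref{it_second_def} for $\ket{\tilde\theta}_{AR}$.

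The only delicate point is Step 1 when $\dim\cH_R<\dim\cH_E$. There one cannot use an isometry, so the defect components of $\rho$ living outside $P\cH_E$ must be absorbed. The extra Kraus operators $K_k$ do exactly this, and they annihilate $\ket{\theta}$, so they never spoil the iid part.
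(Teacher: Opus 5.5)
Your proof is correct, but it takes a different route from the paper's. The paper assumes without loss of generality that $\dim\cH_E=\dim\cH_R$ and writes $\ket{\tilde\theta}=(\id_A\otimes U_E)\ket{\theta}$ for a unitary $U_E$. It then conjugates the extension by $U_E^{\otimes n}$. Permutation invariance is immediate, and each $\ket{\Psi_k}\in\cV(\cH_{AE}^{\otimes n},\ket{\theta}^{\otimes n-r})$ is sent to a vector of $\cV$ built on $\ket{\tilde\theta}$.

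You instead complete the partial isometry $V$ to a CPTP map $\Lambda$ whose extra Kraus operators annihilate $\ket{\theta}$. On the iid positions every Kraus branch then yields either $\ket{\tilde\theta}$ or zero, while the defects are absorbed into some $\ket{\Omega'}$.

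The unitary route is shorter and preserves the whole decomposition: the vectors $U_E^{\otimes n}\ket{\Psi_t}$ stay orthonormal with the same coefficients $\beta_{t,t'}$. However, its reduction to equal dimensions is only immediate when $\dim\cH_R\geq\dim\cH_E$, via an isometric embedding. The paper's justification through the rank of $\sigma_A$ overlooks that the defect components of $\rho_{A_1^nE_1^n}$ may occupy directions of $\cH_E$ outside $\supp(\tr_A[\proj{\theta}])$, so shrinking $E$ is not free.

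Your channel construction works for arbitrary $\cH_R$, including $\dim\cH_R=\mathrm{rank}(\sigma_A)$. The only cost is losing a unitary relation between the two extensions, which the definition does not require.
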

\begin{proof}
Without loss of generality, assume $\dim(E)=\dim(R)$. This can be done since the rank of the reduced state of any purification on the purifying system is bounded by the rank of $\sigma_A$.
Since all purifications are then equal up to unitaries on the purifying system, we have $\ket{\tilde \theta}_{AE} = (\id_A \otimes U_E)\ket{\theta}_{AE}$ for some unitary $U_E$ on $E$. For $\rho_{A_1^n E_1^n}$ being the extension of $\rho_{A_1^n}$ according to~\cref{def_almost_product_state_mixed} we define
\begin{align}
\tilde \rho_{A_1^n E_1^n}:= (\id_{A_1^n} \otimes U^{\otimes n}_E) \rho_{A_1^n E_1^n }(\id_{A_1^n} \otimes (U_E^\dagger)^{\otimes n}) \, .
\end{align}
Clearly $\tilde \rho_{A_1^n E_1^n}$ is an extension of $\rho_{A_1^n}$, i.e.,~$\tilde \rho_{A_1^n E_1^n} \in \St(\cH_{AE}^{\otimes n})$ and $\tr_{E_1^n}[\tilde \rho_{A_1^n E_1^n}] = \rho_{A_1^n}$. Thus, it remains to show that this extension satisfies the two properties in~\cref{def_almost_product_state_mixed}.

First, we note that $\tilde \rho_{A_1^n E_1^n}$ is permutation-invariant. To see this, let $\pi$ be a permutation that swaps $(A_i,E_i) \leftrightarrow (A_j,E_j)$. Then
\begin{align}
\pi \tilde \rho_{A_1^n E_1^n} \pi^\dagger 
&=(\id_{A_1^n} \otimes U^{\otimes n}_E) \pi \rho_{A_1^n E_1^n } \pi^\dagger  (\id_{A_1^n} \otimes (U_E^\dagger)^{\otimes n}) \\
&=(\id_{A_1^n} \otimes U^{\otimes n}_E)  \rho_{A_1^n E_1^n }  (\id_{A_1^n} \otimes (U_E^\dagger)^{\otimes n}) \\
&=\tilde \rho_{A_1^n E_1^n} \, .
\end{align}
Second, we have 
\begin{align}
\tilde \rho_{A_1^n E_1^n} 
&= (\id_{A_1^n} \otimes U^{\otimes n}_E) \rho_{A_1^n E_1^n }(\id_{A_1^n} \otimes (U_E^\dagger)^{\otimes n})\\
&=\sum_{k,\ell \in \cT} \beta_{k,\ell} \underbrace{(\id_{A_1^n} \otimes U^{\otimes n}_E) \ket{\Psi_k}}_{\ket{\tilde \Psi_k}} \underbrace{\bra{\Psi_{\ell}} (\id_{A_1^n} \otimes (U_E^\dagger)^{\otimes n})}_{\bra{\tilde \Psi_{\ell}}} \, .  
\end{align}
Note that $\ket{\tilde \Psi_k},\ket{\tilde \Psi_\ell} \in \cV(\cH_{AE}^{\otimes n},\ket{\tilde \theta}_{AE} )$.
\end{proof}

\begin{lemma} \label{eq_ONB_almost_iid_mixed_states}
There exists an orthonormal basis $\{\ket{\Psi_t}\}_{t \in \cT}$ of $\mathrm{span}\,\cV(\cH_{AE}^{\otimes n},\ket{\theta}^{\otimes n-r})$ with vectors $\ket{\Psi_t} \in \cV(\cH_{AE}^{\otimes n},\ket{\theta}^{\otimes n-r})$ for all $t \in \cT$, and with
\begin{align} \label{eq_lemma_sizeT}
    |\cT| \leq \binom{n}{r}\,d_{AE}^{\,r} \leq 2^{n h(r/n)}\, d_{AE}^{\,r},
\end{align}
where $d_{AE} = \dim(\cH_{AE})$, and $h(x):=-x\log x - (1-x)\log(1-x)$ for $x \in [0,1]$ is the binary entropy function.
\end{lemma}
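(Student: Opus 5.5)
The plan is to exhibit an explicit spanning family drawn from $\cV(\cH_{AE}^{\otimes n},\ket{\theta}^{\otimes n-r})$ that is already orthonormal, has the right cardinality, and spans the whole space. Fix an orthonormal basis $\{\ket{e_0},\ldots,\ket{e_{D-1}}\}$ of $\cH_{AE}$ (with $D=d_{AE}$) chosen so that $\ket{e_0}=\ket{\theta}$. Every product basis vector $\ket{e_{i_1}}\otimes\cdots\otimes\ket{e_{i_n}}$ with at most $r$ indices $i_k\neq 0$ has at least $n-r$ factors equal to $\ket{\theta}$. It is therefore of the form $\pi(\ket{\theta}^{\otimes n-r}\otimes\ket{\Omega^{(r)}})$, where $\ket{\Omega^{(r)}}$ is a product of basis vectors, so it lies in $\cV$. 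I would take $\cT$ to index exactly these vectors. Orthonormality is immediate, since they are distinct elements of a product orthonormal basis of $\cH_{AE}^{\otimes n}$.

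Next I would show that these vectors span $\mathrm{span}\,\cV$. Take any generator $\pi(\ket{\theta}^{\otimes n-r}\otimes\ket{\Omega})$ with $\ket{\Omega}\in\cH_{AE}^{\otimes r}$, and expand $\ket{\Omega}$ in the product basis $\{\ket{e_{j_1}}\otimes\cdots\otimes\ket{e_{j_r}}\}$. Each resulting term is a product basis vector of $\cH_{AE}^{\otimes n}$ in which the $n-r$ positions fixed by $\pi$ carry $\ket{e_0}$. Hence each term has at most $r$ indices different from $0$ and belongs to the family. Combined with the containment from the first step, the family is an orthonormal basis of $\mathrm{span}\,\cV$ consisting of elements of $\cV$.

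For the counting, a vector in the family is determined by a set $S$ of $r$ positions that contains every non-$\theta$ position, together with a choice of basis index at each position of $S$. Mapping the pairs $(S,\text{labels on }S)$ onto the family is surjective, which gives $|\cT|\le\binom{n}{r}D^{r}$. To avoid overcounting one could instead sum $\sum_{k\le r}\binom{n}{k}(D-1)^k$, but the surjection bound is enough. The last inequality $\binom{n}{r}\le 2^{nh(r/n)}$ is the standard entropy bound on binomial coefficients. I would prove it from $1\ge\binom{n}{r}p^r(1-p)^{n-r}$ with $p=r/n$, and the right-hand side of that inequality equals $\binom{n}{r}2^{-nh(r/n)}$.

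I expect the main obstacle to be the spanning step, in particular keeping the bookkeeping of the permutation clean. The point is that choosing $\ket{\theta}$ as a basis element is exactly what makes every generator expand into vectors that have few non-$\theta$ factors.
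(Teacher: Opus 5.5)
Your proposal is correct and follows essentially the same route as the paper. The paper also fixes a basis of $\cH_{AE}$ containing $\ket{\theta}$, expands the defect part of each generator in the product basis, and keeps the resulting orthonormal product vectors with at most $r$ non-$\theta$ factors; it bounds their number by $\binom{n}{r}d_{AE}^{\,r}$, citing the exact count $\sum_{k\le r}\binom{n}{k}(d_{AE}-1)^k$ and Cover--Thomas for $\binom{n}{r}\le 2^{nh(r/n)}$, whereas you name the basis explicitly from the start and prove that binomial bound yourself.
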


\begin{proof}
Let $\{\ket{x}\}_{x \in \cX}$ denote an orthonormal basis of $\cH_{AE}$ such that $\ket{\theta}=\ket{\bar{x}}$ for some $x \in \cX$. Any vector $\ket{\Psi}\in \mathrm{span}\,\cV(\cH_{AE}^{\otimes n},\ket{\theta}^{\otimes n-r})$ can be expanded as
\begin{align}
    \ket{\Psi} = \sum_j \alpha_j \ket{\tilde{\Psi}_j}  \, ,
\end{align}
for coefficients $\alpha_j \in \C$ and vectors \smash{$\ket{\tilde{\Psi}_j} \in \cV(\cH_{AE}^{\otimes n},\ket{\theta}^{\otimes n-r})$}. By definition, \smash{$\ket{\tilde{\Psi}_j}= \pi_j(\ket{\theta}^{\otimes n-r} \otimes \ket{\Omega_j^{(r)}})$} for some permutation $\pi_j \in \cS_n$ and vector \smash{$\ket{\Omega_j^{(r)}} \in \cH_{AE}^{\otimes r}$}. Since $\{\ket{x}\}_{x \in \cX}$ is a basis of $\cH_{AE}$, we can write 
\begin{align}
\ket{\Omega_j^{(r)}} = \sum_{\bf{x}} \beta^{(j)}_{\bf{x}} \ket{\bf{x}}
\end{align}
for all $j$, where ${\bf{x}} = (x_1,\dots,x_r)$ is an $r$-tuple of elements from $\cX$, $\ket{{\bf{x}}} = \ket{x_1}\otimes \cdots \otimes \ket{x_r}$,
and $\beta^{(j)}_{\bf{x}} \in \C$. Then
\begin{align} \label{eq_lemma_ONB_expansion}
\ket{\Psi} = \sum_{j,\bf{x}} \alpha_j \beta^{(j)}_{\bf{x}}  \,\pi_j(\ket{\theta}^{\otimes n-r} \otimes \ket{{\bf{x}}}) =: \sum_{t} \gamma_t \ket{\Psi_t} 
\end{align}
with \smash{$t = (j,{\bf x})$, $\gamma_t =\alpha_j \beta^{(j)}_{\bf{x}} $, and $\ket{\Psi_t} = \pi_j(\ket{\theta}^{\otimes n-r} \otimes \ket{{\bf{x}}})$}. Intuitively, $j$ is labeling the positions of the defects, and ${\bf x}$ labels the state at these defected positions. Note that the vectors $\ket{\Psi_t}$ are normalized for all $t$, and, for different values of $t$, they are either pairwise orthogonal, or they are equal (e.g.~if $ \ket{{\bf{x}}} = \ket{\theta}^{\otimes r}$). Thus, restricting to a maximal set of pairwise orthogonal vectors $\{\ket{\Psi_t}\}_{t\in \cT}$,~\cref{eq_lemma_ONB_expansion} shows that this set forms an orthonormal basis of $\mathrm{span}\,\cV(\cH_{AE}^{\otimes n},\ket{\theta}^{\otimes n-r})$.

To determine the size $|\cT|$\footnote{More precisely, one finds $|\cT| = \sum_{k=0}^r \binom{n}{k}(d_{AE}-1)^k$. Since $\binom{n}{k} \leq \binom{n}{k}\binom{n-k}{r-k} = \binom{n}{r}\binom{r}{k}$ for all $k\leq r$, it follows that $|\cT| \leq \binom{n}{r} \sum_{k=0}^r \binom{r}{k}(d_{AE}-1)^k = \binom{n}{r} d_{AE}^r $.}, note that for any vector in $\cV(\cH_{AE}^{\otimes n},\ket{\theta}^{\otimes n-r})$ there are $\binom{n}{r}$ possible combinations for the positions of the defects, and at each such position, the dimension of the Hilbert space is given by $d_{AE}$. Hence, we get the upper bound stated in~\cref{eq_lemma_sizeT}, where the first bound would correspond to the case where the defects are different from $\ket{\theta}$. The second inequality then follows from the relation $\binom{n}{r} \leq 2^{n h(r/n)}$, e.g.~see~\cite[Example 11.1.3]{cover}.
\end{proof}

\begin{lemma} \label{fact_Giulia_proof}
$\rho_{A_1^{n+m}} \in \St^{n+m}(\cH_A, \sigma_A^{\otimes n+m-r})$  implies $\rho_{A_1^n} \in \St^n(\cH_A, \sigma_A^{\otimes n-r})$ for any $m,n \in \N$.
\end{lemma}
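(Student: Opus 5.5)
We work directly from \cref{def_almost_product_state_mixed}. Let $\ket{\theta}_{AE}$ be a purification of $\sigma_A$ and $\rho_{A_1^{n+m}E_1^{n+m}}$ the extension of $\rho_{A_1^{n+m}}$ with properties (i) and (ii) for $n+m$ systems and defect $r$. The natural candidate extension of $\rho_{A_1^n}$ is the marginal
\[
\rho_{A_1^nE_1^n}:=\tr_{A_{n+1}^{n+m}E_{n+1}^{n+m}}\big[\rho_{A_1^{n+m}E_1^{n+m}}\big],
\]
with the same purification $\ket{\theta}_{AE}$. It is an extension of $\rho_{A_1^n}$ because partial traces commute. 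Property (i) is immediate: a swap $(A_i,E_i)\leftrightarrow(A_j,E_j)$ with $i,j\le n$ is also a permutation of the $n+m$ systems, it commutes with the partial trace over the last $m$ pairs, and $\rho_{A_1^{n+m}E_1^{n+m}}$ is invariant under it.

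The main step is property (ii), i.e.\ $\supp(\rho_{A_1^nE_1^n})\subseteq\mathrm{span}\,\cV(\cH_{AE}^{\otimes n},\ket{\theta}^{\otimes n-r})$. Purify $\rho_{A_1^{n+m}E_1^{n+m}}$ to $\ket{\Theta}$ on $(AE)^{\otimes n+m}\otimes H$. Apply \cref{lem_purif_almost_iid} with the orthonormal basis $\{\ket{\Psi_t}\}_{t\in\cT}$ of \cref{eq_ONB_almost_iid_mixed_states}. Each $\ket{\Psi_t}$ has the form $\pi_t(\ket{\theta}^{\otimes n+m-r}\otimes\ket{\mathbf{x}})$ with $\mathbf{x}$ a product of basis vectors. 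So
\[
\ket{\Theta}=\sum_t\alpha_t\ket{\Psi_t}\ket{\tilde h_t}_H .
\]
Each $\ket{\Psi_t}$ is a product state across the cut between the first $n$ pairs and the last $m$ pairs. On the first $n$ pairs it equals $\ket{\theta}$ everywhere except on at most $r$ positions, since the total number of non-$\theta$ slots is at most $r$ and only some of them fall into the first block. Hence $\ket{\Psi_t}=\ket{\Psi_t^{(1)}}\otimes\ket{\Psi_t^{(2)}}$ with $\ket{\Psi_t^{(1)}}\in\cV(\cH_{AE}^{\otimes n},\ket{\theta}^{\otimes n-r})$. Padding a defect set of size below $r$ with $\theta$-positions to get exactly $r$ defects is allowed, since the defect slots are arbitrary vectors.

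We then regroup
\[
\ket{\Theta}=\sum_t \ket{\Psi^{(1)}_t}\otimes\big(\alpha_t\ket{\Psi^{(2)}_t}\ket{\tilde h_t}\big).
\]
This is a vector on (first $n$ pairs)$\otimes$(last $m$ pairs together with $H$) whose first-factor components all lie in $\cV(\cH_{AE}^{\otimes n},\ket{\theta}^{\otimes n-r})$. The first-factor vectors need not be orthonormal, so the implication (ii)$\Rightarrow$(i) of \cref{lem_purif_almost_iid} is not applied verbatim. Instead, the reduced operator is computed directly as
\[
\sum_{t,t'}c_{t,t'}\ket{\Psi^{(1)}_t}\bra{\Psi^{(1)}_{t'}},
\]
and its range lies in $\mathrm{span}\,\cV(\cH_{AE}^{\otimes n},\ket{\theta}^{\otimes n-r})$. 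Since the operator is Hermitian, its support equals its range, which gives (ii).

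The expected obstacle is only this bookkeeping: making sure that restricting a vector of $\cV$ to a sub-block yields an element of $\cV$ with the right defect count, and noting that orthonormality is not needed for the support inclusion. Using the product-basis form of \cref{eq_ONB_almost_iid_mixed_states} instead of general $\ket{\Omega^{(r)}}$ avoids the entangled-defect issue across the cut.
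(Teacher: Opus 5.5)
Your proof is correct and follows essentially the same route as the paper. You take the marginal of the $(n+m)$-pair extension and get permutation invariance by commuting swaps with the partial trace. You then obtain the support condition from the product basis of \cref{eq_ONB_almost_iid_mixed_states}, whose vectors factorize across the cut with first-$n$ factor in $\cV(\cH_{AE}^{\otimes n},\ket{\theta}^{\otimes n-r})$. The only difference is that you pass through a purification and \cref{lem_purif_almost_iid}, whereas the paper expands the mixed operator directly as $\sum_{k,\ell}\beta_{k,\ell}\ket{\psi_k}\bra{\psi_\ell}$ and traces out each term; your detour is harmless but not needed.
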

\begin{proof}
Since $\rho_{A_1^{n+m}} \in \St^{n+m}(\cH_A, \sigma_A^{\otimes n+m-r})$, there exists an extension $\rho_{A_1^{n+m} E_1^{n+m}}$ that is permutation invariant. 
We first show that $\rho_{A_1^n E_1^n}$ is then also permutation-invariant. To see this, let $\pi \in \cS_n$ denote an arbitrary permutation. Then
\begin{align}
\pi \rho_{A_1^n E_1^n} \pi^\dagger 
&=\pi  \tr_{A_{n+1}^{n+m}E_{n+1}^{n+m}}[\rho_{A_1^{n+m}E_1^{n+m}}] \pi^\dagger \\
&=\tr_{A_{n+1}^{n+m}E_{n+1}^{n+m}}[ (\pi \otimes \id_{m}) \rho_{A_1^{n+m}E_1^{n+m}} (\pi^\dagger \otimes \id_{m}) ] \\
&= \tr_{A_{n+1}^{n+m}E_{n+1}^{n+m}}[\rho_{A_1^{n+m}E_1^{n+m}} ] \\
&=\rho_{A_1^n E_1^n} \, .
\end{align}
Thus, it remains to show that $\rho_{A_1^n E_1^n}$ satisfies Property~\eqref{it_second_def} in~\cref{def_almost_product_state_mixed}.
By definition \smash{$\rho_{A_1^{n+m} E_1^{n+m}}$} is such that $\supp(\rho_{A_1^{n+m} E_1^{n+m}}) \subseteq \mathrm{span} \, \cV(\cH^{\otimes n+m}_{AE},\ket{\theta}_{AE}^{\otimes n+m-r})$ for a purification $\ket{\theta}_{AE}$ of $\sigma_A$. We need to show that $\supp(\rho_{A_1^{n} E_1^{n}}) \subseteq \mathrm{span} \, \cV(\cH^{\otimes n}_{AE},\ket{\theta}_{AE}^{\otimes n-r})$. To see this, let $\{\ket{\psi_t} \}_{t \in \cT}$ be the ``standard" tensor product basis of $\cV(\cH^{\otimes n+m}_{AE},\ket{\theta}_{AE}^{\otimes n+m-r})$, i.e., for a basis $\{\ket{u} \}_u$ of $\cH_{AE}$ we have $\ket{\psi_t} = \pi_t (\ket{u_1}_{A_1 E_1} \otimes \ldots \otimes  \ket{u_r}_{A_r E_r} \otimes \ket{\theta}_{AE}^{\otimes n+m-r})$ for some permutation $\pi_t$. We can observe that
\begin{align} \label{eq_Giulia0}
\rho_{A_1^n E_1^n} = \tr_{A_{n+1}^{n+m}E_{n+1}^{n+m}}[\rho_{A_1^{n+m} E_1^{n+m}}] = \sum_{k,\ell \in \cT} \beta_{k,\ell} \tr_{A_{n+1}^{n+m}E_{n+1}^{n+m}}[ \ket{\psi_k}\bra{\psi_\ell} ]
\end{align}
and
\begin{align} \label{eq_Giulia1}
\tr_{A_{n+1}^{n+m}E_{n+1}^{n+m}}[ \ket{\psi_k}\bra{\psi_\ell} ] = \sum_{s,t \in \bar \cT } c^{(k,\ell)}_{s,t}  \ket{\bar \psi_s}\bra{ \bar \psi_t} \, ,
\end{align}
where $\{\ket{\bar \psi_t} \}_{t \in \bar \cT}$ denotes the standard basis of $\cV(\cH^{\otimes n}_{AE},\ket{\theta}_{AE}^{\otimes n-r})$. Note that these basis elements in~\cref{eq_Giulia1} are independent of $k$ and $\ell$. Hence, we find
\begin{align}
\rho_{A_1^n E_1^n} 
\overset{\textnormal{\Cshref{eq_Giulia0}}}&{=} \sum_{k,\ell \in \cT} \beta_{k,\ell} \tr_{A_{n+1}^{n+m}E_{n+1}^{n+m}}[ \ket{\psi_k}\bra{\psi_\ell} ] \\
\overset{\textnormal{\Cshref{eq_Giulia1}}}&{=} \sum_{k,\ell \in \cT, s,t \in \bar \cT} \beta_{k,\ell} c^{(k,\ell)}_{s,t}   \ket{ \bar \psi_s}\bra{\bar \psi_t}  \\
&= \sum_{s,t \in \bar \cT} \gamma_{s,t}   \ket{ \bar \psi_s}\bra{\bar \psi_t} \, ,
\end{align}
where the final step uses $\gamma_{s,t}:=\sum_{k,\ell \in \cT} \beta_{k,\ell} c^{(k,\ell)}_{s,t}$.
This shows that $\supp(\rho_{A_1^{n} E_1^{n}}) \subseteq \mathrm{span} \, \cV(\cH^{\otimes n}_{AE},\ket{\theta}_{AE}^{\otimes n-r})$ and hence completes the proof.
\end{proof}

\begin{lemma} \label{lem_PERM_almost_iid}
Let $\rho_{A_1^n} \in \bar \St^n(\cH_A, \sigma_A^{\otimes n-r})$. Then for any $s \in \N$ we have $(\rho_{A_1^n})^{\otimes s} \in \bar \St^{ns}(\cH_A, \sigma_A^{\otimes ns-rs})$. 
\end{lemma}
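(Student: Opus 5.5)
We fix a purification $\ket{\theta}_{AE}$ of $\sigma_A$ and an extension $\rho_{A_1^n E_1^n}$ of $\rho_{A_1^n}$ satisfying property~\eqref{it_second_def} of \cref{def_almost_product_state_mixed}. Permutation invariance is not required for $\bar\St$, so only the support condition has to be checked. The natural candidate for an extension of $(\rho_{A_1^n})^{\otimes s}$ is $(\rho_{A_1^n E_1^n})^{\otimes s}$, read as a state on $ns$ pairs $(A_i,E_i)$ ordered block by block. Its partial trace over all $E$-systems is $(\rho_{A_1^n})^{\otimes s}$, so it is indeed an extension.

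For the support condition we use the representation~\cref{eq_almost-iid-form}: by \cref{eq_ONB_almost_iid_mixed_states} there is a family $\{\ket{\Psi_t}\}_{t\in\cT}\subseteq \cV(\cH_{AE}^{\otimes n},\ket{\theta}^{\otimes n-r})$ with
\begin{align}
\rho_{A_1^nE_1^n}=\sum_{i,j\in\cT}\beta_{i,j}\ket{\Psi_i}\bra{\Psi_j}\,.
\end{align}
Expanding the tensor power multilinearly gives
\begin{align}
(\rho_{A_1^nE_1^n})^{\otimes s}=\sum_{i_1,\dots,i_s,\,j_1,\dots,j_s}\beta_{i_1,j_1}\cdots\beta_{i_s,j_s}\,\ket{\Psi_{i_1}}\otimes\cdots\otimes\ket{\Psi_{i_s}}\;\bra{\Psi_{j_1}}\otimes\cdots\otimes\bra{\Psi_{j_s}}\,.
\end{align}
Every operator of the form $\ket{u}\bra{v}$ has range in $\mathrm{span}\{\ket{u}\}$. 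Hence the range, and for a Hermitian operator therefore the support, of $(\rho_{A_1^nE_1^n})^{\otimes s}$ lies in the span of the vectors $\ket{\Psi_{i_1}}\otimes\cdots\otimes\ket{\Psi_{i_s}}$. (If the expansion is taken in a non-orthogonal spanning set the same range argument still applies.)

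The key combinatorial step is to show that each such product vector lies in $\cV(\cH_{AE}^{\otimes ns},\ket{\theta}^{\otimes ns-rs})$. Write $\ket{\Psi_{i_k}}=\pi_k(\ket{\theta}^{\otimes n-r}\otimes\ket{\Omega_k^{(r)}})$. Then
\begin{align}
\ket{\Psi_{i_1}}\otimes\cdots\otimes\ket{\Psi_{i_s}}=(\pi_1\otimes\cdots\otimes\pi_s)\,\bigotimes_{k}\big(\ket{\theta}^{\otimes n-r}\otimes\ket{\Omega_k^{(r)}}\big)\,.
\end{align}
Composing $\pi_1\otimes\cdots\otimes\pi_s\in\cS_{ns}$ with a further permutation that collects all $s(n-r)$ copies of $\ket{\theta}$ at the front yields the form $\pi(\ket{\theta}^{\otimes ns-rs}\otimes\ket{\Omega^{(rs)}})$ with $\ket{\Omega^{(rs)}}=\bigotimes_k\ket{\Omega_k^{(r)}}$. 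This is exactly membership in $\cV(\cH_{AE}^{\otimes ns},\ket{\theta}^{\otimes ns-rs})$. Consequently
\begin{align}
\supp\big((\rho_{A_1^nE_1^n})^{\otimes s}\big)\subseteq\mathrm{span}\,\cV(\cH_{AE}^{\otimes ns},\ket{\theta}^{\otimes ns-rs})\,,
\end{align}
which is property~\eqref{it_second_def} and proves the claim.

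The only delicate point is bookkeeping: one must make sure that reordering the subsystems as $(A_1E_1)\cdots(A_{ns}E_{ns})$ is consistent with the grouping used in $\cV$. This is handled by the convention that permutations act on the pairs $(A_i,E_i)$. No real analytic obstacle arises.
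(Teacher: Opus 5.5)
Your proposal is correct and is the paper's proof: take $(\rho_{A_1^nE_1^n})^{\otimes s}$ as the extension, expand it multilinearly in the decomposition $\sum_{i,j}\beta_{i,j}\ket{\Psi_i}\bra{\Psi_j}$, and observe that each product vector $\ket{\Psi_{i_1}}\otimes\cdots\otimes\ket{\Psi_{i_s}}$ lies in $\cV(\cH_{AE}^{\otimes ns},\ket{\theta}^{\otimes ns-rs})$. You additionally spell out the permutation bookkeeping and the range-equals-support step, and you correctly conclude containment in $\mathrm{span}\,\cV$, whereas the paper's last line omits the span.
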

\begin{proof}
Let $\ket{\theta}_{AE}$ denote a purification of $\sigma_A$. Since $\rho_{A_1^n} \in \bar \St^n(\cH_A, \sigma_A^{\otimes n-r})$ there exists an extension $\rho_{A_1^n E_1^n}$ that can be written as
\begin{align} \label{eq_decomp_PI}
\rho_{A_1^n E_1^n} = \sum_{i,j} \beta_{i,j} \ket{\Psi_i} \bra{\Psi_j} \, ,
\end{align}
for $\ket{\Psi_t} \in \cV(\cH_{AE}^{\otimes n},\ket{\theta}^{\otimes n-r})$. Furthermore, by~\cref{eq_decomp_PI} we have
\begin{align}
(\rho_{A_1^n E_1^n})^{\otimes s} = \sum_{i_1,j_1,\ldots,i_s,j_s} \beta_{i_1,j_1} \ldots \beta_{i_s,j_s} \ket{\Psi_{i_1}} \bra{\Psi_{j_1}} \otimes \ldots \otimes \ket{\Psi_{i_s}} \bra{\Psi_{j_s}} \, ,
\end{align}
where $\ket{\Psi_{i_1}} \otimes \ldots \otimes \ket{\Psi_{i_s}} \in \cV(\cH_{AE}^{\otimes n s},\ket{\theta}^{\otimes ns -rs})$. This shows that $\supp(\rho_{A_1^n E_1^n}^{\otimes s}) \subseteq  \cV(\cH_{AE}^{\otimes n s},\ket{\theta}^{\otimes ns -rs})$, which completes the proof. %\qed
\end{proof}

\begin{lemma} \label{lem_amost_iid_extension_GM}
Let $n \in \N$, $r \leq n$, $\sigma_{ABE} \in \St(A \otimes B \otimes E)$, and $\rho_{A_1^n B_1^n} \in \St^n(\cH_{AB},\sigma_{AB}^{\otimes n -r})$. Then, there exists an extension $\rho_{A_1^n B_1^n E_1^n}$ of $\rho_{A_1^n B_1^n}$ that is $\binom{n}{r}$-almost-iid in $\sigma_{ABE}$, i.e.~$\rho_{A_1^n B_1^n E_1^n} \in \St^n(\cH_{ABE},\sigma_{ABE}^{\otimes n -r})$. 
\end{lemma}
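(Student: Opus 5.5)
The plan is to obtain the desired extension by purifying $\sigma_{ABE}$ and then reading that purification in two ways: as a purification of $\sigma_{AB}$, and as a purification of $\sigma_{ABE}$. The tool is the freedom in the choice of purification in~\cref{def_almost_product_state_mixed}, established in~\cref{eq_def_almost_iid_mixed_states} (property (a) of~\cref{rmk_properties_almost_iid_states}).

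First, let $\ket{\theta}_{ABEF}$ be a purification of $\sigma_{ABE}$ on some auxiliary system $F$. Then $\ket{\theta}_{ABEF}$ is also a purification of $\sigma_{AB}$, with purifying system $EF$. Since $\rho_{A_1^n B_1^n} \in \St^n(\cH_{AB},\sigma_{AB}^{\otimes n-r})$, \cref{eq_def_almost_iid_mixed_states} applied with this purification gives an extension $\rho_{A_1^n B_1^n E_1^n F_1^n}$ of $\rho_{A_1^n B_1^n}$ with two properties:
\begin{enumerate}[(i)]
\item it is invariant under the joint swaps $(A_i B_i E_i F_i) \leftrightarrow (A_j B_j E_j F_j)$;
\item its support lies in $\mathrm{span}\,\cV(\cH_{ABEF}^{\otimes n},\ket{\theta}_{ABEF}^{\otimes n-r})$.
\end{enumerate}

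Second, define $\rho_{A_1^n B_1^n E_1^n} := \tr_{F_1^n}[\rho_{A_1^n B_1^n E_1^n F_1^n}]$. This is an extension of $\rho_{A_1^n B_1^n}$, because tracing out $E_1^n$ as well recovers $\rho_{A_1^n B_1^n}$.

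Third, check that $\rho_{A_1^n B_1^n E_1^n} \in \St^n(\cH_{ABE},\sigma_{ABE}^{\otimes n-r})$ by exhibiting a witness for~\cref{def_almost_product_state_mixed}. Take the system $ABE$ in the role of $A$, the system $F$ in the role of the purifying system, $\ket{\theta}_{ABEF}$ as the purification of $\sigma_{ABE}$, and $\rho_{A_1^n B_1^n E_1^n F_1^n}$ itself as the extension. Condition~\eqref{it_first_def} asks for invariance under $(A_iB_iE_i, F_i)\leftrightarrow(A_jB_jE_j,F_j)$, which is exactly the joint swap symmetry in (i) above. Condition~\eqref{it_second_def} is literally the support condition in (ii) above, with the only change being how the tensor factors are grouped.

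Permutation invariance of $\rho_{A_1^n B_1^n E_1^n}$ then also follows, as in~\cref{fact_Giulia_proof}, since the partial trace over $F_1^n$ commutes with the joint permutations.

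The main point needing care is the application of~\cref{eq_def_almost_iid_mixed_states} when the purifying system $EF$ has a dimension different from that of the purification originally used for $\rho_{A_1^n B_1^n}$. That lemma handles this by assuming without loss of generality that the two purifying systems have equal dimension, relying on rank considerations. I would justify this explicitly by embedding the smaller purifying space isometrically into the larger one. An isometry $V^{\otimes n}$ acting on the purifying systems preserves both the joint permutation invariance and the span of $\cV$. It also maps the old purification to one that is unitarily related to $\ket{\theta}_{ABEF}$ on the purifying system, so the unitary argument of~\cref{eq_def_almost_iid_mixed_states} then applies verbatim. With this step secured, the rest of the argument is bookkeeping.
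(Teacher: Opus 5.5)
Your proposal is correct and is essentially the paper's proof: the paper also purifies $\sigma_{ABE}$ to $\ket{\tilde\theta}_{ABEF}$, transports the given extension from the original purifying system $G$ into $EF$ via $V^{\otimes n}$ for an isometry $V:\cH_G\to\cH_{EF}$ with $V\ket{\theta}_{ABG}=\ket{\tilde\theta}_{ABEF}$, traces out $F_1^n$, and uses the same regrouping of $\cV$ as the witness. Your isometric embedding followed by the unitary of \cref{eq_def_almost_iid_mixed_states} composes to exactly this $V$. The one thing to state explicitly is that you choose $F$ with $\dim(EF)\geq\dim(G)$, as the paper does, because your embedding argument only works when the isometry goes from $G$ into $EF$; if $EF$ were the smaller space, embedding it into $G$ would leave the extension living on $G_1^n$ rather than on $E_1^nF_1^n$.
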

\begin{proof} This is essentially a consequence of~\cref{eq_def_almost_iid_mixed_states}. Formally, there exist a purification $\ket{\theta}_{ABG}$ of $\sigma_{AB}$ and an extension $\rho_{A_1^n B_1^nG_1^n}$ of $\rho_{A_1^n B_1^n}$ such that $\supp(\rho_{A_1^nB_1^nG_1^n}) \subseteq  \mathrm{span}\,\cV(\cH_{ABG}^{\otimes n},\ket{\theta}_{ABG}^{\otimes n-r})$, by definition. Now fix a purification $\ket{\tilde{\theta}}_{ABEF}$ of $\sigma_{ABE}$ such that $\dim(EF) \geq \dim(G)$. Since $\ket{\tilde{\theta}}_{ABEF}$ is also a purification of $\sigma_{AB}$, there exists an isometry $V:\,\cH_G \to \cH_{EF}$ such that $V \ket{\theta}_{ABG} =\ket{\tilde{\theta}}_{ABEF}$. Then we define
\begin{align}
\tilde \rho_{A_1^n B_1^n E_1^n F_1^n}:= (\id_{A_1^n B_1^n} \otimes V^{\otimes n}) \rho_{A_1^n B_1^nG_1^n}(\id_{A_1^n  B_1^n} \otimes (V^\dagger)^{\otimes n})\,,
\end{align}
and 
\begin{align}
 \rho_{A_1^n B_1^n E_1^n} := \tr_{F_1^n}[\tilde \rho_{A_1^nB_1^n E_1^n F_1^n}]\,.
\end{align}
Clearly, $\tilde \rho_{A_1^n B_1^n E_1^n F_1^n}$ is an extension of $\rho_{A_1^n  B_1^n}$. Furthermore, this extension satisfies the two properties in~\cref{def_almost_product_state_mixed}, namely, it is permutation invariant and $\supp(\tilde{\rho}_{A_1^nB_1^nE_1^n F_1^n}) \subseteq  \mathrm{span}\,\cV(\cH_{ABEF}^{\otimes n},\ket{\tilde{\theta}}_{ABEF}^{\otimes n-r})$, which can be shown by following the same arguments as in the proof of~\cref{eq_def_almost_iid_mixed_states}. On the other hand, $\tilde \rho_{A_1^n B_1^n E_1^n F_1^n}$ is also an extension of $\rho_{A_1^n B_1^n E_1^n}$, which means that $\rho_{A_1^n B_1^n E_1^n}$ is $\binom{n}{r}$-almost-iid in $\sigma_{ABE}$. Since $\rho_{A_1^n B_1^n E_1^n}$ is an extension of $\rho_{A_1^n  B_1^n}$, the claim follows.
\end{proof}

%%%%%%%%%%%%%%%%%%%%%%%%%%%%%%%%%%%%%%%%%%%%%%%%%%%%%%%%%%%%%%%%%%%%
%%%%%%%%%%%%%%%%%%%%%%%%%%%%%%%%%%%%%%%%%%%%%%%%%%%%%%%%%%%%%%%%%%%%
\section{Proof of~\cref{prop_statistics}} \label{app_proof_statistics}
We generalize the statement for pure almost-iid states from~\cite[Theorem~4.5.2]{renner_phd} to the more general setting of mixed almost-iid states. The proof works similarly to the one from~\cite[Theorem~4.5.2]{renner_phd} with a few modifications. 

Let $\ket{\theta}_{AE}$ be a purification of $\rho$, where $E$ denotes the purifying system of dimension $d = \dim \cH$. Since $\rho^{(n)}$ is an almost-iid state, there exists an extension $\rho_{A_1^n E_1^n}$ which can be written as
\begin{align}
\rho_{A_1^n E_1^n} = \sum_{i,j \in \cT} \beta_{i,j} \ket{\Psi_i} \bra{\Psi_j} \, ,
\end{align}
where $\{\ket{\Psi_t}\}_{t \in \cT}$ is an orthonormal basis of $\mathrm{span}\,\cV(\cH_{AE}^{\otimes n},\ket{\theta}^{\otimes n-r})$ with vectors $\ket{\Psi_t} \in \cV(\cH_{AE}^{\otimes n},\ket{\theta}^{\otimes n-r})$.
For any fixed $t \in \cT$ we can assume without loss of generality that $\ket{\Psi_t} = \ket{\theta}^{\otimes n-r} \otimes \ket{\Omega_r}$, where $\ket{\Omega_r}$ represents the defects. Let $\mathbf{x}=(x_1, \ldots x_n)$ be the outcomes of the measurement $\cM^{\otimes n}$ applied to $\proj{\Psi_t}$. Furthermore, let $\mathbf{x'}=(x_1,\ldots,x_{n-r})$ and $\mathbf{x''}=(x_{n-r+1},\ldots,x_{n})$. Clearly $\mathbf{x'}$ is distributed according to the product distribution $(P_X)^{n-r}$. Hence, for any $\delta >0$ we have
\begin{align}
\PP\Big[\norm{\lambda_{\mathbf{x'}} -P_X}_1 > \sqrt{2 (\ln 2) \Big(\delta + \frac{|\cX| \log(n-r +1)}{n-r} \Big) }\Big] 
\overset{\textnormal{\cite[Cor.~B.3.3]{renner_phd}}}{\leq} 2^{-(n-r)\delta} \, .
\end{align}
Using $r \leq \frac{n}{2}$ this can be simplified to
\begin{align} \label{eq_green_star_stat}
\PP\Big[\norm{\lambda_{\mathbf{x'}} -P_X}_1 > \sqrt{2 (\ln 2) \Big(\delta + \frac{2 |\cX| \log(\frac{n}{2}+1)}{n} \Big) }\Big] 
\leq 2^{-\frac{n\delta}{2}} \, .
\end{align}
Using $\lambda_{\mathbf{x}} = \frac{n-r}{n} \lambda_{\mathbf{x'}} + \frac{r}{n} \lambda_{\mathbf{x''}}$ yields
\begin{align} \label{eq_red_star_stat}
\norm{\lambda_{\mathbf{x}} - P_X}_1
\overset{\textnormal{triangle}}{\leq} \frac{n-r}{n} \norm{\lambda_{\mathbf{x'}} - P_X}_1 + \frac{r}{n} \norm{\lambda_{\mathbf{x''}} - P_X}_1
\leq \norm{\lambda_{\mathbf{x'}} - P_X}_1 + \frac{2r}{n} \, .
\end{align}
Hence,
\begin{align}
&\PP\Big[\norm{\lambda_{\mathbf{x}} -P_X}_1 > \sqrt{2 (\ln 2) \Big(\delta + \frac{2 |\cX| \log(\frac{n}{2}+1)}{n} \Big) } + \frac{2r}{n}\Big] \nonumber \\
&\hspace{35mm}\overset{\textnormal{\Cshref{eq_red_star_stat}}}{\leq} \PP\Big[\norm{\lambda_{\mathbf{x'}} -P_X}_1 > \sqrt{2 (\ln 2) \Big(\delta + \frac{2 |\cX| \log(\frac{n}{2}+1)}{n} \Big) }\Big] \\
&\hspace{35mm}\overset{\textnormal{\Cshref{eq_green_star_stat}}}{\leq}2^{-\frac{n\delta}{2}}\, .
\end{align}
This can be rewritten as
\begin{align} \label{eq_blue_star_stat}
\underset{\mathbf{x} \leftarrow \ket{\Psi_t}}{\PP}[ \mathbf{x} \in \cW_{\delta}] \leq 2^{-\frac{n\delta}{2}} \, ,
\end{align}
for
\begin{align}
\cW_{\delta} = \Big\{ \mathbf{x} \in \cX^n : \norm{\lambda_{\mathbf{x}} - P_X}_1 > \sqrt{2 (\ln 2) \Big(\delta + \frac{2 |\cX| \log(\frac{n}{2}+1)}{n} \Big) } + \frac{2r}{n}  \Big \} \, .
\end{align}
The notation $\mathbf{x} \leftarrow \ket{\Psi_t}$ indicates that $\mathbf{x}$ is distributed according to the outcomes of the measurement applied to $\ket{\Psi_t}$.

For $M_{\mathbf{x}}= M_{x_1} \otimes \ldots \otimes M_{x_n}$ we find
\begin{align}
\underset{\mathbf{x} \leftarrow \rho_{A_1^n E_1^n}}{\PP}[ \mathbf{x} \in \cW_{\delta}]
&= \sum_{\mathbf{x} \in \cW_{\delta}} \tr[\rho_{A_1^n E_1^n} M_{\mathbf{x}} ] \\
\overset{\textnormal{\Cshref{eq_RR_operator_ineq}}}&{\leq} \sum_{\mathbf{x} \in \cW_{\delta}} |\cT| \sum_{t \in \cT} \beta_{t,t} \bra{\Psi_t} M_{\mathbf{x}} \ket{\Psi_t} \\
&=|\cT|\sum_{t \in \cT} \beta_{t,t}  \underset{\mathbf{x} \leftarrow \ket{\Psi_t}}{\PP}[ \mathbf{x} \in \cW_{\delta}] \\
\overset{\textnormal{\Cshref{eq_blue_star_stat}}}&{\leq} |\cT|2^{-\frac{n\delta}{2}} \\
\overset{\textnormal{\Cshref{rmk_properties_almost_iid_states}}}&{\leq} 2^{-n(\frac{\delta}{2} - h(\frac{r}{n}))} d^{2r} \, . \label{eq_square_stats}
\end{align}
Choosing $\delta= \frac{2 \log(\frac{1}{\eps})}{n}+ 2h(\frac{r}{n}) + \frac{4r}{n} \log(d)$ yields
\begin{align}
\PP\Big[\norm{\lambda_{\mathbf{x}} -P_X}_1 > \sqrt{4 (\ln 2) \left(\frac{\log(\frac{1}{\eps})}{n} + h\Big(\frac{r}{n}\Big) + \frac{2r}{n} \log(d) + \frac{|\cX| \log(\frac{n}{2}+1)}{n} \right) }\Big]
\overset{\textnormal{\Cshref{eq_square_stats}}}{\leq} \eps \, ,
\end{align}
which completes the proof. \qed
%%%%%%%%%%%%%%%%%%%%%%%%%%%%%%%%%%%%%%%%%%%%%%%%%%%%%%%%%%%%%%%%%%%%
%%%%%%%%%%%%%%%%%%%%%%%%%%%%%%%%%%%%%%%%%%%%%%%%%%%%%%%%%%%%%%%%%%%%
%%%%%%%%%%%%%%%%%%%%%%%%%%%%%%%%%%%%%%%%%%%%%%%%%%%%%%%%%%%%%%%%%%%%
\section{Proof of~\cref{prop_no_classical_exp_dF}} \label{app_no_classical_dF}
Consider a $(n+k)$-bit string with $m$ ones and $n+k-m$ zeros. If we throw away $k$ bits, then the probability of having $j$ ones in the remaining $n$-bit string is
\begin{align} \label{eq_probability_j_zeros}
p_j = \frac{\binom{n}{j} \binom{k}{m-j}}{\binom{n+k}{m}} \, .
\end{align}
Note that $\sum_{j=0}^m p_j =1$ follows from a known identity due to Vandermonde which states that for all $r,s,t \in \N$ we have
\begin{align} \label{eq_Vandermonde}
\binom{r+s}{t} = \sum_{\ell=0}^t \binom{r}{\ell} \binom{s}{t-\ell} \, .
\end{align}
To see~\cref{eq_probability_j_zeros}, let $v \in \{0,1,\ldots,k\}$ denote the number of ones that have been thrown away. Hence, 
\begin{align}
p_j  = \frac{1}{c} \binom{n}{j} \binom{k}{v} =  \frac{1}{c} \binom{n}{j} \binom{k}{m-j} \, .
\end{align}
for some normalization constant
\begin{align}
c 
= \sum_{j=0}^m \binom{n}{j}\binom{k}{m-j}
\overset{\textnormal{\Cshref{eq_Vandermonde}}}{=}\binom{n+k}{m} \, .
\end{align}
\begin{fact} \label{fact_variance}
 For the setting above, we have
 \begin{align}
 \mathrm{Var}_p[X] =  \frac{kmn(n+k-m)}{(n+k-1)(n+k)^2} \, .
 \end{align}
\end{fact}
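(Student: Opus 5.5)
The plan is to recognize $p$ as the hypergeometric distribution and compute its variance via indicator variables. Here $X$ is the number of ones among the $n$ retained bits. Equivalently, we draw $n$ positions uniformly at random without replacement from $n+k$ positions, of which exactly $m$ carry a one. For $i \in \{1,\ldots,n\}$, let $Y_i$ be the indicator that the $i$-th retained position carries a one, so that $X = \sum_{i=1}^n Y_i$.

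By exchangeability of sampling without replacement, every $Y_i$ has the same marginal and every pair $(Y_i,Y_j)$ with $i\neq j$ has the same joint law. Concretely,
\begin{align}
\PP[Y_i=1] = \frac{m}{N}, \qquad \PP[Y_i=Y_j=1] = \frac{m(m-1)}{N(N-1)}, \qquad N:=n+k .
\end{align}
It follows that $\mathrm{Var}[Y_i] = \frac{m}{N}\big(1-\frac{m}{N}\big)$ and
\begin{align}
\mathrm{Cov}[Y_i,Y_j] = \frac{m(m-1)}{N(N-1)} - \frac{m^2}{N^2} = -\frac{m(N-m)}{N^2(N-1)} .
\end{align}
As a consistency check, one can verify that this sampling model reproduces \cref{eq_probability_j_zeros}: the probability of exactly $j$ ones among the retained $n$ is $\binom{m}{j}\binom{N-m}{n-j}/\binom{N}{n}$. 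This equals $\binom{n}{j}\binom{k}{m-j}/\binom{N}{m}$ by the symmetry of the hypergeometric distribution, which follows by expanding the factorials.

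Summing the variances and covariances gives
\begin{align}
\mathrm{Var}[X] = n\frac{m(N-m)}{N^2} - n(n-1)\frac{m(N-m)}{N^2(N-1)} = \frac{nm(N-m)}{N^2}\cdot\frac{N-n}{N-1} .
\end{align}
Since $N-n=k$, this is exactly the claimed $\frac{kmn(n+k-m)}{(n+k-1)(n+k)^2}$. The only mildly delicate point is matching the paper's parametrisation of $p_j$ to the standard hypergeometric one. If one prefers to avoid this, $\E[X]$ and $\E[X(X-1)]$ can be computed directly from \cref{eq_probability_j_zeros}. To do so, use $j\binom{n}{j}=n\binom{n-1}{j-1}$ and $j(j-1)\binom{n}{j} = n(n-1)\binom{n-2}{j-2}$, and then apply Vandermonde's identity \cref{eq_Vandermonde} to get $\E[X]=\frac{nm}{N}$ and $\E[X(X-1)] = \frac{n(n-1)m(m-1)}{N(N-1)}$. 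The same algebra then yields the formula.
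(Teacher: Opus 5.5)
Your proof is correct, but your main route differs from the paper's. The paper works purely algebraically from the closed form of $p_j$. It computes $\E_p[X]$ and $\E_p[X^2]$ using $j\binom{n}{j}=n\binom{n-1}{j-1}$ (applied twice, after an index shift) and Vandermonde's identity, and then subtracts. That is essentially the fallback you sketch at the end; your use of $\E[X(X-1)]$ with $j(j-1)\binom{n}{j}=n(n-1)\binom{n-2}{j-2}$ is a slightly tidier version of the same computation.

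Your primary argument instead writes $X$ as a sum of exchangeable indicators. This reduces everything to one- and two-point inclusion probabilities, with no second-moment binomial sums. It also delivers the answer in the factored form $\frac{nm(N-m)}{N^2}\cdot\frac{k}{N-1}$. That form shows the finite-population correction $\frac{k}{n+k-1}$ explicitly, and this factor is what makes the variance of order $k$ rather than $n$, which is exactly what the subsequent example exploits.

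The price is that you must identify $p$ with a sampling model, which you do through the hypergeometric symmetry. You can avoid that step by putting the randomness on the placement of the ones rather than on the retained positions:
\begin{itemize}
\item fix the $n$ retained positions;
\item let the set of ones be a uniformly random $m$-subset of the $N$ positions;
\item let $Y_i$ indicate that retained position $i$ carries a one.
\end{itemize}
Then $\PP[X=j]=\binom{n}{j}\binom{k}{m-j}/\binom{N}{m}$ is literally the paper's $p_j$. Moreover $\PP[Y_i=Y_j=1]=\binom{N-2}{m-2}/\binom{N}{m}=\frac{m(m-1)}{N(N-1)}$, as you wrote. The paper's route, in turn, needs no probabilistic model at all, only the formula for $p_j$.
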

\begin{proof}
Recall that
\begin{align}
 j \binom{n}{j} = n \binom{n-1}{j-1} \, . \label{eq_property_binom}
\end{align}
With this we can write
\begin{align}
\E_p[X] 
&= \sum_{j=0}^m j p_j \\
\overset{\textnormal{\Cshref{eq_probability_j_zeros}}}&{=} \sum_{j=0}^m j   \frac{\binom{n}{j} \binom{k}{m-j}}{\binom{n+k}{m}} \\
\overset{\textnormal{\Cshref{eq_property_binom}}}&{=} \frac{n}{\binom{n+k}{m}}   \sum_{j=1}^m \binom{n-1}{j-1} \binom{k}{m-j}\\
&=\frac{n}{\binom{n+k}{m}}   \sum_{j=0}^{m-1} \binom{n-1}{j} \binom{k}{m-1-j}\\
\overset{\textnormal{\Cshref{eq_Vandermonde}}}&{=}\frac{n}{\binom{n+k}{m}}  \binom{n+k-1}{m-1} \\
\overset{\textnormal{\Cshref{eq_property_binom}}}&{=}\frac{nm}{n+k} \, . \label{eq_expecation_value}
\end{align}
Similarly, we find
\begin{align}
\E_p[X^2]
&= \sum_{j=0}^m j^2 p_j \\
\overset{\textnormal{\Cshref{eq_probability_j_zeros}}}&{=} \frac{1}{{\binom{n+k}{m}}} \sum_{j=0}^m j^2   \binom{n}{j} \binom{k}{m-j} \\
&= \frac{n}{{\binom{n+k}{m}}} \sum_{j=0}^{m-1} (j+1)   \binom{n-1}{j} \binom{k}{m-1-j}\\
&= \frac{n}{{\binom{n+k}{m}}} \left( \sum_{j=0}^{m-1} j   \binom{n-1}{j} \binom{k}{m-1-j} + \sum_{j=0}^{m-1} \binom{n-1}{j} \binom{k}{m-1-j} \right)\\
\overset{\textnormal{\Cshref{eq_Vandermonde,eq_property_binom}}}&{=} \frac{n}{{\binom{n+k}{m}}} \left( (n-1) \sum_{j=0}^{m-2}   \binom{n-2}{j} \binom{k}{m-2-j} + \binom{n+k-1}{m-1}  \right) \\
\overset{\textnormal{\Cshref{eq_Vandermonde}}}&{=}\frac{n}{{\binom{n+k}{m}}} \left( (n-1) \binom{n+k -2}{m-2} + \binom{n+k-1}{m-1}  \right) \\
&=n(n-1) \frac{\binom{n+k-1}{m-1} \binom{n+k-2}{m-2}}{\binom{n+k}{m} \binom{n+k-1}{m-1}} + n \frac{\binom{n+k-1}{m-1}}{\binom{n+k}{m}}\\
&= n(n-1)  \frac{ m (m-1) }{(n+k)(n+k-1)} + n \frac{ m}{n+k} \, .  \label{eq_expecation_value2}
\end{align}
Combining everything yields
\begin{align}
 \mathrm{Var}_p[X]
 =\E_p[X^2] - (\E_p[X])^2 
 \overset{\textnormal{\Cshref{eq_expecation_value,eq_expecation_value2}}}{=}  \frac{kmn(n+k-m)}{(n+k-1)(n+k)^2} \, .
\end{align}
\end{proof}

Let $Q^{(q)}_{X_1^n} \in \cV(\cX^n,q^{n-r})$ and consider a binary random string $X_1^n  \sim Q^{(q)}_{X_1^n}$. Then without loss of generality assume that the $r$ defects are at the end of the random string, and hence
\begin{align} \label{eq_simple_variance_term}
 \mathrm{Var}\Big[ \sum_{i=1}^n X_i \Big] 
 =   \mathrm{Var} \Big[ \sum_{i=1}^{n-r} X_i \Big] + \underbrace{  \mathrm{Var} \Big[ \sum_{i=n-r+1}^{n} X_i \Big]}_{\geq 0}
 \geq (n-r) \mathrm{Var}_q[X] \, ,
\end{align}
where the first equality uses that the defects are independent of iid parts. 

\begin{example} \label{ex_Variance_1}
Let $\alpha \in (0,1)$, $m=\frac{n+k}{2}$, $k=n^{\alpha}$ and $r=o(n)$. Then~\cref{fact_variance} gives $ \mathrm{Var}_p[X] = \frac{n^{1+\alpha}}{4(n+n^{\alpha}-1)} = \Theta(n^{\alpha})$. Furthermore,~\cref{eq_simple_variance_term} yields $ \mathrm{Var} \Big[ \sum_{i=1}^n X_i \Big] \geq \Theta(n)$.
\end{example}

\begin{fact} \label{fact_concavity_variance}
Let $t \in [0,1]$ and $p,q$ be two probability distributions. Then,
\begin{align}
 \mathrm{Var}_{tp +(1-t)q}[X] \geq t  \mathrm{Var}_{p}[X] + (1-t)  \mathrm{Var}_{q}[X] \, .
\end{align}
\end{fact}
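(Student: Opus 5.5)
This is the law of total variance. Let $M := tp + (1-t)q$ be the mixture. Introduce a Bernoulli selector $J$ with $\PP[J=1]=t$. Conditioned on $J=1$, draw $X\sim p$; conditioned on $J=0$, draw $X\sim q$. The marginal distribution of $X$ is then exactly $M$.

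The plan is to expand $\mathrm{Var}_M[X]=\E_M[X^2]-(\E_M[X])^2$ and use linearity of the mixture in each moment, namely $\E_M[X^2]=t\E_p[X^2]+(1-t)\E_q[X^2]$ and $\E_M[X]=t\E_p[X]+(1-t)\E_q[X]$. Write $\mu_p:=\E_p[X]$ and $\mu_q:=\E_q[X]$. Subtracting $t\mathrm{Var}_p[X]+(1-t)\mathrm{Var}_q[X]$ from $\mathrm{Var}_M[X]$ leaves
\begin{align}
t\mu_p^2+(1-t)\mu_q^2-\big(t\mu_p+(1-t)\mu_q\big)^2 = t(1-t)(\mu_p-\mu_q)^2 \geq 0 \, .
\end{align}
The equality in this display is a one-line algebraic identity. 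Equivalently, the left-hand side is nonnegative by convexity of $x\mapsto x^2$ (Jensen's inequality).

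This gives the claim directly, with the gap equal to the variance of the conditional mean $\E[X\mid J]$. No step is a real obstacle; the only care needed is to verify that the cross terms in the square combine into the stated factor $t(1-t)(\mu_p-\mu_q)^2$.

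By induction, or by the same computation with a selector taking finitely many values, the inequality extends to finite convex combinations. In the subsequent argument it would presumably be combined with \cref{eq_simple_variance_term} to lower bound the variance of a mixture of almost-iid distributions by $(n-r)$ times an average of $\mathrm{Var}_q[X]$.
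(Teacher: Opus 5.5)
Your proposal is correct and matches the paper's proof: it also expands $\mathrm{Var}=\E[X^2]-(\E[X])^2$, uses linearity of the mixture in each moment, and bounds $(t\E_p[X]+(1-t)\E_q[X])^2 \leq t\E_p[X]^2+(1-t)\E_q[X]^2$ by convexity. Your exact gap $t(1-t)(\mu_p-\mu_q)^2$ and the Bernoulli-selector (law of total variance) framing just make that convexity step explicit.
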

\begin{proof}
By definition of the variance, we have
\begin{align}
 \mathrm{Var}_{tp +(1-t)q}[X] 
 &= \E_{tp +(1-t)q}[X^2] - (\E_{tp +(1-t)q}[X])^2 \\
 &= t \E_p[X^2] + (1-t) \E_{q}[X^2] - (t \E_p[X] + (1-t) \E_{q}[X])^2 \\
 &\geq t \E_p[X^2] + (1-t) \E_{q}[X^2] -( t \E_p[X]^2 + (1-t) \E_{q}[X]^2) \\
 &=t  \mathrm{Var}_{p}[X] + (1-t)  \mathrm{Var}_{q}[X] \, .
\end{align}
\end{proof}

Putting everything together, we obtain for $\alpha \in (0,1)$
\begin{align}
\mathrm{Var}_p[X] \overset{\textnormal{\Cshref{ex_Variance_1}}}{=}  \Theta(n^{\alpha}) \quad \textnormal{and} \quad 
 \mathrm{Var}_{ \int Q_{X}^{(q)}  \nu(\di q)}[X] \overset{\textnormal{\Cshref{fact_concavity_variance}}}{\geq}  \int  \nu(\di q)   \mathrm{Var}_{ Q_{X}^{(q)}}[X] \overset{\textnormal{\Cshref{ex_Variance_1}}}{=}\Theta(n) \, .
\end{align}
This proves the assertion of~\cref{prop_no_classical_exp_dF}. Note that for better readability, the above proof has been done for the specific choice $k=n^{\alpha} = o(n)$ for $\alpha \in (0,1)$, but the same argument remains valid for an arbitrary $k=o(n)$. \qed

%%%%%%%%%%%%%%%%%%%%%%%%%%%%%%%%%%%%%%%%%%%%%%%%%%%%%%%%%%%%%%%%%%%%
%%%%%%%%%%%%%%%%%%%%%%%%%%%%%%%%%%%%%%%%%%%%%%%%%%%%%%%%%%%%%%%%%%%%
%%%%%%%%%%%%%%%%%%%%%%%%%%%%%%%%%%%%%%%%%%%%%%%%%%%%%%%%%%%%%%%%%%%%
\section{Alternative proof of~\cref{thm_entropy_of_almost_iid}}   \label{app_alternative_proof}
In this section, we present an alternative proof for~\cref{thm_entropy_of_almost_iid} which uses an entirely different proof technique which may be of independent interest. However, we note that the scaling of the defects $r$ is slightly worse than in~\cref{thm_entropy_of_almost_iid}.
Note that it suffices to prove the following result.
\begin{theorem} \label{thm_entropy_of_almost_iid_alternative}
Let $\sigma_{A} \in \St(\cH_{A})$ and $\rho_{A_1^n} \in \St^n(\cH_{A}, \sigma_{A}^{\otimes n-r})$ for $r=o(\sqrt{n})$. Then 
\begin{align} \label{eq_almostProduct_entropy_alt}
 \frac{1}{n} H(A_1^n)_{\rho} = H(A)_{\sigma} + \frac{o(n)}{n} \, .
\end{align}
\end{theorem}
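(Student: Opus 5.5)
Both inequalities can be obtained separately without Rényi entropies: subadditivity together with \cref{prop_distance} gives the upper bound, and a concavity/pinching-type argument on the decomposition from \cref{rmk_properties_almost_iid_states} gives the lower bound. The stronger assumption $r=o(\sqrt n)$ suggests the intended route loses a square root somewhere, most likely by applying \cref{prop_distance} to blocks.

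\emph{Upper bound.} I split $A_1^n$ into $k=\lfloor n/s\rfloor$ blocks of size $s$ (plus a remainder of size less than $s$). Subadditivity and permutation invariance give $H(A_1^n)_\rho\le k\,H(A_1^s)_\rho + s\log d_A$. By \cref{prop_distance}, $\norm{\rho_{A_1^s}-\sigma_A^{\otimes s}}_1\le 4\sqrt{rs/n}=:\eps$. The Fannes--Audenaert continuity bound on $\cH_A^{\otimes s}$ then yields
\begin{align}
H(A_1^s)_\rho\le sH(A)_\sigma+\tfrac{\eps}{2}\, s\log d_A+h(\eps/2).
\end{align}
Dividing by $n$, the error is $O(\sqrt{rs/n})+h(\eps/2)/s+s/n$. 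This already goes to zero with $s=1$ whenever $r=o(n)$, so this direction is not the bottleneck.

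\emph{Lower bound.} I take the extension $\rho_{A_1^nE_1^n}=\sum_{t,t'}\beta_{t,t'}\ket{\Psi_t}\bra{\Psi_{t'}}$ and form the pinched state $\tilde\rho=\cP(\rho)=\sum_t\beta_{t,t}\proj{\Psi_t}$. Because a pinching into $|\cT|$ blocks can increase entropy by at most $\log|\cT|$, and the same bound survives passing to the $A$-marginal, I would aim for
\begin{align}
H(A_1^n)_\rho\ge H(A_1^n)_{\tilde\rho}-\log|\cT|.
\end{align}
To justify the marginal version directly, I plan to use the operator inequality $\rho_{A_1^n}\le|\cT|\tilde\rho_{A_1^n}$ from \cref{lem_pinching_statement} together with operator monotonicity of the logarithm:
\begin{align}
H(\rho_{A_1^n})=-\tr\rho\log\rho\ge -\tr\rho\log\tilde\rho-\log|\cT|.
\end{align}
The quantity $-\tr\rho\log\tilde\rho$ is a cross entropy rather than $H(\tilde\rho)$, and this mismatch is the step I expect to be the main obstacle. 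If it cannot be controlled, I would instead use concavity: $H(A_1^n)_{\tilde\rho}\ge\sum_t\beta_{t,t}H(A_1^n)_{\Psi_t}\ge (n-r)H(A)_\sigma$, since each $\Psi_t$ restricted to $A$ is $\sigma^{\otimes n-r}\otimes\omega_t$. That still requires relating $H(\rho_{A_1^n})$ to $H(\tilde\rho_{A_1^n})$, which is the same difficulty.

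A cleaner alternative for the lower bound is a purification/duality trick, and this is where I expect the $\sqrt n$ to enter. Purify $\rho_{A_1^nE_1^n}$ by $R$. Then $H(A_1^n)_\rho=H(E_1^nR)$, and weak monotonicity gives $H(A_1^n)\ge H(A_1^nE_1^n)-H(E_1^n)$, so one must lower-bound a conditional entropy. Applying \cref{prop_distance} to $AE$ and summing errors over $n/s$ blocks seems to produce $n\sqrt{rs/n}$-type error terms. The condition $r=o(\sqrt n)$ makes the total error $o(n)$ via $\sqrt{r^2}=o(\sqrt n)$-type bookkeeping.

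Concretely, I would first try to bound $H(A_1^n)_\rho$ from below by the chain rule $\sum_i H(A_i|A_1^{i-1})$. I would then control each term using closeness of $\rho_{A_1^{i}}$ to $\sigma^{\otimes i}$ only for $i\lesssim n/r$, and use the almost-iid structure of the extension for the remaining terms. I would settle the exact bookkeeping once the dominant error term is identified.
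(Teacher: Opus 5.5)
Your upper bound is correct and coincides with the paper's argument (subadditivity, \cref{prop_distance} with $s=1$, Fannes--Audenaert). The gap is the lower bound $H(A_1^n)_\rho\geq nH(A)_\sigma-o(n)$. This is the nontrivial direction of the statement, and none of your three routes delivers it.

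First, pinching raises entropy by at most $\log|\cT|$ on $A_1^nE_1^n$ only because $\tr[\rho\log\cP(\rho)]=\tr[\cP(\rho)\log\cP(\rho)]$, i.e.\ $H(\cP(\rho))-H(\rho)=D(\rho\|\cP(\rho))\leq\log|\cT|$. This fails after $\tr_{E_1^n}$. Take $\rho_{AE}=\proj{00}$ and the basis $\ket{\Psi_\pm}=\tfrac{1}{\sqrt2}(\ket{00}\pm\ket{\phi})$, with $\ket{\phi}$ maximally entangled of Schmidt rank $M$ on levels orthogonal to $\ket{0}$. The pinched $A$-marginal has entropy $1+\tfrac12\log M$, while $H(\rho_A)=0$ and $\log|\cT|=1$. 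What \cref{lem_pinching_statement} plus operator monotonicity actually gives is $H(\rho)\geq-\tr\rho\log\tilde\rho-\log|\cT|$. Since $-\tr\rho\log\tilde\rho=H(\rho)+D(\rho\|\tilde\rho)$, this is just $D(\rho\|\tilde\rho)\leq D_{\max}(\rho\|\tilde\rho)$ rewritten. It says nothing until the cross entropy is bounded by other means, and your concavity bound on $H(\tilde\rho)$ never enters.

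Second, the purification route cannot work. The extension has rank at most $|\cT|$, so $H(A_1^n|E_1^n)_\rho\leq H(A_1^nE_1^n)_\rho\leq\log|\cT|=o(n)$. Hence $H(A_1^n)\geq H(A_1^n|E_1^n)$ can never yield $nH(A)_\sigma$. Moreover, conditional entropy is subadditive over blocks, so block-wise use of \cref{prop_distance} only produces upper bounds.

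Third, in the chain-rule plan, \cref{prop_distance} controls $H(A_i|A_1^{i-1})$ only for $i\ll n/r$. Using ``the almost-iid structure for the remaining terms'' is exactly what has to be proved.

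The paper closes this gap as follows. It writes $H(A_1^n)_\rho\geq(n-\ell)H(A_1|A_2^{\ell+1})_\rho-\sum_{i}I(A_1^{i-1}:A_i|A_{n-\ell+1}^n)_\rho$. The first term is handled by \cref{prop_distance} and Alicki--Fannes on a short window $\ell\leq s$. The conditional mutual informations are handled in two steps:
\begin{enumerate}[(a)]
\item averaging over $\ell\in\{0,\dots,s\}$ bounds their sum by $\frac{1}{s+1}\sum_i I(A_1^{i-1}:A_{n-s}^n)_\rho$;
\item each of these is dominated, via SSA and permutation invariance, by at most two half-versus-half quantities $I(A_1^{n_0}E_1^{n_0}:A_{n_0+1}^nE_{n_0+1}^n)_\rho\leq H(A_1^{n_0}E_1^{n_0})_\rho+H(A_{n_0+1}^nE_{n_0+1}^n)_\rho=O(nh(r/n)+r\log d_{AE})$, which is small because the extension's marginals have low rank.
\end{enumerate}
The resulting error $O(\frac{n}{s}(nh(r/n)+r\log d_{AE}))$ with $s\approx\sqrt n$ is where the stronger assumption on $r$ enters.

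Alternatively, your pinching route can be repaired with one extra step. Since $\tr_{E_1^n}\proj{\Psi_t}=\pi_t(\sigma_A^{\otimes n-r}\otimes\omega_t)\pi_t^\dagger$, setting $\sigma_\delta:=(1-\delta)\sigma_A+\delta\,\id/d_A$ gives $\rho_{A_1^n}\leq|\cT|\tilde\rho_{A_1^n}\leq|\cT|(1-\delta)^{-(n-r)}(d_A/\delta)^r\sigma_\delta^{\otimes n}$. Hence $H(A_1^n)_\rho\geq-n\tr[\rho_{A_1}\log\sigma_\delta]-\log|\cT|-n\log\tfrac{1}{1-\delta}-r\log\tfrac{d_A}{\delta}$. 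By Klein's inequality and \cref{prop_distance} with $s=1$, $-\tr[\rho_{A_1}\log\sigma_\delta]\geq H(A)_\sigma-4\sqrt{r/n}\,\log\tfrac{d_A}{\delta}$. Choosing $\delta=\sqrt{(r+1)/n}$ yields $nH(A)_\sigma-o(n)$, in fact already for $r=o(n)$. As written, however, your proposal contains no proof of the lower bound.
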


\Cref{thm_entropy_of_almost_iid_alternative} implies that the conditional entropy of almost-iid states coincides asymptotically with the conditional entropy of iid states.
\begin{corollary} \label{lem_cond_entropy_almost_iid}
Let $\sigma_{AB} \in \St(\cH_{AB})$ and $\rho_{A_1^n B_1^n} \in \St^n(\cH_{AB}, \sigma_{AB}^{\otimes n-r})$ for $r=o(\sqrt{n})$. Then 
\begin{align} \label{eq_almostProduct_entropy_cor_alt}
 \frac{1}{n} H(A_1^n|B_1^n)_{\rho} = H(A|B)_{\sigma} + \frac{o(n)}{n} \, .
\end{align}
\end{corollary}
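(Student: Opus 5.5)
The plan is to write the conditional entropy as a difference of two unconditional entropies and apply \cref{thm_entropy_of_almost_iid_alternative} to each term. Concretely, $H(A_1^n|B_1^n)_{\rho} = H(A_1^nB_1^n)_{\rho} - H(B_1^n)_{\rho}$, where the first term is the entropy of the joint system $(AB)_1^n$ viewed as an $n$-fold system with local space $\cH_{AB}$, and the second is the entropy of the marginal on $B_1^n$. The corollary then follows from showing that each term, divided by $n$, equals the corresponding single-copy entropy of $\sigma$ up to $o(n)/n$, and subtracting the two.

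For the joint term, I relabel $C := AB$, so that $\rho_{C_1^n} \in \St^n(\cH_C, \sigma_C^{\otimes n-r})$ holds by assumption. Since $r=o(\sqrt n)$, \cref{thm_entropy_of_almost_iid_alternative} gives $\frac1n H(A_1^nB_1^n)_\rho = H(AB)_\sigma + \frac{o(n)}{n}$. For the marginal term, I invoke \cref{rmk_properties_almost_iid_states}(e) (with the roles of $A$ and $B$ swapped, which is harmless because the definition is symmetric in the subsystems): $\rho_{B_1^n} \in \St^n(\cH_B, \sigma_B^{\otimes n-r})$ with the same $r$. A second application of \cref{thm_entropy_of_almost_iid_alternative} then gives $\frac1n H(B_1^n)_\rho = H(B)_\sigma + \frac{o(n)}{n}$.

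Subtracting the two statements yields $\frac1n H(A_1^n|B_1^n)_\rho = H(AB)_\sigma - H(B)_\sigma + \frac{o(n)}{n} = H(A|B)_\sigma + \frac{o(n)}{n}$, since a difference of two $o(n)/n$ terms is again $o(n)/n$.

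The only step requiring care is the marginalization. I will check that tracing out the $A$-systems from an extension $\rho_{A_1^nB_1^nE_1^n}$ satisfying \eqref{it_first_def}--\eqref{it_second_def} with a purification $\ket{\theta}_{ABE}$ of $\sigma_{AB}$ still gives a valid almost-iid certificate for $\rho_{B_1^n}$. This works by regarding $AE$ as the purifying system of $\sigma_B$: the same operator $\rho_{A_1^nB_1^nE_1^n}$, now read as an extension of $\rho_{B_1^n}$, is permutation invariant under $(B_i,A_iE_i)\leftrightarrow(B_j,A_jE_j)$, and its support lies in $\mathrm{span}\,\cV(\cH_{BAE}^{\otimes n},\ket{\theta}^{\otimes n-r})$, where $\ket{\theta}$ is now read as a purification of $\sigma_B$. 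Both properties are simply a reordering of tensor factors, so this is routine and there is no genuine obstacle.
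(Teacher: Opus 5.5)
Your proposal is correct and matches the paper's proof: split $H(A_1^n|B_1^n)_\rho = H(A_1^nB_1^n)_\rho - H(B_1^n)_\rho$ and apply \cref{thm_entropy_of_almost_iid_alternative} to each term. Your explicit check that $\rho_{B_1^n}$ is almost-iid in $\sigma_B$ (taking $AE$ as the purifying system of $\sigma_B$) supplies a step the paper uses only implicitly through \cref{rmk_properties_almost_iid_states}.
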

\begin{proof}
By definition of the conditional entropy, we have
\begin{align*}
\frac{1}{n}H(A_1^n|B_1^n)_{\rho}
= \frac{1}{n}H(A_1^n B_1^n)_{\rho} - \frac{1}{n} H(B_1^n)_{\rho} 
\overset{\textnormal{\Cshref{thm_entropy_of_almost_iid_alternative}}}{=} H(AB)_{\sigma}-H(B)_{\sigma} + \frac{o(n)}{n} 
= H(A|B)_{\sigma} + \frac{o(n)}{n} \, .
\end{align*}
\end{proof}

\subsection{Proof of~\cref{thm_entropy_of_almost_iid_alternative}}
One direction of~\cref{eq_almostProduct_entropy_alt} is simple. To see this, let $\eps_n = 2\sqrt{\frac{rs}{n}}$ for $s=\lfloor \sqrt{n} \rfloor$ and consider
\begin{align}
\frac{1}{n} H(A_1^n)_{\rho}
\overset{\textnormal{subadditivity}}{\leq} H(A_1)_{\rho}
\overset{\textnormal{\textnormal{\Cshref{prop_distance}}}}{\leq} H(A)_{\sigma} + \eps_n \log d + h(\eps_n) 
=H(A)_{\sigma} + \frac{o(n)}{n}\, ,
\end{align}
where the penultimate step uses the continuity of entropy~\cite{audenaert07,petz_Entropybook,win16}.

The other direction is more complicated.
Recall that strong subadditivity of quantum entropy (SSA)~\cite{LieRus73_1,LieRus73} ensures $I(A:C|B)_{\rho} \geq 0$. Furthermore, the conditional mutual information satisfies a chain rule $I(A:BC)_{\rho} = I(A:B)_{\rho} + I(A:C|B)_{\rho}$.
Let $\ket{\theta}_{AE}$ be a purification of $\sigma_A$ and let $\rho_{A_1^n E_1^n}$ be an extension of $\rho_{A_1^n}$ that satisfies the two conditions of~\cref{def_almost_product_state_mixed}. Since $\rho_{A_1^n E_1^n}$ is permutation-invariant, for entropy and mutual information terms the indices of the considered subsystems can be changed.
Hence, we find for any $k \leq \lfloor\frac{n}{2}\rfloor=:n_0$, $k'\leq n_0 +1$ and for any $\ell=k,\ldots,n_0$, $\ell'=k',\ldots,n_0 +1$
\begin{align}
I(A_1^{n_0} E_1^{n_0} : A_{n_0+1}^n E_{n_0+1}^n)_{\rho}
\overset{\mathrm{SSA}}&{\geq}I(A_1^{n_0}: A_{n_0+1}^{n})_{\rho} \\
\overset{\mathrm{chain\,rule}}&{=}I(A_1^{\ell}:A_{n_0+1}^{n}|A_{\ell+1}^{n_0})_{\rho} + I(A_{\ell+1}^{n_0}:A_{n_0+1}^{n})_{\rho} \\
\overset{\mathrm{SSA}}&{\geq} I(A_1^k:A_{n_0+1}^{n}|A_{\ell+1}^{n_0})_{\rho}\\
\overset{\mathrm{chain\,rule}}&{=}I(A_1^k : A_{n_0+1}^{n_0+{\ell'}}|A_{\ell+1}^{n_0} A_{n_0+\ell'+1}^{n})_{\rho} + I(A_1^k:A_{n_0+\ell'+1}^n|A_{\ell+1}^{n_0})_{\rho}\\
\overset{\mathrm{SSA}}&{\geq} I(A_1^k : A_{n_0+1}^{n_0+k'}|A_{\ell+1}^{n_0} A_{n_0+\ell'+1}^{n})_{\rho}   \\
\overset{\mathrm{perm.~inv.}}&{=} I(A_1^k : A_{k+1}^{k+k'}|A_{k+k'+1}^m)_{\rho} \, , \label{eq_step_mid_mid}
\end{align}
where in the last step, we permute all the remaining systems appearing in the conditioning (there are $n-\ell-\ell'$ many) into neighboring systems labeled by indices from $(k+k'+1)$ to $m := (k+k' + n-\ell-\ell')$.

Thus, for any $m \geq k + k'$ we find
\begin{align}
I(A_1^k : A_{k+1}^{k+k'}|A_{k+k'+1}^{m})_{\rho}
\overset{\textnormal{\Cshref{eq_step_mid_mid}}}&{\leq}I(A_1^{n_0} B_1^{n_0} : A_{n_0+1}^n E_{n_0+1}^n)_{\rho} \\
&= H(A_1^{n_0} E_1^{n_0})_{\rho} + H(A_{n_0+1}^n E_{n_0+1}^n)_{\rho}- H(A_1^n E_1^n)_{\rho}\\
&\leq H(A_1^{n_0} E_1^{n_0})_{\rho} + H(A_{n_0+1}^n E_{n_0+1}^n)_{\rho} \, . \label{eq_araki_lieb}
\end{align}
In the proof of~\cref{fact_Giulia_proof} it is shown that $\rho_{A_1^{n_0} E_1^{n_0}} \in \mathrm{span} \cV(\cH_{AE}^{\otimes n_0},\ket{\theta}_{AE}^{\otimes n_0-r})$. This implies 
\begin{align}
H(A_1^{n_0} E_1^{n_0})_{\rho} 
\overset{\textnormal{\Cshref{eq_sizeT}}}{\leq} \log \left( 2^{n_0 h(r/n_0)} d_{AE}^r \right)
= n_0 h\Big(\frac{r}{n_0} \Big) + r \log d_{AE} \, . \label{eq_new_ds1}
\end{align}
Similarly, we obtain
\begin{align}
H(A_{n_0+1}^n E_{n_0+1}^n)_{\rho}
\leq (n-n_0) h\Big(\frac{r}{n-n_0} \Big) + r \log d_{AE} \, . \label{eq_new_ds2}
\end{align}
Hence we find
\begin{align} 
I(A_1^k : A_{k+1}^{k+k'}|A_{k+k'+1}^m)_{\rho}
\overset{\textnormal{\Cshref{eq_araki_lieb}}}&{\leq}H(A_1^{n_0} B_1^{n_0})_{\rho} + H(A_{n_0+1}^n B_{n_0+1}^n)_{\rho}\\
\overset{\textnormal{\Cshref{eq_new_ds1,eq_new_ds2}}}&{\leq} n h\Big(\frac{2r}{n} \Big) + 2r \log d_{AE}   \, .\label{eq_complicated_proof_1}
\end{align}

For any $\ell < s = \lfloor (\log n)^\frac{3}{2} \rfloor$, we thus have 
\begin{align}
    \norm{\rho_{A_1^{\ell+1}} - \sigma_A^{\otimes (\ell+1)}}_1 
    \leq \norm{\rho_{A_1^{s}} - \sigma_A^{\otimes s}}_1 
    \overset{\textnormal{\Cshref{prop_distance}}}{\leq} \varepsilon_n\,,
\end{align}
where the first step follows since the trace distance is contractive under trace-preserving completely positive maps~\cite[Theorem~8.16]{wolf_notes}.
The continuity of conditional entropy~\cite{AF03} then implies that
\begin{align} \label{eq_complicated_proof_2}
| H(A_1| A_2^{\ell+1})_{\rho} - H(A)_{\sigma} | = |H(A_1| A_2^{\ell+1})_{\rho} -  H(A_1| A_2^{\ell+1} )_{\sigma^{\otimes (\ell+1)}} | \leq 4 \eps_n \log d_A + 2h(\eps_n) =: \delta_n \, ,
\end{align}
where $d_A = \dim(A)$.
The chain rule together with permutation invariance allows us to write 
\begin{align}
H(A_1^n)_\rho 
\overset{\textnormal{chain rule}}&{=} H(A_1^{n-\ell}|A_{n-\ell+1}^n)_\rho + H(A_{n-\ell+1}^n)_\rho \\
&\geq H(A_1^{n-\ell}|A_{n-\ell+1}^n)_\rho \\
\overset{\textnormal{chain rule}}&{=} \sum_{i=1}^{n-\ell} H(A_i | A_{n-\ell+1}^n)_\rho - \sum_{i=1}^{n-\ell} I(A^{i-1}_1:A_i|A_{n-\ell+1}^n)_\rho \\
\overset{\textnormal{perm.~inv.}}&{=} (n-\ell) H(A_1 | A_{2}^{\ell+1})_\rho - \sum_{i=1}^{n-\ell} I(A^{i-1}_1:A_i|A_{n-\ell+1}^n)_\rho \\
\overset{\textnormal{\Cshref{eq_complicated_proof_2}}}&{\geq} (n-\ell) H(A)_{\sigma} - \sum_{i=1}^{n-\ell} I(A^{i-1}_1:A_i|A_{n-\ell+1}^n)_\rho - n\delta_n \\
&=(n\!-\!\ell) H(A)_{\sigma} \!-\! \sum_{i=1}^{n-s} I(A^{i-1}_1:A_i|A_{n-\ell+1}^n)_\rho \!-\! \sum_{i=n-s+1}^{n-\ell} I(A^{i-1}_1:A_i|A_{n-\ell+1}^n)_\rho  \!-\! n \delta_n \\
&\geq (n-\ell) H(A)_{\sigma} - \sum_{i=1}^{n-s} I(A^{i-1}_1:A_i|A_{n-\ell+1}^n)_\rho -2(s-\ell) \log d_A  - n \delta_n \\
\overset{\textnormal{perm.~inv.}}&{=} (n-\ell) H(A)_{\sigma} - \sum_{i=1}^{n-s} I(A^{i-1}_1:A_i|A_{n-s+1}^{n-s+\ell})_\rho -2(s-\ell) \log d_A  - n \delta_n \, . \label{eq_complicated_proof_3}
\end{align}
\begin{claim} \label{claim_CMI}
Let $n \in \N$, exists $\ell \leq s= \lfloor \sqrt{n} \rfloor$ such that for $\sum_{i=1}^{n-s} I(A^{i-1}_1:A_i|A_{n-s+1}^{n-s+\ell})_\rho =: \xi_n$ we have $\lim_{n \to \infty} \frac{\xi_n}{n}=0$.
\end{claim}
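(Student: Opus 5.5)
The plan is to use permutation invariance to turn each summand into a conditional mutual information of a standard form, and then use the uniform bound \cref{eq_complicated_proof_1} together with a pigeonhole argument over $\ell$ to pick a good conditioning length. By permutation invariance, $I(A_1^{i-1}:A_i|A_{n-s+1}^{n-s+\ell})_\rho = I(A_1^{i-1}:A_i|C_\ell)_\rho$, where $C_\ell$ denotes a fixed block of $\ell$ further systems disjoint from $A_1^i$ (possible since $i\le n-s$). Define $f(\ell):=\sum_{i=1}^{n-s} I(A_1^{i-1}:A_i|C_\ell)_\rho$. The goal is to show $\sum_{\ell=0}^{s-1}\big(f(\ell)-f(\ell+1)\big)=f(0)-f(s)$ can be controlled, or more directly that $\sum_{\ell=0}^{s-1} f(\ell)$ is small compared to $s\cdot n$, so that some $\ell$ has $f(\ell)=o(n)$.

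The key step is to bound $f(\ell)$ summed over $\ell$ by a single global mutual information term. For fixed $i$, the chain rule applied to $I(A_1^{i-1}:A_i C_s)$ gives
\begin{align}
\sum_{\ell=0}^{s-1} I(A_1^{i-1}: C^{(\ell+1)} | A_i C^{(1..\ell)})_\rho + I(A_1^{i-1}:A_i)_\rho
\end{align}
and similarly for $I(A_1^{i-1}:C_s)$. Taking the difference shows that the average over $\ell$ of $I(A_1^{i-1}:A_i|C_\ell)$ differs from $I(A_1^{i-1}:A_i|C_s)$ by telescoping terms. A cleaner route is to use \cref{eq_complicated_proof_1} directly: each term $I(A_1^{k}:A_{k+1}|A_{k+2}^{m})$ is at most $\kappa_n:=n h(2r/n)+2r\log d_{AE}$, but this alone gives $n\kappa_n$, which is too large. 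Instead I would group the $i$'s. By the chain rule, $\sum_{i} I(A_1^{i-1}:A_i|C_\ell)$ over a window $i\in\{j+1,\dots,j+k'\}$ is $I(A_1^{j}:A_{j+1}^{j+k'}|C_\ell)$ plus intra-window terms. Choosing windows of length $k'\approx\sqrt n$, there are about $\sqrt n$ windows, and each cross term is bounded by $\kappa_n$ via \cref{eq_complicated_proof_1}. This gives a total of about $\sqrt n\,\kappa_n = o(n)$ when $r=o(\sqrt n)$ (up to the $h$ term, which I would check). The intra-window terms are handled by recursion, or by the averaging over $\ell$ described next.

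The averaging over $\ell$ handles the intra-window remainder. For the within-window terms, absorb the earlier window members into the conditioning via permutation invariance. This shows that $\frac1s\sum_{\ell<s} I(A_1^{i-1}:A_i|C_\ell)$ is bounded by $\frac1s I(A_1^{i-1}:A_i C_s)$ minus nonnegative pieces, which is at most $\frac{1}{s}\cdot 2\log d_A\cdot s/\ldots$. Concretely, I use $\sum_{\ell<s} I(X:A|C_\ell)\le\sum_{\ell<s}[I(X:AC^{(\ell+1)}|C_\ell)]$, which telescopes to $I(X:AC_s)\le 2\min(\ldots)$. Then pigeonhole over $\ell\in\{0,\dots,s-1\}$ yields an $\ell$ with $f(\ell)\le \frac1s\sum_\ell f(\ell)$, and the remaining task is to show $\sum_\ell f(\ell)=o(ns)$.

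I expect the main obstacle to be exactly this last estimate. The global bound \cref{eq_complicated_proof_1} only controls a mutual information across a bipartition of size $O(\kappa_n)$. The sum over $i$ has $n$ terms, so it must be converted into $O(n/\sqrt n)$ bipartition-type terms plus $O(n)\cdot O(\log d_A)/s$ from the averaging. Balancing these requires $s=\lfloor\sqrt n\rfloor$ and $r=o(\sqrt n)$, which matches the hypothesis of \cref{thm_entropy_of_almost_iid_alternative}.
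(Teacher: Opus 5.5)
There is a gap: the pigeonhole over $\ell$ matches the paper, but the averaging step fails. The inequality $I(X:A|C_\ell)\le I(X:AC^{(\ell+1)}|C_\ell)$ is true, but its right-hand side does not telescope to $I(X:AC_s)$. Since $I(X:AC^{(\ell+1)}|C_\ell)=I(X:C^{(\ell+1)}|C_\ell)+I(X:A|C_{\ell+1})$, summing gives
\begin{align*}
\sum_{\ell=0}^{s-1} I(X:AC^{(\ell+1)}|C_\ell)_\rho = I(X:AC_s)_\rho+\sum_{\ell=1}^{s-1}I(X:A|C_\ell)_\rho \, ,
\end{align*}
so your estimate collapses to the triviality $I(X:A)_\rho\le I(X:AC_s)_\rho$.

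Even granting $\sum_{\ell<s}I(X:A_i|C_\ell)_\rho\le I(X:A_iC_s)_\rho$, you then bound the right-hand side by dimension. That bound is of order $s\log d_A$, so after dividing by $s$ it is $O(1)$ per index $i$ and $\Theta(n)$ in total. Nothing you derived supports the ``$O(n)\cdot O(\log d_A)/s$'' in your last paragraph, which would need $\sum_{\ell}I(X:A_i|C_\ell)_\rho=O(\log d_A)$. The windowing does not rescue this. The intra-window remainder is again a sum of the form $\sum_{i'}I(A_1^{i'-1}:A_{i'}|C_\ell)_\rho$, so the recursion only yields an $o(n)$ bound once the averaging over $\ell$ is done correctly.

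The missing idea is exchangeability, which the paper uses to make the sum over $\ell$ telescope \emph{exactly}. Put $k=n-s$ and take $i<k$. Permutation invariance under a relabeling that fixes $A_1^{i-1}$ and sends $A_i\mapsto A_{k+\ell}$, $A_{k+1}^{k+\ell}\mapsto A_k^{k+\ell-1}$ gives the first line below; the chain rule then gives the second:
\begin{align*}
I(A_1^{i-1}:A_i|A_{k+1}^{k+\ell})_\rho &= I(A_1^{i-1}:A_{k+\ell}|A_k^{k+\ell-1})_\rho \, , \\
\sum_{\ell=0}^{s} I(A_1^{i-1}:A_i|A_{k+1}^{k+\ell})_\rho &= I(A_1^{i-1}:A_{n-s}^n)_\rho \, .
\end{align*}
Pigeonholing over $\ell\in\{0,\dots,s\}$ gives an $\ell$ with $\xi_n\le\frac{1}{s+1}\sum_{i}I(A_1^{i-1}:A_{n-s}^n)_\rho+2\log d_A$.

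The second key point is that each $I(A_1^{i-1}:A_{n-s}^n)_\rho$ is bounded by the almost-iid estimate \cref{eq_complicated_proof_1}, not by dimension. When $i-1>n_0$, split $A_1^{i-1}$ at $n_0$ and apply it twice, so each term is at most $2\kappa_n$. This gives $\xi_n\le\frac{n-s}{s+1}\,2\kappa_n+2\log d_A$ with no windows at all. You only applied \cref{eq_complicated_proof_1} to cross-window terms, never to the averaged quantity.

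As you suspected, the $h$-term needs care. $\xi_n/n\to0$ requires $\sqrt n\,h(2r/n)\to0$, i.e.\ $r\log(n/r)=o(\sqrt n)$, which is slightly stronger than $r=o(\sqrt n)$. The same caveat applies to the paper's final step.
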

\begin{proof}
The proof follows the idea from~\cite[Equation (5)]{berta2023}, which proves a variant of~\cref{claim_CMI} for a purely classical scenario. 
By the permutation-invariance we have for any $1 \leq i < k \leq m \leq n$
\begin{align}
    I(A_1^{i-1}:A_i|A_{k+1}^m)_{\rho} = I(A_1^{i-1}:A_m | A_{k}^{m-1})_{\rho} \, . \label{eq_step_conc1}
\end{align}
This allows us to write
\begin{align}
\sum_{m=k}^n I(A_1^{i-1}:A_i | A_{k+1}^m)_{\rho}
\overset{\eqref{eq_step_conc1}}{=}\sum_{m=k}^n I(A_1^{i-1}:A_m | A_{k}^{m-1})_{\rho}
\overset{\textnormal{chain rule}}{=} I(A_1^{i-1}:A_k^n)_{\rho} \, . \label{eq_step_conc2}
\end{align}
Summing~\cref{eq_step_conc2} over all $1\leq i < k$ and dividing by $n-k+1$ gives
\begin{align}
\frac{1}{n-k+1} \sum_{m=k}^n \sum_{i=1}^{k-1} I(A_1^{i-1}:A_i | A_{k+1}^m)_{\rho}
= \frac{1}{n-k+1} \sum_{i=1}^{k-1} I(A_1^{i-1}:A_k^n)_{\rho} \, .
\end{align}
Hence, there exists $m^* \in \{k,k+1,\ldots,n \}$ such that
\begin{align}
\sum_{i=1}^{k-1} I(A_1^{i-1}:A_i | A_{k+1}^{m^*})_{\rho}
\leq \frac{1}{n-k+1} \sum_{i=1}^{k-1} I(A_1^{i-1}:A_k^n)_{\rho} \, . \label{eq_step_conc3}
\end{align}
Choosing $k=n-s$, \cref{eq_step_conc3} can be rewritten (as there exists $0 \leq \ell \leq s$ such that $m^* = k+\ell$) as
\begin{align}
\sum_{i=1}^{n-s-1} I(A^{i-1}_1:A_i|A_{n-s+1}^{n-s+\ell})_{\rho}
\leq \frac{1}{s+1} \sum_{i=1}^{n-s}I(A_1^{i-1}:A_{n-s}^n)_{\rho} \label{eq_step_GM_1}
\end{align}
We next show that~\cref{eq_complicated_proof_1} implies
\begin{align} \label{eq_step_GM_2}
I(A_1^{i-1}:A_{n-s}^n)_{\rho} \leq  2n h\Big(\frac{2r}{n} \Big) + 4r \log d_{AE} \quad \forall i=1,\ldots,n-s \, .
\end{align}
To see this, we can assume w.l.o.g.\footnote{It can be shown that $s+1 \leq n_0 + 2$ for all $n\in \mathbb{N}$. Hence, if $s+1 = n_0 + 2$, due to the chain rule we can write $I(A_1^{i-1}:A_{n-s}^n)_{\rho} = I(A_1^{i-1}:A_{n-s+1}^n)_{\rho} + I(A_1^{i-1}:A_{n-s}^{n-s}|A_{n-s+1}^n)_{\rho} \leq I(A_1^{i-1}:A_{n-s+1}^n)_{\rho} + 2 \log d_A$, and the argument above still works.} that $n$ is large enough such that 
\begin{align}
k' 
:= s+1
= \left \lfloor (\log n)^{\frac{3}{2}} \right \rfloor +1
\leq \left \lfloor \frac{n}{2} \right \rfloor + 1
= n_0 + 1 \, .
\end{align}
Let us now consider two cases:
In case $i-1\leq n_0$, the assertion follows from~\cref{eq_complicated_proof_1} by choosing $m=k+k'$, since
\begin{align}
    I(A_1^{i-1}:A_{n-s}^n)_{\rho} \overset{\textnormal{perm.~inv.}}&{=} I(A_1^{i-1}:A_{i}^{i+s})_{\rho} \leq n h\Big(\frac{2r}{n} \Big) + 2r \log d_{AE} \,.
\end{align}
In case $i-1>n_0$, we can use the chain rule to write
\begin{align}
I(A_1^{i-1}:A_{n-s}^n)_{\rho} &= I(A_1^{n_0}:A_{n-s}^n)_{\rho} + I(A_{n_0+1}^{i-1}:A_{n-s}^n|A_1^{n_0})_{\rho} \\
\overset{\textnormal{perm.~inv.}}&{=} I(A_1^{n_0}:A_{{n_0}+1}^{{n_0}+1+s})_{\rho} + I(A_{1}^{i-1-{n_0}}:A_{i-{n_0}}^{i-{n_0}+s}|A_{i-n_0+s+1}^{i+s})_{\rho} \, ,
\end{align}
where both terms on the right-hand side are bounded from above by $n h(\frac{2r}{n}) + 2r \log d_{AE} $ via~\cref{eq_complicated_proof_1}.

Putting things together yields
\begin{align}
\sum_{i=1}^{n-s} I(A^{i-1}_1:A_i|A_{n-s+1}^{n-s+\ell})_{\rho}
&= \sum_{i=1}^{n-s-1} I(A^{i-1}_1:A_i|A_{n-s+1}^{n-s+\ell})_{\rho} + I(A_1^{n-s-1}:A_{n-s}|A_{n-s+1}^{n-s+\ell})_{\rho}\\
\overset{\textnormal{\Cshref{eq_step_GM_1}}}&{\leq} \frac{1}{s+1} \sum_{i=1}^{n-s}I(A_1^{i-1}:A_{n-s}^n)_{\rho} + 2 \log d_A\\
\overset{\textnormal{\Cshref{eq_step_GM_2}}}&{\leq} \frac{n-s}{s+1} \left(2n h\Big(\frac{2r}{n} \Big) + 4r \log d_{AE} \right) + 2 \log d_A \\
&=o(n) \, ,
\end{align}
where the final step uses that for $s=\lfloor \sqrt{n} \rfloor$ and $r=o(n)$ we have
\begin{align}
\lim_{n \to \infty} \frac{n}{s} h\Big(\frac{2r}{n} \Big) = 0 \, .
\end{align}
\end{proof}
Since $\lim_{n \to \infty} \delta_n =0$, combining~\cref{claim_CMI} with~\cref{eq_complicated_proof_3} yields for some $\ell \leq s = \lfloor \sqrt{n} \rfloor$
\begin{align}
H(A_1^n)_\rho  
\geq (n-\ell) H(A)_{\sigma} - o(n)
=n H(A)_{\sigma}  - o(n)\, .
\end{align}
\qed
%%%%%%%%%%%%%%%%%%%%%%%%%%%%%%%%%%%%%%%%%%%%%%%%%%%%%%%%%%%%%%%%%%%%
%%%%%%%%%%%%%%%%%%%%%%%%%%%%%%%%%%%%%%%%%%%%%%%%%%%%%%%%%%%%%%%%%%%%
%%%%%%%%%%%%%%%%%%%%%%%%%%%%%%%%%%%%%%%%%%%%%%%%%%%%%%%%%%%%%%%%%%%%
\bibliographystyle{arxiv_no_month}
\bibliography{bibliofile}

\end{document}